\newcommand{\pbsc}[1]{\textsc{#1}}
\newcommand{\IE}{\pbsc{Index Erasure}}
\newcommand{\search}{\pbsc{Search}}
\newcommand{\fsearch}{\pbsc{fold search}}
\newcommand{\GI}{\pbsc{Graph Isomorphism}}
\newcommand{\SE}{\pbsc{Set Equality}}
\newcommand{\collision}{\pbsc{Collision}}
\newcommand{\aGamma}{\tilde{\Gamma}}
\newcommand{\alambda}{\tilde{\lambda}}
\newcommand{\aUpsilon}{\tilde{\Upsilon}}
\newcommand{\aW}{\tilde{W}}
\newcommand{\ADV}{\widetilde{\mathrm{ADV}}}
\newcommand{\MADV}{\mathrm{MADV}}
\newcommand{\NADV}{\mathrm{ADV}^\pm}
\newcommand{\junk}{\mathrm{junk}}
\newcommand{\default}{\mathrm{\bar{0}}}
\newcommand{\error}{\mathrm{error}}
\newcommand{\bad}{\mathrm{bad}}
\newcommand{\good}{\mathrm{good}}
\newcommand{\rep}{\mathcal{U}}
\newcommand{\Fset}{F}
\newcommand{\target}{{\rho^{\odot}}}
\newcommand{\psitarget}{\psi^{\odot}}
\def\I{\mathbb{I}}
\newcommand{\tracenorm}[1]{\left\| #1 \right\|_{\mathrm{tr}}}
\newcommand{\fnorm}[1]{\left\| #1 \right\|_{\mathrm{F}}}
\newcommand{\ket}[1]{| #1 \rangle}
\newcommand{\bra}[1]{\langle #1 |}
\newcommand{\braket}[2]{\langle #1 | #2 \rangle}
\newcommand{\ketbra}[2]{| #1 \rangle\!\langle #2 |}
\newcommand{\proj}[1]{| #1 \rangle\!\langle #1 |}
\newcommand{\norm}[1]{\left\| #1 \right\|}
\newcommand{\abs}[1]{\left| #1 \right|}
\newcommand{\Tr}{{\rm tr}}
\newcommand{\inv}[1]{\frac{1}{ #1 }}
\newcommand{\Span}{\mathrm{Span}}
\renewcommand{\Im}{\mathrm{Im}}
\def\AA{{\cal A}}  \def\CC{{\cal C}}  \def\FF{{\cal F}}     \def\II{{\cal I}}  \def\MM{{\cal M}}  \def\OO{{\cal O}}  \def\PP{{\cal P}}   \def\TT{{\cal T}}  \def\WW{{\cal W}}
\newtheorem{thm}{Theorem}
\newtheorem{definition}[thm]{Definition}
\newtheorem{lem}[thm]{Lemma}
\newtheorem{fact}[thm]{Fact}
\title{Symmetry-assisted adversaries for quantum state generation}
\author{
	Andris Ambainis\thanks{University of Latvia; ambainis@lu.lv} \and
	Loïck Magnin\thanks{Universit\'e Paris Diderot and Universit\'e Libre de Bruxelles; loick.magnin@lri.fr}$\ ^{, \ddag}$ \and
	Martin Roetteler\thanks{NEC Laboratories America; $\{$mroetteler,\;jroland$\}$@nec-labs.com} \and
	Jérémie Roland$^\ddag$
}
\date{\today}
\begin{document}

\maketitle

\begin{abstract}
  We introduce a new quantum adversary method to prove lower bounds on
  the query complexity of the quantum state generation problem. This
  problem encompasses both, the computation of partial or total
  functions and the preparation of target quantum states. There has
  been hope for quite some time that quantum state generation might be
  a route to tackle the {\sc Graph Isomorphism} problem. We show that
  for the related problem of {\sc Index Erasure} our method
  leads to a lower bound of $\Omega(\sqrt N)$ which matches an upper
  bound obtained via reduction to quantum search on $N$ elements. This
  closes an open problem first raised by Shi [FOCS'02].

  Our approach is based on two ideas: (i) on the one hand we
  generalize the known additive and multiplicative adversary methods
  to the case of quantum state generation, (ii) on the other hand we
  show how the symmetries of the underlying problem can be leveraged
  for the design of optimal adversary matrices and dramatically
  simplify the computation of adversary bounds.  Taken together, these
  two ideas give the new result for {\sc Index Erasure} by using the
  representation theory of the symmetric group.  Also, the method can
  lead to lower bounds even for small success probability, contrary to the standard adversary method.
  Furthermore, we answer an open question due to \v{S}palek [CCC'08]
  by showing that the multiplicative version of the
  adversary method is stronger than the additive one for any problem. Finally, we prove that the
  multiplicative bound satisfies a strong direct product theorem,
  extending a result by \v{S}palek to quantum state generation
  problems.

\end{abstract}

\thispagestyle{empty}
\setcounter{page}{0}
\newpage

\section*{Introduction}

The query model provides a way to analyze quantum algorithms
including, but not limited to, those of Shor \cite{Sho97} and Grover
\cite{Gro96} as well as quantum walks, quantum counting, and hidden
subgroup problems. Traditionally, in this model the input is a
black-box function which can be accessed via queries and the output is a {\em classical} value. The measure of
complexity of an algorithm is then defined as the number of queries
made by the algorithm. Studying the quantum query
complexity of functions is quite fruitful since the model is simple
enough that one can show tight bounds for several problems and hence
provides some intuition about the power of quantum computing.

In this paper, we study a generalization of the query model to include
problems in which the input is still a black-box function, however,
the output is no longer a classical value but a target {\em
  quantum} state. An example for the resulting quantum state
generation problem is {\sc Index Erasure}. Here we are
given access to an injective function $f:[N]\rightarrow [M]$ and the task is to
prepare the quantum state $\frac{1}{\sqrt{N}}\sum_{x=1}^N \ket{f(x)}$
using as few queries to $f$ as possible. The name ``index erasure''
stems from the observation that while it is straightforward to prepare
the (at first glance perhaps similar looking) state
$\frac{1}{\sqrt{N}} \sum_{x=1}^N \ket{x}\ket{f(x)}$, it is quite
challenging to forget (``erase'') the contents of the first register
of this state which carries the input (``index'') of the function.

In particular, this approach has been considered in~\cite{AT03} to solve statistical
zero knowledge problems, one ultimate goal being to
tackle \GI{} \cite{KST:93}. The quantum state
generation problem resulting from the well-known reduction of \GI{} to
{\sc Index Erasure} would be to generate the uniform superposition of
all the permutations of a graph $\Gamma$:
\[
\ket{\Gamma} = \frac{1}{\sqrt{n!}} \sum_{\pi\in S_n} \ket{\Gamma^\pi}.
\]
By coherently generating this state for two given graphs, one could
then use the standard SWAP-test to check whether the two states are
equal or orthogonal, and therefore decide whether the graphs are
isomorphic or not. Such a method for solving \GI{} would be
drastically different from more standard approaches based on the
reduction to the hidden subgroup problem, and might therefore provide
a way around serious limitations of the coset state
approach~\cite{HMR+06}. There has been hope for quite some time that
quantum state generation might be a route to tackle the {\sc Graph
  Isomorphism} problem, however one of the main results of this paper
is that any approach that tries to generate $\ket{\Gamma}$ without
exploiting further structure\footnote{Indeed, here we assume that the
  only way to access the graph $\Gamma$ would be by querying an oracle that,
  given a permutation $\pi$, returns the permuted graph $\Gamma^\pi$. Note that we assume that $\Gamma$ is rigid which is no loss of generality.} of
the graph cannot improve on the simple $O(\sqrt{n!})$ upper bound via
search. More generally, we are interested in the query complexity of
the quantum state generation problem, in which the amplitudes of the
target quantum state can depend on the given function in an arbitrary
way. Subroutines for quantum state generation might provide a useful
toolbox to design efficient quantum algorithms for a large class of
problems.

\paragraph{Adversaries.} Lower bounds on the quantum query complexity
have been shown for a wide range of (classical in the above sense)
functions. Roughly speaking, currently there are two main ideas for
proving lower bounds on quantum query complexity: the polynomial
method \cite{BBB+97,BBC+98, Aar02, Shi02, Amb03, KSW07} and the adversary method \cite{Amb00}.
The latter method has seen a sequence of variations, generalizations,
and improvements over the past decade including \cite{HNS08, Amb03,
  BS04, LM08}.

The basic idea behind the adversary method and its variations is to
define a progress function that monotonically changes from an initial
value (before any query) to a final value (depending on the success
probability of the algorithm) with one main property: the value of the
progress function changes only when the oracle is queried. Then, a
lower bound on the quantum query complexity of the problem can be
obtained by bounding the amount of progress done by one query.

Different adversary methods were introduced, but they were later
proved to be all equivalent~\cite{SS05}. They rely on optimizing an
adversary matrix assigning weights to different pairs of inputs to the
problem. While originally these methods only considered positive
weights, it was later shown that negative weights also lead to a lower
bound, which can actually be stronger in some cases~\cite{HLS07}. The
relevance of this new adversary method with negative weights, called
\emph{additive}, was made even clearer when it was very recently shown
to be tight for the quantum query complexity of functions in the
bounded-error model~\cite{Rei09,LMRS10}.

Nevertheless, for some problems other methods (such as the polynomial method or other ad-hoc techniques) might be easier to
implement while also leading to strong bounds.
The additive
adversary method also suffers from one main drawback: it cannot prove
lower bounds for very small success probability. To circumvent it,
\v{S}palek introduced the \emph{multiplicative} adversary
method~\cite{Spa08} that generalizes some previous ad-hoc methods~\cite{Amb05, ASW07}. Being able to deal with exponentially small success
probability also allowed to prove a strong direct product theorem for
any function that admits a multiplicative adversary lower
bound~\cite{Amb05,ASW07,Spa08} (note that a similar result has
recently been proved for the polynomial method~\cite{She10}). Roughly
speaking, it means that if we try to compute $k$ independent instances
of a function using less than $O(k)$ times the number of queries
required to compute one instance, then the overall success probability
is exponentially small in $k$. However, \v{S}palek left unanswered the
question of how multiplicative and additive methods relate in the case
of high success probability. In particular, it is unknown whether the
strong direct product theorem extends to the additive adversary
method, and therefore to the quantum query complexity of any function
since this method is known to be tight in the bounded error
model~\cite{LMRS10}. The quantum query complexity of functions
nevertheless satisfies a weaker property called direct sum theorem,
meaning that computing $k$ instances requires at least $\Omega(k)$
times the number of queries necessary to solve one instance, but it is
unknown how the success probability decreases if less than $O(k)$
queries are used.


\paragraph{Related work.}
We are not aware of any technique to
directly prove lower bounds for quantum state generation problems, and
the only few known lower bounds are based on reductions to computing
some functions. One particular example is a lower bound for the
already mentioned \IE{} problem, which consists in generating the
uniform superposition over the image of an injective function. The
best lower bound comes from a $\Omega(\sqrt[5]{N/ \log N})$ lower
bound for the \SE{} problem~\cite{Mid04}, which consists in deciding
whether two sets of size $N$ are equal or disjoint or, equivalently,
whether two injective functions over a domain of size $N$ have equal
or disjoint images. This problem reduces to \IE{} since by generating
the superposition over the image of the two functions, we can decide
whether they are equal or not using the SWAP-test. Therefore, this
implies the same $\Omega(\sqrt[5]{N/\log N})$ lower bound for \IE{}.
However, this lower bound is probably not tight, neither for \SE{},
whose best upper bound is $O(\sqrt[3]{N})$ due to the algorithm for
\collision{}~\cite{BHT97}, nor for \IE{}, whose best upper bound is
$O(\sqrt{N})$ due to an application of Grover's algorithm fro
\search~\cite{Gro96}. The question of the complexity of \IE{} has
first been raised by Shi~\cite{Shi02} in 2002 and has remained open
until the present work.

\paragraph{Our results.}
The chief technical innovation of this paper is an extension of both,
the additive and multiplicative adversary methods, to quantum state
generation \textbf{(Theorems~\ref{thm:additive-adversary}
  and~\ref{thm:MAM})}. To do so, we give a geometric interpretation of
the adversary methods which is reminiscent of the approach
of~\cite{Amb05, Spa08}, where this is done for classical problems. As
a by-product we give elementary and arguably more intuitive proofs of the
additive and multiplicative methods, contrasting with some rather
technical proofs \textit{e.g.} in~\cite{HLS07, Spa08}.

In order to compare the additive and multiplicative adversary bounds,
we introduce yet another flavor of adversary method
\textbf{(Theorem~\ref{thm:new-AAM})}, which we will call \emph{hybrid}
adversary method. Indeed, this method is a hybridization of the
additive and multiplicative methods that uses ``multiplicative''
arguments in an ``additive'' setup: it is equivalent to the additive
method for large success probability, but is also able to prove
non-trivial lower-bounds for small success probability, overcoming the
concern~\cite{Spa08} that the additive adversary method might fail in
this case. We show that for any problem, the hybrid adversary bound lies between
the additive and multiplicative adversary bounds
\textbf{(Theorem~\ref{thm:comparison-methods})}, answering \v{S}palek's open
question about the relative power of these
methods~\cite{Spa08}. By considering the \search{} problem for
exponentially small success probability, we also conclude that the
powers of the three methods are \emph{strictly} increasing, since the
corresponding lower bounds scale differently as a function of the
success probability in that regime
\textbf{(Theorem~\ref{thm:search})}.

We then extend the strong direct product theorem for the
multiplicative adversary bound~\cite{Spa08} to quantum state
generation problems \textbf{(Theorem~\ref{thm:SDPT})}. Since we have
clarified the relation between the additive and multiplicative
adversary methods, this also brings us closer to a similar theorem for
the additive adversary method. The most important consequence would be
for the quantum query complexity of functions, which would therefore
also satisfy a strong direct product theorem since the additive
adversary bound is tight in this case~\cite{LMRS10}. However, it
remains to prove some technical lemma about the multiplicative bound
to be able to conclude.

As it has been previously pointed out many interesting problems have strong symmetries~\cite{Amb05, ASW07, Spa08}. We show how studying these symmetries helps to address the two main difficulties of the usage the adversary method, namely, how to choose a good adversary matrix $\Gamma$
and how to compute the spectral norm of $\Gamma_{x} - \Gamma$ \textbf{(Theorem~\ref{thm:adv-representation})}. Following the \emph{automorphism principle} of~\cite{HLS07}, we define the automorphism group $G$ of $\PP$, and its restrictions $G_x$, for any input $x$ to the oracle. We show how computing the norm of $\Gamma_{x} - \Gamma$ can be simplified to compute the norm of much smaller matrices that depend only on the irreps of $G$ and $G_x$. For problems with strong symmetries, these matrices typically have size at most $3\times 3$~\cite{Amb05, ASW07, Spa08}. We have therefore reduced the adversary method from an algebraic problem to the study of the representations of the automorphism group. 
 

Finally, we use our hybrid adversary method to prove a lower bound of $O(\sqrt{N})$ for the quantum query complexity of \IE{} \textbf{(Theorem~\ref{thm:index-erasure})}, which is tight due to the matching upper bound based on Grover's algorithm, therefore closing the open problem stated by Shi~\cite{Shi02}. To the best of our knowledge, this is the first lower bound directly proved for the query complexity of a quantum state generation problem. The lower bound is entirely based on the study of the representations of the symmetric group, a technique that might be fruitful for other problems having similar symmetries, such as the \SE{} problem~\cite{Mid04}, or in turn some stronger quantum state generation approaches to \GI.


\section{Notations\label{sec:notations}}
In this paper, we will use different norms. We recall their definitions and some useful facts:
\begin{definition}
 For any matrix $A$, we use the following norms:
\begin{itemize*}
 \item Operator (or spectral) norm: $\norm{A} = \sup_{\ket{v}} \frac{\norm{A\ket{v}}}{\norm{\ket v}}$,
 \item Trace norm: $\tracenorm{A} = \Tr\sqrt{A^\dagger A}$,
 \item Frobenius norm: $\fnorm{A} = \sqrt{\Tr(A^\dagger A)}$.
\end{itemize*}
\end{definition}
\begin{lem}[H\"older's inequality]
\label{thm:Holder}
For any $A,B$, we have $\tracenorm{AB} \leq \fnorm{A} \cdot \fnorm{B}$.
\end{lem}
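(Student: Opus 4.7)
The plan is to reduce Hölder's inequality to the Cauchy--Schwarz inequality on the Hilbert--Schmidt inner product $\langle X, Y\rangle = \Tr(X^\dagger Y)$, exploiting the variational characterization of the trace norm.

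First I would recall that for any matrix $X$, the trace norm admits the dual characterization
\[
\tracenorm{X} = \max_{U \text{ unitary}} \bigl| \Tr(U X) \bigr|,
\]
which follows directly from the singular value decomposition $X = V \Sigma W^\dagger$: taking $U = W V^\dagger$ yields $\Tr(UX) = \Tr(\Sigma) = \sum_i \sigma_i(X) = \tracenorm{X}$, while for any other unitary the magnitude can only decrease since the singular values are nonnegative and $|W^\dagger U V|$ has diagonal entries bounded by $1$.

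Applying this to $X = AB$, I would pick the optimal unitary $U$ and write
\[
\tracenorm{AB} = \Tr(U A B) = \Tr\bigl( (A^\dagger U^\dagger)^\dagger B\bigr) = \langle A^\dagger U^\dagger, B\rangle_{\mathrm{HS}}.
\]
Then Cauchy--Schwarz for the Hilbert--Schmidt inner product gives
\[
\bigl|\langle A^\dagger U^\dagger, B\rangle_{\mathrm{HS}}\bigr| \leq \fnorm{A^\dagger U^\dagger} \cdot \fnorm{B}.
\]
Finally, the Frobenius norm is unitarily invariant, since $\fnorm{A^\dagger U^\dagger}^2 = \Tr(U A A^\dagger U^\dagger) = \Tr(A A^\dagger) = \fnorm{A}^2$. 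Combining these gives $\tracenorm{AB} \leq \fnorm{A} \fnorm{B}$, as desired.

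The only part that requires a small amount of care is the dual formula for the trace norm; once that is in hand, the rest is a one-line application of Cauchy--Schwarz and unitary invariance. I do not expect any serious obstacle, as this is a classical inequality and the argument is essentially structural.
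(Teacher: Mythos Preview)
Your argument is correct. The paper does not actually supply a proof of this lemma: it is stated as a standard fact (a special case of H\"older's inequality for Schatten norms, here with $p=q=2$) and used later without justification. Your route via the dual formula $\tracenorm{X}=\max_{U}\lvert\Tr(UX)\rvert$, Cauchy--Schwarz on the Hilbert--Schmidt inner product, and unitary invariance of the Frobenius norm is a clean and standard derivation, so there is nothing to compare against.
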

\begin{lem}
For any $A,B$, we have  $\Tr (AB)\leq\norm{A}\cdot\tracenorm{B}$.
\end{lem}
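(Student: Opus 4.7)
The plan is to reduce the inequality to a simple bound on diagonal entries via the singular value decomposition of $B$. Write $B = U\Sigma V^\dagger$, where $U, V$ are unitaries and $\Sigma$ is diagonal with non-negative entries $\sigma_i$ equal to the singular values of $B$, so that $\tracenorm{B} = \sum_i \sigma_i$ by definition.

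First I would use cyclicity of the trace to get
\begin{equation*}
  \Tr(AB) = \Tr(A U \Sigma V^\dagger) = \Tr(V^\dagger A U \Sigma) = \sum_i \sigma_i\, \bra{i} V^\dagger A U \ket{i},
\end{equation*}
where $\{\ket{i}\}$ is the standard basis. Next I would bound each diagonal matrix element using the variational characterization of the operator norm: $|\bra{i} V^\dagger A U \ket{i}| \leq \norm{V^\dagger A U} = \norm{A}$, where the last equality uses unitary invariance of the operator norm. Combining these two steps gives
\begin{equation*}
  \Tr(AB) \leq \sum_i \sigma_i \cdot \norm{A} = \norm{A} \cdot \tracenorm{B},
\end{equation*}
as desired.

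There is no real obstacle here; the only minor point to check is that the operator norm is unitarily invariant, which follows directly from the definition $\norm{A} = \sup_{\ket v} \norm{A\ket v}/\norm{\ket v}$ together with the fact that unitaries preserve the Euclidean norm. An alternative route would be to invoke the general form of Hölder's inequality for Schatten $p$-norms (with $p=\infty$ and $q=1$), but since the paper has only stated the Frobenius version (Lemma~\ref{thm:Holder}), the direct SVD computation above is both self-contained and shorter.
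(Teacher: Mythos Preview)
Your proof is correct. The paper itself does not supply a proof of this lemma; it is stated without proof in Section~\ref{sec:notations} alongside H\"older's inequality as a basic fact about norms. Your SVD argument is a standard and entirely valid way to establish it, so there is nothing to compare against.

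One very minor remark: as stated, $\Tr(AB)$ may be complex, so the inequality is implicitly about its real part (or, equivalently, one bounds $|\Tr(AB)|$). Your chain of inequalities already handles this once you insert an absolute value around $\bra{i}V^\dagger A U\ket{i}$ before invoking $|\bra{i}V^\dagger A U\ket{i}|\leq\norm{A}$, which you effectively do.
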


In Section~\ref{sec:index-erasure} we will consider irreps of the symmetric group $S_{N}$, \textit{i.e., Young diagrams} and denote them by $\lambda_N, \lambda_N^{+},\dots$. Note that since a diagram $\lambda_N$ necessarily contains $N$ boxes, it is fully determined by its part $\lambda$ below the first row, as we know that its first row must contain $N-|\lambda|$ boxes, where $|\lambda|$ is the number of boxes below the first row. This will lighten the notations. The dimension of the space spanned by an irrep of the symmetric group can be easily computed:
\begin{lem}[Hook-length formula \cite{Sag01}]
For any Young diagram $\lambda$ corresponding to an irrep of $S_N$, the dimension of the space spanned by this irrep is: 
\begin{align*}
	d_{\lambda}^N &= \frac{N!}{\prod_{(i,j)\in\lambda}h_{i,j}},
\end{align*}
where $h_{i,j} = | \{ (i,j')\in \lambda_N : j' > j \} \cup \{ (i',j)\in \lambda_N  : i' \geq i \} |$.
\end{lem}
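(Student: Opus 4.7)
The plan is to reduce to the classical bijection between irreducible representations of $S_N$ indexed by partitions $\lambda \vdash N$ and standard Young tableaux of shape $\lambda$: Young's construction via Young symmetrizers gives $d_\lambda^N = f^\lambda$, where $f^\lambda$ is the number of standard Young tableaux of shape $\lambda$. This reduces the statement to the purely combinatorial identity
\begin{equation*}
f^\lambda \;=\; \frac{N!}{\prod_{(i,j)\in\lambda} h_{i,j}}.
\end{equation*}

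To prove this identity I would follow the probabilistic approach of Greene, Nijenhuis and Wilf. Define $F(\lambda) := N!/\prod_{(i,j)\in\lambda} h_{i,j}$ and induct on $N$, the base case $N=1$ being trivial. For the inductive step, inspecting the position of $N$ in a standard Young tableau yields the recursion $f^\lambda = \sum_{\lambda^-} f^{\lambda^-}$, where $\lambda^-$ ranges over all diagrams obtained from $\lambda$ by removing a single corner box. By induction it therefore suffices to prove the matching identity $F(\lambda) = \sum_{\lambda^-} F(\lambda^-)$, i.e.\ that the ratios $F(\lambda^-)/F(\lambda)$ sum to $1$ as $\lambda^-$ ranges over removals of corners.

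The core idea is to interpret each ratio $F(\lambda^-)/F(\lambda)$ as the probability of terminating at a prescribed corner in a \emph{hook walk} on $\lambda$: pick an initial cell uniformly at random, then at each step move uniformly to one of the other cells in the hook of the current cell (strictly right in the same row or strictly below in the same column), stopping upon reaching a corner. Since the walk terminates almost surely, the termination probabilities summed over corners give $1$; the identity $F(\lambda) = \sum_{\lambda^-} F(\lambda^-)$ then follows once one shows that the probability of ending at a corner $c$ equals $F(\lambda \setminus \{c\})/F(\lambda)$.

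The main obstacle is this last verification, which requires comparing the hook-walk termination probability at a corner $c = (a,b)$ with a ratio of hook-length products. The argument hinges on a telescoping identity involving only the $h_{i,j}$ along the $a$-th row and $b$-th column, since these are the only hook lengths altered when passing from $\lambda$ to $\lambda^-$. This manipulation is elementary but delicate, and is treated in full detail in Sagan's textbook~\cite{Sag01}, so in practice one would simply appeal to it.
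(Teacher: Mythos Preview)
The paper does not supply its own proof of this lemma: it is stated as a classical fact with a citation to Sagan's textbook~\cite{Sag01}, and is used later only as a black box to estimate ratios of irrep dimensions. Your proposal is therefore not comparable to anything in the paper, but it is a correct outline of the Greene--Nijenhuis--Wilf probabilistic proof (reduce $d_\lambda^N$ to the count $f^\lambda$ of standard Young tableaux via Young symmetrizers, then verify the hook-length recursion $F(\lambda)=\sum_{\lambda^-}F(\lambda^-)$ by interpreting $F(\lambda^-)/F(\lambda)$ as the termination probability of a hook walk). The only caveat is that you defer the key computation---matching the hook-walk termination probability at corner $(a,b)$ with the product of $(1-1/h_{i,b})^{-1}(1-1/h_{a,j})^{-1}$ factors---to the same reference the paper cites, so in effect your proposal and the paper both ultimately appeal to~\cite{Sag01}.
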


\section{Adversary methods: general concepts}
\subsection{Definition of the problem}
In this section, we describe elements which are common to all adversary methods.
The goal of these methods is to study the quantum query complexity of some problems in the bounded-error model when we have access to an oracle $O_f$ computing a function $f:\Sigma_{I}\mapsto\Sigma_{O}$. In this article, we will consider an oracle acting on two registers, the input register $\II$ and the output register $\OO$, as:
\begin{align*}
	\ket{x}_\II\ket{s}_\OO \stackrel{O_f}{\longrightarrow} \ket{x}_\II\ket{s\oplus f(x)}_\OO,
\end{align*}
where $x\in\Sigma_{I}$ and $s,f(x)\in\Sigma_{O}$. Note that it is also possible to consider other types of oracle, for example computing the value of the function into the phase instead of into another register, but these different models all lead to equivalent notions of query complexity (up to a constant).

We denote by $\Fset$ the set of all possible functions $f$ that can be encoded into the oracle.
We will consider three types of problems $\PP$, a classical one and two quantum ones:
\begin{description*}
	\item [Function] Given an oracle $O_f$, compute the classical output $\PP(f)$. The success probability of an algorithm $A$ solving $\PP$ is $\min_{f\in\Fset}\Pr[A(f)=\PP(f)]$, where $A(f)$ is the classical output of the algorithm on oracle $f$.
	\item [Coherent quantum state generation] Given an oracle $O_f$, generate a quantum state $\ket{\PP(f)} = \ket{\psi_f}$ in some target register $\TT$, and reset all other registers to a default state $\ket{\default}$. The success probability of an algorithm $A$ solving $\PP$ is given by $\min_{f\in\Fset}\norm{\Pi_{\ket{\PP(f)}}\otimes\Pi_{\ket{\default}}\ket{\psi_f^T}}^2$, where $\ket{\psi_f^T}$ is the final state of the algorithm and $\Pi_{\ket{\PP(f)}}$ and $\Pi_{\ket{\default}}$ are the projectors on the corresponding states.
 	\item [Non-coherent quantum state generation] Given an oracle $O_f$, generate a quantum state $\ket{\PP(f)} = \ket{\psi_f}$ in some target register $\TT$, while some $f$-dependent junk state may be generated in other registers. The success probability of an algorithm $A$ solving $\PP$ is given by $\min_{f\in\Fset}\norm{\Pi_{\ket{\PP(f)}}\ket{\psi_f^T}}^2$.
\end{description*}
Let us note that computing a function is a special case of non-coherent quantum state generation, where all states $\ket{\PP(f)}$ are computational basis states. Indeed, no coherence is needed since the state is in this case measured right after its generation. However, when the quantum state generation is used as a subroutine in a quantum algorithm for another problem, coherence is typically needed to allow interferences between different states. This is in particular the case for solving \SE{} via reduction to \IE{}, and similarly to solve $\GI{}$ via the quantum state generation approach, since coherence is required to implement the SWAP-test.

Without loss of generality we can consider the algorithm as being a
circuit $\CC$ consisting of a sequence of unitaries $U_0, \dots, U_T$
and oracle calls $O_f$ acting on the ``algorithm'' Hilbert space
$\AA$. Decomposing $\AA$ into three registers, the input register
$\II$ and output register $\OO$ for the oracle, as well as an
additional workspace register $\WW$, the circuit may be represented as in
Fig.~\ref{fig:circuit}.

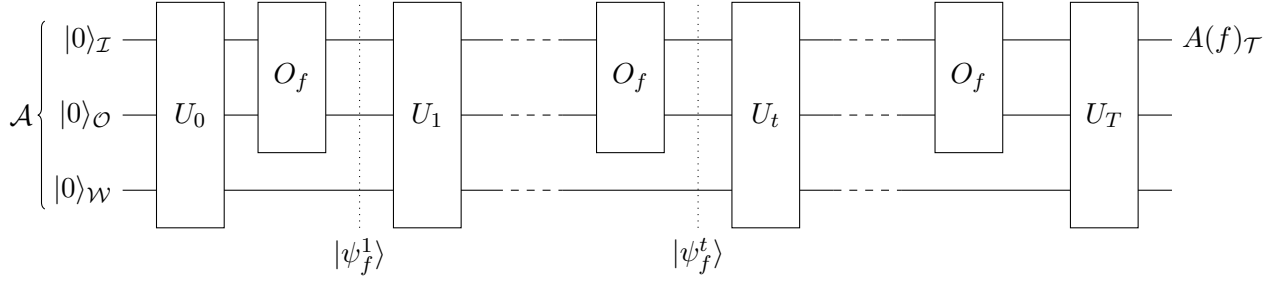
\begin{figure}
\begin{tikzpicture}[xscale=.45, yscale=.5, decoration=brace]
\foreach \y in {1,3,5} {
 \draw (0,\y) -- (11,\y) ;  \draw[dashed] (11,\y) -- (13,\y) ; \draw (13,\y) -- (21,\y) ; \draw[dashed] (21,\y) -- (23,\y) ; \draw (23,\y) -- (31,\y) ;
  } 
  \node at (-3,3) {$\AA$};
  \draw[decorate] (-2.3,.5) -- (-2.3,5.5);
 \draw[fill=white] (1,0) rectangle (3,6) ; \node at (2,3) {$U_0$} ;
 \draw[fill=white] (4,2) rectangle (6,6) ;  \node at (5,4) {$O_f$} ;
 \draw[fill=white] (8,0) rectangle (10,6) ;  \node at (9,3) {$U_1$} ;
\draw[fill=white] (14,2) rectangle (16,6) ;  \node at (15,4) {$O_f$} ;
 \draw[fill=white] (18,0) rectangle (20,6) ;  \node at (19,3) {$U_t$} ;
  \draw[fill=white] (24,2) rectangle (26,6) ;  \node at (25,4) {$O_f$} ;
 \draw[fill=white] (28,0) rectangle (30,6) ;  \node at (29,3) {$U_T$} ;
 
 \node[left] at (0,5) {$\ket{0}_\II$};
 \node[left] at (0,3) {$\ket{0}_\OO$};
 \node[left] at (0,1) {$\ket{0}_{\WW}$};
 \node[right] at (31,5) {$A(f)_{\TT}$};
 
 \draw[dotted] (7,0) -- (7,6) ; \node[below] at (7,0) {$\ket{\psi_f^1}$};
 \draw[dotted] (17,0) -- (17,6) ; \node[below] at (17,0) {$\ket{\psi_f^t}$};
\end{tikzpicture}

\caption{Schematic representation of a quantum algorithm that make use of an
  oracle $O_f$, an input register $\II$, an output register $\OO$, and a register $\WW$ for work space.\label{fig:circuit}}
\end{figure}

At the end of the circuit, a target register $\TT$ holds the output of the algorithm. In the classical case, this register is measured to obtain the classical output $A(f)$. In the quantum case, it holds the output state $A(f)$.

In both cases, for a fixed algorithm, we note $\ket{\psi_{f}^{t}}$ the state of the algorithm after the $t$-th query. The idea behind the adversary methods is to consider that $f$ is in fact an input to the oracle. We therefore introduce a function register $\FF$ holding this input, and define a \emph{super-oracle} $O$ acting on registers $\II\otimes\OO\otimes\FF$ as
\begin{align}
	\label{eqn:O}
	\ket{x}_\II\ket{s}_\OO\ket{f}_\FF \stackrel{O}{\longrightarrow} \ket{x}_\II\ket{s\oplus f(x)}_\OO\ket{f}_\FF.
\end{align}
We see that when the function register $\FF$ is in state $\ket{f}$, $O$ acts on $\II\otimes\OO$ just as $O_f$.
Suppose, just for the sake of analyzing the algorithm, that we prepare register $\FF$ in the state $\ket\delta = \inv{\sqrt{|\Fset|}}\sum_{f\in\Fset}\ket f$, the uniform superposition over all the elements of $\Fset$, and that we apply the same circuit as before, by replacing each call to $O_f$ by a call to $O$. Intuitively, each oracle call introduces more entanglement between this new register and the algorithm register. The state of this new circuit after the $t$-th query is (see Fig.~\ref{fig:circuit2}) 
\begin{align*}
  \ket{\Psi^t}&= \inv{\sqrt{|\Fset|}}\sum_{f\in\Fset}\ket{\psi_{f}^{t}}_{\AA}\ket{f}_{\FF}.
\end{align*}

\begin{figure}
\begin{tikzpicture}[xscale=.47, yscale=.5, decoration=brace]
\foreach \y in {1,3,5,7} {
 \draw (0,\y) -- (11,\y) ;  \draw[dashed] (11,\y) -- (13,\y) ; \draw (13,\y) -- (21,\y) ; \draw[dashed] (21,\y) -- (23,\y) ; \draw (23,\y) -- (31,\y) ;
  } 
    \node at (-3,3) {$\AA$};
  \draw[decorate] (-2.2,.5) -- (-2.2,5.5);
 \draw[fill=white] (1,0) rectangle (3,6) ; \node at (2,3) {$U_0$} ;
 \draw[fill=white] (4,2) rectangle (6,8) ;  \node at (5,5) {$O$} ;
 \draw[fill=white] (8,0) rectangle (10,6) ;  \node at (9,3) {$U_1$} ;
\draw[fill=white] (14,2) rectangle (16,8) ;  \node at (15,5) {$O$} ;
 \draw[fill=white] (18,0) rectangle (20,6) ;  \node at (19,3) {$U_t$} ;
\draw[fill=white] (24,2) rectangle (26,8) ;  \node at (25,5) {$O$} ;
 \draw[fill=white] (28,0) rectangle (30,6) ;  \node at (29,3) {$U_T$} ;
 
\node[left] at (0,7) {$\ket{\delta}_\FF$};
 \node[left] at (0,5) {$\ket{0}_\II$};
 \node[left] at (0,3) {$\ket{0}_\OO$};
 \node[left] at (0,1) {$\ket{0}_{\WW}$};
 \node[right] at (31,5) {$A(f)_{\TT}$};
 \node[right] at (31,7) {$\rho^{T}$};
 
 \draw[dotted] (7,0) -- (7,8) ; \node[below] at (7,0) {$\ket{\Psi^1}$};
 \draw[dotted] (9, 6.5) -- (9, 7.5) ; \node[above] at (9, 7.5) {$\rho^1$} ;
 \draw[dotted] (17,0) -- (17,8) ; \node[below] at (17,0) {$\ket{\Psi_t}$};
  \draw[dotted] (19, 6.5) -- (19, 7.5) ; \node[above] at (19, 7.5) {$\rho^t$} ;

\end{tikzpicture}
\caption{Schematic representation of a quantum algorithm that makes use of 
an oracle $O_f$, an input register $\II$, an output register $\OO$, a register $\WW$ for work space, and a virtual register
$\FF$ holding the input of the problem.\label{fig:circuit2}}
\end{figure}
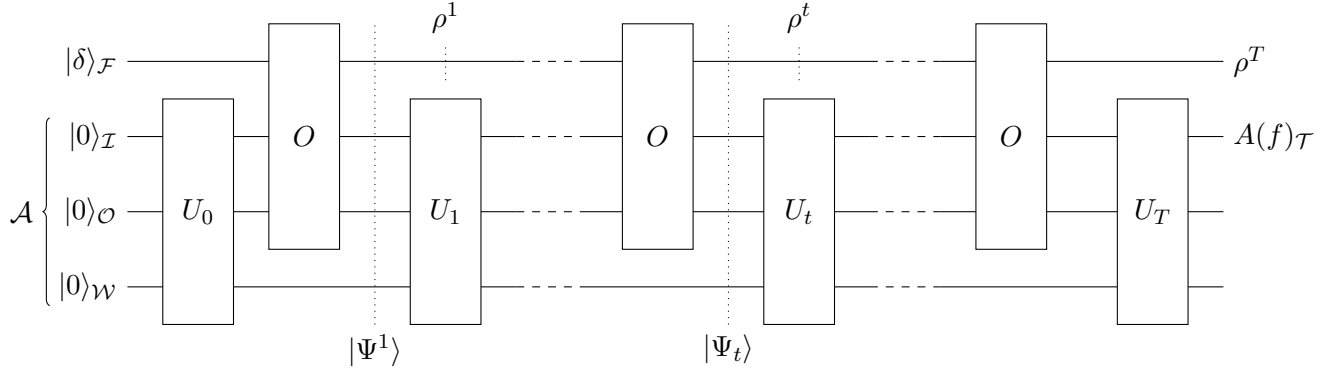

Note that only oracle calls can modify the state of the function register $\FF$, since all other gates only affect the algorithm register $\AA=\II\otimes\OO\otimes\WW$. The general idea of all adversary methods is to study the evolution of the algorithm by looking at the reduced state of the input register,
\begin{align*}
\rho^{t} = \Tr_{\AA} \proj{\Psi^t} = \inv{|\Fset|} \sum_{f,f'\in\Fset} \braket{\psi_{f'}^{t}}{\psi_{f}^{t}} \ketbra{f}{f'}.
\end{align*}

The algorithm starts with the state $\rho^{0} = \proj{\delta}$ and ends in a state $\rho^{T}$.

\subsection{Adversary matrices and progress function}

The adversary method studies how fast $\rho^{t}$ can change from $\rho^{0}$ to $\rho^{T}$. We introduce a progress function in order to do so.

\begin{definition}[Adversary matrix]
 An adversary matrix $\Gamma$ is a Hermitian matrix such that $\Gamma\ket{\delta}=\ket\delta$. An additive adversary matrix also satisfies $-\I\preceq\Gamma\preceq\I$ (i.e., $\norm{\Gamma}=1$), while a multiplicative adversary matrix satisfies $\Gamma\succeq\I$. In both cases, the progress function is defined as $W^t=\Tr\left[\Gamma\rho^t\right]$.
\end{definition}

We will also use a matrix $\Gamma_x$ derived from the adversary matrix $\Gamma$ and defined as follows (for both the additive and the multiplicative case).
\begin{definition}[$\Gamma_x, D_x$]
For any adversary matrix $\Gamma$, let $\Gamma_x=\Gamma\circ D_x$, where $\circ$ denotes the Hadamard (element-wise) product and $D_{x}$ is the (0-1)-matrix $D_{x} = \sum_{f,f'} \delta_{f(x),f'(x)}\ketbra{f'}{f}$ where $\delta$ denotes the Kronecker's delta.
\end{definition}
We will show that the Hadamard product is closely related to oracle calls: when the input register is in the state $\ket{x}$, the oracle calls acts on the function register as the Hadamard product with $D_x$. It is easy to check that this Hadamard product is a CP-map.
\begin{fact}\label{fact:CP-map}
 The map $\gamma \mapsto \gamma \circ D_{x}$ is a CP-map and $\gamma \circ D_{x} = \sum_{y} \Pi_y^x \gamma \Pi_y^x$ with $\Pi_y^x = \sum_{f: f(x)=y} \proj{f}$. 
\end{fact}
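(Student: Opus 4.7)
The plan is to verify the claimed formula entry-wise and then read off complete positivity directly from the Kraus-like form on the right-hand side.

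First, I would observe that the family $\{\Pi_y^x\}_{y\in\Sigma_O}$ is a complete set of mutually orthogonal projectors on the function register $\FF$. Indeed, each $f\in\Fset$ has a single well-defined value $f(x)$, so the basis vectors $\{\ket f\}_{f\in\Fset}$ partition according to the value of $f(x)$, which gives both orthogonality ($\Pi_y^x\Pi_{y'}^x=\delta_{y,y'}\Pi_y^x$) and completeness ($\sum_y\Pi_y^x=\I$). In particular each $\Pi_y^x$ is self-adjoint.

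Next, I would compute matrix elements on both sides in the basis $\{\ket f\}$. On the right, $\bra{f'}\Pi_y^x\gamma\Pi_y^x\ket{f}$ equals $\gamma_{f',f}$ when $f(x)=y=f'(x)$ and vanishes otherwise. Summing over $y$ gives $\gamma_{f',f}$ precisely when $f(x)=f'(x)$, and $0$ otherwise, which is exactly $(\gamma\circ D_x)_{f',f}$ by the definition of $D_x$. This establishes the claimed identity $\gamma\circ D_x=\sum_y\Pi_y^x\gamma\Pi_y^x$.

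Finally, since the $\Pi_y^x$ are Hermitian, the right-hand side is of the Kraus form $\sum_y K_y\gamma K_y^\dagger$ with $K_y=\Pi_y^x$, and any such operator sum defines a completely positive map. This proves both parts of the fact. There is no real obstacle here; the only conceptual point is recognizing that Hadamard multiplication with the block indicator matrix $D_x$ is exactly the pinching associated with the projective measurement $\{\Pi_y^x\}_y$ of $f(x)$ on the function register, which is precisely the action the oracle induces when the input register is set to $\ket x$.
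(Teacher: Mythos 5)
Your proof is correct, and it supplies exactly the routine verification the paper elides with ``It is easy to check'': the entry-wise comparison of $\gamma\circ D_x$ with $\sum_y \Pi_y^x\gamma\Pi_y^x$ and the observation that the latter is an operator-sum (pinching) representation, hence CP. Since the paper offers no written proof, there is nothing to contrast against; your argument is the standard one the authors clearly have in mind.
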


The basic idea of all adversary methods is to bound how much the value of the progress function can change by one oracle call. To study the action of one oracle call, we isolate the registers $\II$ and $\OO$ holding the input and output of the oracle from the rest of the algorithm register.
Without loss of generality, we may assume that for any oracle call, the output register $\OO$ is in the state $\ket{0}_\OO$ (\emph{computing} oracle call) or $\ket{f(x)}_\OO$ (\emph{uncomputing} oracle call). Indeed, an oracle call for any other state $\ket{s}_\OO$ may be simulated by one computing oracle call, $\OO(\log N)$ XOR gates and one uncomputing oracle call. Therefore, this assumption only increases the query complexity by a factor at most 2.

Let us consider the action of the $(t+1)$-th oracle call, which we assume to be of \emph{computing} type (uncomputing oracle calls are treated similarly).
Just before the $(t+1)$-th oracle call, the state can be written as:
\begin{align*}
	\ket{\Psi^{t}} = \inv{\sqrt{|\Fset|}} \sum_{x,f} \ket{\psi^t_{f,x}}_{\WW}\ket{x}_{\II}\ket{0}_{\OO}\ket{f}_{\FF},
\end{align*}
with $\ket{\psi^{t}_{f,x}}$ being non-normalized states.
Let us consider the reduced density matrix
\begin{align}
	\tilde{\rho}^{t} = \Tr_{\WW} \proj{\Psi^{t}} = \inv{|\Fset|} \sum_{f,f',x,x'} \braket{\psi^t_{f,x}}{\psi^t_{f',x'}} \ketbra{x'}{x} \otimes \proj{0} \otimes \ketbra{f'}{f} \label{eqn:defRhoTildeT},
\end{align}
and note that $\rho^t=\Tr_{\II\OO}\left[\tilde\rho^{t}\right]$.

\begin{lem}\label{lemma:progress-oracle}
 Let the $t$-th oracle call be of computing-type. Then,   $W^{t}=\Tr\left[\Upsilon\tilde\rho^{t}\right]$ and $W^{t+1}=\Tr\left[\Upsilon'\tilde\rho^{t}\right]$, where 
\begin{align}
\Upsilon &= \sum_{x} \proj{x} \otimes  \sum_{y} \proj{y} \otimes \Gamma = \bigoplus_{x,y} \Gamma, \label{eqn:defG} \\
\Upsilon' &= \sum_{x} \proj{x} \otimes  \sum_{y} \proj{y} \otimes \Gamma_{x} = \bigoplus_{x,y} \Gamma_{x} \label{eqn:defGPrime}.
\end{align}
\end{lem}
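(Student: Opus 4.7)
My plan is to prove both identities by direct computation, exploiting the decomposition of $\ket{\Psi^t}$ over the input register $\II$ that was already introduced just above~\eqref{eqn:defRhoTildeT}. Since the $(t+1)$-th call is of computing type, I write
$\ket{\Psi^t} = \inv{\sqrt{|\Fset|}} \sum_{x,f} \ket{\psi^t_{f,x}}_\WW \ket{x}_\II \ket{0}_\OO \ket{f}_\FF$, and read off both $\rho^t$ (by tracing $\AA$) and $\tilde\rho^t$ (by tracing only $\WW$).

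For the first identity, I would note that $\sum_x \proj{x}_\II = \I_\II$ and $\sum_y \proj{y}_\OO = \I_\OO$, so $\Upsilon = \I_\II \otimes \I_\OO \otimes \Gamma$. Taking the partial trace over $\II$ and $\OO$ against this operator collapses $\tilde\rho^t$ to $\rho^t$, giving $\Tr[\Upsilon \tilde\rho^t] = \Tr_\FF[\Gamma\,\Tr_{\II\OO}\tilde\rho^t] = \Tr[\Gamma \rho^t] = W^t$.

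For the second identity, I would apply the super-oracle~\eqref{eqn:O} to obtain
$\ket{\Psi^{t+1}} = \inv{\sqrt{|\Fset|}}\sum_{x,f}\ket{\psi^t_{f,x}}_\WW \ket{x}_\II \ket{f(x)}_\OO \ket{f}_\FF$,
and then trace $\AA$ out. The overlap $\braket{f'(x')}{f(x)} = \delta_{x,x'}\,\delta_{f(x),f'(x)}$ that appears in this step introduces exactly the Kronecker delta that defines $D_x = \sum_{f,f'}\delta_{f(x),f'(x)}\ketbra{f'}{f}$, yielding
$\rho^{t+1} = \inv{|\Fset|}\sum_{f,f',x} \braket{\psi^t_{f',x}}{\psi^t_{f,x}}\,\delta_{f(x),f'(x)}\,\ketbra{f}{f'}$.
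Plugging this into $W^{t+1}=\Tr[\Gamma\rho^{t+1}]$ and recognizing the entrywise product $\Gamma\circ D_x$ as $\Gamma_x$ produces a sum of the form $\inv{|\Fset|}\sum_{x,f,f'}\braket{\psi^t_{f,x}}{\psi^t_{f',x}}\bra{f}\Gamma_x\ket{f'}$. Evaluating $\Tr[\Upsilon'\tilde\rho^t]$ directly from the definition of $\tilde\rho^t$ in~\eqref{eqn:defRhoTildeT} produces exactly the same sum: the $\proj{x}_\II$ factor selects the correct $\Gamma_x$, and the $\sum_y \proj{y}_\OO$ factor collapses against the $\proj{0}_\OO$ block of $\tilde\rho^t$.

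I do not expect any genuine obstacle; the content is essentially index bookkeeping. The one conceptual observation worth isolating is that post-selecting on input $x$ and then performing a computing oracle call followed by partial trace of $\OO$ implements precisely the Hadamard product with $D_x$ on the function register $\FF$. This is the content of Fact~\ref{fact:CP-map}, so once the two expressions for $\rho^{t+1}$ and $\Tr[\Upsilon'\tilde\rho^t]$ are lined up they match term by term.
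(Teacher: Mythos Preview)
Your proposal is correct and follows essentially the same route as the paper: both arguments reduce to expanding $\tilde\rho^t$ and $\rho^{t+1}$ in the $\{\ket{x},\ket{f}\}$ basis and matching the resulting sums, the only cosmetic difference being that the paper packages the Hadamard-product step via Fact~\ref{fact:CP-map} and the intermediate states $\rho_x^t$, whereas you work directly at the level of matrix entries. One small slip: the equality $\braket{f'(x')}{f(x)} = \delta_{x,x'}\,\delta_{f(x),f'(x)}$ is not literally true (the $\delta_{x,x'}$ comes from tracing $\II$, not from the $\OO$ overlap), but the combined expression for $\rho^{t+1}$ you write next is correct, so this is harmless.
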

Note that for uncomputing oracle calls, it suffices to swap the roles of $\rho^t$ and $\rho^{t+1}$.

\begin{proof}
From the definition of $\Upsilon$ and the fact that $\rho^t=\Tr_{\II\OO}\left[\tilde\rho^{t}\right]$, we immediately have that $W^{t}=\Tr\left[\Gamma\rho^{t}\right]=\Tr\left[\Upsilon\tilde\rho^{t}\right]$. Let us now consider what happens after one oracle call.
An oracle call acts on the registers $\II\otimes\OO\otimes\FF$ as the operator
\begin{align*}
 O = \sum_{x} \proj{x} \sum_{f,s} \ketbra{f(x)\oplus s}{s} \otimes \proj{f}.
\end{align*}
Before a computing oracle call, the output register $\OO$ is in the state $\ket{0}$, as in eq.~(\ref{eqn:defRhoTildeT}). Therefore, the state $\tilde\rho^{t+1} = O \tilde\rho^{t}O^{\dagger}$ just after the $(t+1)$-th oracle call is
\begin{align*}
\tilde\rho^{t+1} &= \inv{|\Fset|} \sum_{f,f',x,x'} \braket{\psi^t_{f,x}}{\psi^t_{f',x'}} \ketbra{x'}{x} \otimes \ketbra{f'(x')}{f(x)} \otimes \ketbra{f'}{f}
\end{align*}
and
\begin{align}
\rho^{t+1} = \Tr_{\II\OO}\left[\tilde\rho^{t+1}\right] = \sum_{x} \rho_{x}^t \circ D_{x}, \label{eqn:RhoTPlusOne}
\end{align}
where
\begin{align}
\label{eqn:Rhotx}
 \rho_{x}^t=\inv{|\Fset|} \sum_{f,f'} \braket{\psi^t_{f,x}}{\psi^t_{f',x}} \ketbra{f'}{f}
\end{align}
Combining eqs.~(\ref{eqn:defRhoTildeT}) and~(\ref{eqn:defGPrime}) we have:
\begin{align*}
\Tr\left[\Upsilon'\tilde\rho^{t}\right] &= \inv{|\Fset|} \sum_{f,f',x,x'} \Tr\left[\braket{\psi^t_{f,x}}{\psi^t_{f',x'}} \ketbra{x'}{x} \otimes \proj{0} \otimes \Gamma_{x} \ketbra{f'}{f}\right] \\
&=  \inv{|\Fset|} \sum_x \Tr\left[ \Gamma_{x} \sum_{f,f'} \braket{\psi^t_{f,x}}{\psi^t_{f',x}} \ketbra{f'}{f} \right] \\
&= \sum_x \Tr\left[ \Gamma_{x} \rho_{x}^{t} \right] \quad \text{by eq.~(\ref{eqn:Rhotx})}\\
&= \sum_{x} \Tr\left[ (\Gamma \circ D_{x}) \rho_{x}^{t}  \right]\\
&= \sum_{x} \Tr\left[ \Gamma ( \rho_{x}^{t} \circ D_{x}) \right]
\textrm{\quad using Fact~\ref{fact:CP-map} and }\Tr(AB)=\Tr(BA)  \\
&=\Tr\left[\Gamma\rho^{t+1}\right] \quad \textrm{by eq.~(\ref{eqn:RhoTPlusOne}).}
\end{align*}
\end{proof}

\section{The different adversary methods}
\subsection{Additive adversary method}

\emph{Additive adversary} should be understood as \emph{adversary with negative weights} as defined in \cite{HLS07}. To differentiate between the different methods, we will from now on denote additive adversary matrices by $\aGamma$ and multiplicative adversary matrices by $\Gamma$. For the statement of the theorem, we will also need the following notions.
\begin{definition}[$\rho^{\odot}$, junk matrix]
For a quantum state generation problem $\PP$ such that $\ket{\PP(f)}=\ket{\psi_f}$, we denote by $\target$ the \emph{target} state
$
\target=\inv{|\Fset|}\sum_{f,f'\in\Fset}\braket{\psi_f}{\psi_{f'}}\ketbra{f'}{f}.
$
In the non-coherent case, we call junk matrix any Gram matrix $M$ of size $|\Fset|\times|\Fset|$ such that $M_{ij}=\braket{v_i}{v_j}$, where $\{v_i:i\in[|\Fset|]\}$ is a set of unit vectors (or, equivalently, any semi-definite matrix $M$ such that $M_{ii}=1$ for any $i\in[|\Fset|]$). In the coherent case, we call junk matrix the all-1 matrix of size $|\Fset|\times|\Fset|$.
\end{definition}

\begin{thm}[Additive adversary method \cite{HLS07}]\label{thm:additive-adversary}
Consider a quantum algorithm solving $\PP$ with success probability at least $1-\varepsilon$, and let $\aGamma$ be an additive adversary matrix such that $\Tr\left[\aGamma(\target \circ M)\right]=0$ for any junk matrix $M$. Then,
\begin{align*}
	Q_{\varepsilon}(\PP) \geq \frac{1-C(\varepsilon)}{\max_{x}\norm{\aGamma_{x}-\aGamma}} \quad\text{where}\quad C(\varepsilon)=\varepsilon+2\sqrt{\varepsilon(1-\varepsilon)}
\end{align*}
\end{thm}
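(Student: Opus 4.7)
The plan is a standard adversary-method progress argument: define the progress function $W^t := \Tr[\aGamma\rho^t]$, establish the three bounds $W^0 = 1$, $W^T \leq C(\varepsilon)$, and $|W^{t+1}-W^t| \leq \max_x\norm{\aGamma_x-\aGamma}$, and combine them to obtain
\[
T\cdot\max_x\norm{\aGamma_x-\aGamma} \;\geq\; W^0 - W^T \;\geq\; 1 - C(\varepsilon),
\]
which rearranges into the claimed lower bound on $T = Q_\varepsilon(\PP)$.

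The initial value is immediate: $W^0 = \bra\delta\aGamma\ket\delta = 1$ by definition of an additive adversary matrix. For the per-query change, Lemma~\ref{lemma:progress-oracle} rewrites $W^t = \Tr[\Upsilon\tilde\rho^t]$ and $W^{t+1} = \Tr[\Upsilon'\tilde\rho^t]$, so the trace/operator-norm inequality $|\Tr[AB]|\leq\norm{A}\tracenorm{B}$ of Section~\ref{sec:notations} gives
\[
|W^{t+1} - W^t| \;\leq\; \norm{\Upsilon' - \Upsilon}\cdot\tracenorm{\tilde\rho^t} \;=\; \max_x\norm{\aGamma_x - \aGamma},
\]
using the block structure $\Upsilon' - \Upsilon = \bigoplus_{x,y}(\aGamma_x - \aGamma)$ from~(\ref{eqn:defG})--(\ref{eqn:defGPrime}) and $\tracenorm{\tilde\rho^t} = 1$. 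Uncomputing queries are handled by the symmetric argument noted after the lemma.

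The main step is the final-value bound $W^T \leq C(\varepsilon)$. Let $\ket{\phi_f}$ denote the ideal success state---$\ket{\PP(f)}\otimes\ket{\junk_f}$ in the non-coherent case, with $\ket{\junk_f}$ the unit vector making $\alpha_f := \braket{\phi_f}{\psi_f^T}$ real and maximal, and $\ket{\PP(f)}\otimes\ket{\default}$ in the coherent case; the success hypothesis forces $\alpha_f\geq\sqrt{1-\varepsilon}$. Form the purifications $\ket{\Psi^T} = \inv{\sqrt{|\Fset|}}\sum_f\ket{\psi_f^T}\ket{f}$ and $\ket{\Phi} = \inv{\sqrt{|\Fset|}}\sum_f\ket{\phi_f}\ket{f}$, whose reduced states on $\FF$ are $\rho^T$ and $\sigma^T$ respectively. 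By construction, $\sigma^T_{fg} = \braket{\phi_g}{\phi_f}/|\Fset|$ factorizes as $\target\circ M$ with $M_{fg} = \braket{\junk_g}{\junk_f}$ (setting $\ket{\junk_f}:=\ket{\default}$ coherently), which is a genuine junk matrix, so the orthogonality hypothesis yields $\Tr[\aGamma\sigma^T]=0$ and
\[
W^T = \Tr[\aGamma(\rho^T - \sigma^T)] \;\leq\; \norm{\aGamma}\cdot\tracenorm{\rho^T - \sigma^T} \;\leq\; \tracenorm{\proj{\Psi^T} - \proj{\Phi}}
\]
by contractivity of the trace norm under partial trace. Applying the Fuchs--van de Graaf identity $\tracenorm{\proj{\Psi^T} - \proj{\Phi}} = 2\sqrt{1 - |\braket{\Phi}{\Psi^T}|^2}$ together with $|\braket{\Phi}{\Psi^T}| = \inv{|\Fset|}\sum_f\alpha_f\geq\sqrt{1-\varepsilon}$ gives $W^T\leq 2\sqrt{\varepsilon}$. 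Since the theorem is non-vacuous only when $C(\varepsilon)<1$, i.e.\ when $\varepsilon<1/5$, and in this range one verifies $2\sqrt{\varepsilon}\leq C(\varepsilon)$, we obtain $W^T\leq C(\varepsilon)$ as required.

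The main conceptual obstacle is producing the right comparator $\sigma^T$ so that the orthogonality hypothesis on $\aGamma$ actually bites; the purification-based choice above works precisely because the reduced state of $\ket{\Phi}$ is manifestly of the form $\target\circ M$ with $M$ a bona fide junk matrix. Once this identification is in place, the remaining ingredients---Fuchs--van de Graaf for pure states, contractivity of the trace norm under partial trace, and the trace/operator-norm duality---are entirely routine.
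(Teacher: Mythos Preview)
Your proof follows the same overall progress-function structure as the paper (initial value $W^0=1$, per-query change via Lemma~\ref{lemma:progress-oracle}, final value $W^T$), and the first two steps are identical. For the final-value bound you take a genuinely different route: the paper expands $\rho^T=(1-\varepsilon)(\target\circ M_\junk)+\varepsilon\tau+\sqrt{\varepsilon(1-\varepsilon)}(\sigma+\sigma^\dagger)$ and bounds the three pieces separately via H\"older's inequality to obtain $W^T\leq \varepsilon+2\sqrt{\varepsilon(1-\varepsilon)}=C(\varepsilon)$ directly, whereas you compare $\rho^T$ with the ideal reduced state $\sigma^T=\target\circ M$ through the purifications and apply Fuchs--van de Graaf, getting the sharper estimate $W^T\leq 2\sqrt{\varepsilon}$ before relaxing it to $C(\varepsilon)$. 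Your argument is more streamlined and in fact yields a slightly better constant; the paper's explicit three-term decomposition, on the other hand, sets up exactly the good/bad splitting that is reused verbatim in the hybrid and multiplicative bounds (Theorems~\ref{thm:new-AAM} and~\ref{thm:MAM}).

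One small caveat worth flagging: in the \emph{coherent} case, where $\ket{\junk_f}$ is forced to be $\ket{\default}$ and the only junk matrix is the all-ones matrix, your overlaps $\alpha_f=\braket{\phi_f}{\psi_f^T}$ need not be real, so $|\braket{\Phi}{\Psi^T}|=\bigl|\frac{1}{|\Fset|}\sum_f\alpha_f\bigr|\geq\sqrt{1-\varepsilon}$ does not follow from $|\alpha_f|\geq\sqrt{1-\varepsilon}$ alone. The paper's proof makes the same implicit assumption (its decomposition $\ket{\psi_f^T}=\sqrt{1-\varepsilon_f}\ket{\psi_f,\default}+\sqrt{\varepsilon_f}\ket{\mathrm{err}_f}$ presupposes the overlap is real nonnegative). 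In the non-coherent case you can absorb the phases into $\ket{\junk_f}$ and the resulting $M$ is still a valid junk matrix, so the issue evaporates there and for function computation; for genuinely coherent state generation it deserves an explicit word.
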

\begin{proof}
By definition of $\aGamma$, the initial value of the progress function is $\aW^0 = 1$. 
We now bound the decrease of the the progress function for each query. We have from Lemma~\ref{lemma:progress-oracle}
\begin{align*}
 \abs{\aW^{t+1}-\aW^t}=\abs{\Tr[(\aUpsilon'-\aUpsilon)\tilde{\rho}^t]}\leq\norm{\aUpsilon'-\aUpsilon}=\max_{x}{\norm{\aGamma_{x}-\aGamma}}.
\end{align*}

To conclude, we need to upper-bound the value of the progress function at the end of the algorythm. Let us prove that $\aW^T \leq C(\varepsilon)$. Let $\ket{\psi_f}$ be the state to be generated when the input is $f$ (in particular, for a classical problem this will just be a computational basis state encoding the output of the classical problem). The final state is:
 \begin{align*}
 	\ket{\Psi^T}=\frac{1}{\sqrt{|\Fset|}}\sum_{f\in\Fset}\left[ \sqrt{1-\varepsilon_f}\ket{\psi_f,\junk_f}+\sqrt{\varepsilon_f}\ket{\mathrm{err}_f}\right]\ket{f},
 \end{align*}
where $\ket{\junk_f}$ is the default state $\ket{\default}$ for a coherent quantum state generation problem, and any state otherwise.
 Since the algorithm has success probability $1-\varepsilon$, we have $0 \leq \varepsilon_f \leq \varepsilon, \forall f$ and the final state can be rewritten as:
\begin{align*}
 \ket{\Psi^T}=\frac{1}{\sqrt{|\Fset|}}\sum_{f\in\Fset}\left[ \sqrt{1-\varepsilon}\ket{\psi_f,\junk_f}+\sqrt{\varepsilon}\ket{\error_f}\right]\ket{f},
\end{align*}
where $\ket{\error_f}$ is the (non-normalized) vector $\frac{\sqrt{1-\varepsilon_f}-\sqrt{1-\varepsilon}}{\sqrt\varepsilon}\ket{\psi_f,\junk_f}+\sqrt{\frac{\varepsilon_f}{\varepsilon}}\ket{\mathrm{err}_f}$.

Tracing over everything but the last register, we have
\begin{align*}
 \rho^T=(1-\varepsilon)\left(\target \circ M_\junk \right) +\varepsilon \tau +\sqrt{\varepsilon(1-\varepsilon)} (\sigma+\sigma^\dagger),
\end{align*}
where
\begin{align*}
 M_\junk &=\sum_{f,f'\in\Fset}\braket{\junk_f}{\junk_f'}\ketbra{f'}{f}, \\
 \tau &=\frac{1}{|\Fset|}\sum_{f,f'\in\Fset}\braket{\error_f}{\error_f'}\ketbra{f'}{f}, \\
\sigma&=\frac{1}{|\Fset|}\sum_{f,f'\in\Fset}\braket{\psi_f,\junk_f}{\error_f'}\ketbra{f'}{f}.
\end{align*}
By assumption on $\aGamma$, we have $\Tr\left[\aGamma(\target \circ M_\junk)\right]=0$ and $\Tr\left[\aGamma A\right]\leq\tracenorm{A}$ for any operator $A$, so that
\begin{align*}
 W^T&=(1-\varepsilon)\Tr\left[\aGamma(\target \circ M_\junk)\right]
+\varepsilon\Tr\left[\aGamma\tau\right]
+\sqrt{\varepsilon(1-\varepsilon)}\Tr\left[\aGamma(\sigma+\sigma^\dagger)\right]
\\
&\leq\varepsilon\tracenorm{\tau}+\sqrt{\varepsilon(1-\varepsilon)}\tracenorm{\sigma+\sigma^\dagger}.
\end{align*}
It remains to show that $\tracenorm{\tau}\leq 1$ and $\tracenorm{\sigma+\sigma^\dagger}\leq 2$.
Let us define the following matrices.
\begin{align*}
 A&=\frac{1}{\sqrt{|\Fset|}}\sum_{f\in\Fset}\ketbra{\psi_f,\junk_f}{f},&
 B&=\frac{1}{\sqrt{|\Fset|}}\sum_{f\in\Fset}\ketbra{\error_f}{f}.
\end{align*}
Then, we have $\sigma=(A^\dagger B)^t$ and therefore $\tracenorm{\sigma+\sigma^\dagger} \leq 2 \tracenorm{\sigma}  = 2\tracenorm{A^\dagger B} \leq 2\fnorm{A}\cdot\fnorm{B}  \leq 2$,
where we have used H\"older's inequality (Lemma~\ref{thm:Holder}) and the fact that $\fnorm{A}=1$ since $\ket{\psi_f,\junk_f}$ is normalized, and $\fnorm{B} \leq 1$ and $ \braket{\error_f}{\error_f} = \inv{\varepsilon}\left(2-\varepsilon-2\sqrt{1-\varepsilon}\sqrt{1-\varepsilon_f} \right) \leq 1$ for $\varepsilon_f \leq \varepsilon$. Similarly, we have $\tau=(B^\dagger B)^t$ and therefore $\tracenorm{\tau}\leq\fnorm{B}^2\leq 1$.
\end{proof}

For classical problems, we now prove that our method generalizes \cite{HLS07}. Indeed, our condition on the adversary matrix is different, which allows us to also deal with quantum problems.
However, for classical problems, the following lemma shows that the usual condition implies our modified condition. Let $\PP(f)$ be the function to be computed.
\begin{lem}
  $\Tr\left[\aGamma(\target \circ M)\right]=0$ for any matrix $M$
if and only if $\aGamma_{ff'}=0$ for any $f,f'$ such that $\PP(f)=\PP(f')$.
\end{lem}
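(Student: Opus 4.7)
The plan is to reduce the statement to a direct calculation using the fact that for a classical problem the target states are computational basis states. First, I would write out what $\target$ looks like in this setting. Since $\ket{\psi_f}=\ket{\PP(f)}$ is a computational basis state, we have $\braket{\psi_f}{\psi_{f'}}=\delta_{\PP(f),\PP(f')}$, and therefore
\begin{align*}
\target = \inv{|\Fset|}\sum_{f,f':\PP(f)=\PP(f')}\ketbra{f'}{f}.
\end{align*}
Consequently the Hadamard product satisfies $(\target\circ M)_{f',f}=\inv{|\Fset|}\delta_{\PP(f),\PP(f')} M_{f',f}$.

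Next, I would expand the trace in the lemma using the indices of the function register. Writing out
\begin{align*}
\Tr\left[\aGamma(\target \circ M)\right]
=\sum_{f,f'}\aGamma_{f,f'}(\target\circ M)_{f',f}
=\inv{|\Fset|}\sum_{f,f':\PP(f)=\PP(f')}\aGamma_{f,f'} M_{f',f},
\end{align*}
makes the structure transparent: the trace only picks up contributions from pairs $(f,f')$ with $\PP(f)=\PP(f')$, weighted by the corresponding entries of $\aGamma$ and $M$.

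From this identity both directions of the equivalence follow immediately. For the ``if'' direction, if $\aGamma_{f,f'}=0$ whenever $\PP(f)=\PP(f')$, then every term in the sum vanishes regardless of $M$. For the ``only if'' direction, fix any pair $(f_0,f_0')$ with $\PP(f_0)=\PP(f_0')$ and take $M=\ketbra{f_0'}{f_0}$; the trace collapses to $\aGamma_{f_0,f_0'}/|\Fset|$, which must equal zero.

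The argument is essentially a bookkeeping exercise, so there is no real obstacle; the only thing to be careful about is the transpose convention in the index matching between $\aGamma$ and $\target\circ M$ when evaluating the trace, and the fact that for classical problems the junk register is trivial so quantifying over arbitrary $M$ indeed recovers the standard condition of~\cite{HLS07}.
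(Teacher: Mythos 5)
Your proof is correct and follows essentially the same route as the paper: both directions hinge on the identity $\Tr[\aGamma(\target\circ M)]=\inv{|\Fset|}\sum_{f,f':\PP(f)=\PP(f')}\aGamma_{ff'}M_{f'f}$, with the "only if" direction obtained by plugging in the rank-one matrix $M=\ketbra{f_0'}{f_0}$ supported on a single pair $(f_0,f_0')$ with $\PP(f_0)=\PP(f_0')$. The paper handles the two directions in the opposite order and does not write the explicit form of $\target$ before doing the rank-one substitution, but the substance is identical.
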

\begin{proof}
 Let $\aGamma$ be such that $\Tr\left[\aGamma(\target \circ M)\right]=0$ for any matrix $M$, and $\bar{f},\bar{f'}$ be such that $\PP(\bar{f})=\PP(\bar{f'})$. Choosing $M$ such that $M_{\bar{f}\bar{f'}}=1$ and $M_{ff'}=0$ for any other element, we have $\target \circ M=\frac{1}{|\Fset|}M$ and therefore $\aGamma_{\bar{f}\bar{f'}}=0$.

For the other direction, we obtain for any matrix $M$
\begin{align*}
 \Tr\left[\aGamma(\target \circ M)\right]=\inv{|\Fset|} \sum_{f,f'\in\Fset}\aGamma_{ff'}\braket{\PP(f)}{\PP(f')}M_{ff'}=0
\end{align*}
since $\aGamma_{ff'}=0$ whenever $\PP(f)=\PP(f')$, and $\braket{\PP(f)}{\PP(f')}=0$ whenever $\PP(f)\neq \PP(f')$.
\end{proof}

\subsection{Hybrid adversary method}
The original adversary method can only prove a lower bound when $C(\varepsilon)<1$, that is, when the success probability $1-\varepsilon>\frac{4}{5}$. For smaller success probability, we need to prove a stronger bound on the final value of the progress function $\aW^T$. Inspired by the multiplicative adversary method~\cite{Spa08}, we prove the following \emph{hybrid} adversary bound.
%

\begin{thm}[Hybrid adversary method]\label{thm:new-AAM}
Consider a quantum algorithm solving $\PP$ with success at least $1-\varepsilon$. Let $\aGamma$ be any additive adversary matrix, $V_\bad$ be the direct sum of eigenspaces of $\aGamma$ with eigenvalue strictly larger than $\alambda<1$, and assume that
$\Tr\left[\Pi_\bad(\target \circ M)\right]\leq\eta$ for any junk matrix $M$, where $\Pi_\bad$ is the projector on $V_\bad$, and $0\leq\eta\leq 1-\varepsilon$.
We have
\begin{align*}
	Q_{\varepsilon}(\PP) \geq \frac{\tilde K(\aGamma,\alambda,\varepsilon)}{\max_{x}\norm{\aGamma_{x}-\aGamma}} \quad \text{where} \quad \tilde{K}(\aGamma,\alambda,\varepsilon)=(1-\alambda)(\sqrt{1-\varepsilon}-\sqrt{\eta})^2.
\end{align*}
\end{thm}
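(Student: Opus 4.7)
My plan is to reuse the progress function $\aW^t=\Tr[\aGamma\rho^t]$ from the proof of Theorem~\ref{thm:additive-adversary}. The initial value is $\aW^0=\Tr[\aGamma\proj\delta]=1$ by definition of an additive adversary matrix, and Lemma~\ref{lemma:progress-oracle} still gives $|\aW^{t+1}-\aW^t|\leq\max_x\norm{\aGamma_x-\aGamma}$. Only the final-value estimate must be strengthened: the goal is to show $\aW^T\leq 1-\tilde K$, which together with the per-query bound yields the claimed lower bound on $T$.

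For the final-value estimate I would first exploit the spectral structure of $\aGamma$. Since the eigenvalues of $\aGamma$ on $V_\good$ are at most $\alambda$ and those on $V_\bad$ are at most $1$, the operator inequality $I-\aGamma\succeq(1-\alambda)\Pi_\good$ holds, so
\begin{align*}
1-\aW^T\;\geq\;(1-\alambda)\,\Tr[\Pi_\good\rho^T]\;=\;(1-\alambda)\,\norm{(I\otimes\Pi_\good)\ket{\Psi^T}}^2,
\end{align*}
and it suffices to prove $\norm{(I\otimes\Pi_\good)\ket{\Psi^T}}\geq\sqrt{1-\varepsilon}-\sqrt\eta$. Mimicking the decomposition in the proof of Theorem~\ref{thm:additive-adversary}, I would write $\ket{\Psi^T}=\sqrt{1-\varepsilon}\ket{\Phi}+\sqrt\varepsilon\ket{\Psi_\error}$, where $\ket\Phi=\inv{\sqrt{|\Fset|}}\sum_f\ket{\psi_f,\junk_f}\ket f$ is a unit vector and $\norm{\Psi_\error}\leq 1$. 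Two ingredients from the hypotheses now combine: a direct calculation using $\varepsilon_f\leq\varepsilon$ gives $\braket{\Phi}{\Psi^T}=\inv{|\Fset|}\sum_f\sqrt{1-\varepsilon_f}\geq\sqrt{1-\varepsilon}$, while the assumption $\Tr[\Pi_\bad(\target\circ M)]\leq\eta$, applied to the junk matrix $M_\junk$ of the actual algorithm, gives $\norm{(I\otimes\Pi_\good)\ket{\Phi}}\geq\sqrt{1-\eta}$.

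The heart of the proof is then a geometric lemma: for any unit vectors $\ket A,\ket B$ and any orthogonal projector $\Pi$, whenever the right-hand side is nonnegative,
\begin{align*}
\norm{\Pi\ket B}\;\geq\;|\braket A B|\cdot\norm{\Pi\ket A}\;-\;\sqrt{(1-|\braket A B|^2)\,(1-\norm{\Pi\ket A}^2)}.
\end{align*}
This is essentially the triangle inequality for the Fubini--Study angle between a vector and a subspace, and I would prove it by writing $\ket B=\braket A B\ket A+\sqrt{1-|\braket A B|^2}\ket{A^\perp}$ and minimizing $\norm{\Pi\ket B}$ over unit $\ket{A^\perp}\perp\ket A$. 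Working in the three-dimensional subspace spanned by $\Pi\ket A$, $(I-\Pi)\ket A$, and the remaining free direction for $\ket{A^\perp}$ inside $\mathrm{Range}(\Pi)$, one checks by concavity that the minimum is attained at the boundary and equals the claimed expression. Applying the lemma to $\ket A=\ket\Phi$, $\ket B=\ket{\Psi^T}$ and $\Pi=I\otimes\Pi_\good$ gives $\norm{(I\otimes\Pi_\good)\ket{\Psi^T}}\geq\sqrt{(1-\varepsilon)(1-\eta)}-\sqrt{\varepsilon\eta}$, and a one-line Cauchy--Schwarz check $\sqrt{(1-\varepsilon)\eta}+\sqrt{\varepsilon(1-\eta)}\leq 1$ simplifies this to $\sqrt{1-\varepsilon}-\sqrt\eta$, closing the argument.

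The main obstacle I anticipate is the geometric lemma with the sharp constant $\sqrt{(1-|\braket A B|^2)(1-\norm{\Pi\ket A}^2)}$, as opposed to the weaker $\sqrt{1-|\braket A B|^2}$ that falls out of a naive triangle inequality: the tighter constant is precisely what couples the hypotheses on $|\braket A B|$ and $\norm{\Pi\ket A}$, and is what allows the bound to remain nontrivial throughout the regime $\eta\leq 1-\varepsilon$, in particular for small success probability $1-\varepsilon$ where Theorem~\ref{thm:additive-adversary} fails to give anything.
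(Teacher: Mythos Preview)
Your proposal is correct, but it reaches the key estimate $\norm{\Pi_\good\ket{\Psi^T}}\geq\sqrt{1-\varepsilon}-\sqrt\eta$ by a more elaborate route than the paper. The paper decomposes $\ket{\Psi^T}$ along the good/bad subspaces rather than along $\ket{\Phi}$ and its orthogonal complement: writing $\ket{\Psi^T}=\sqrt\beta\,\ket{\Psi_\good}+\sqrt{1-\beta}\,\ket{\Psi_\bad}$ with $\beta=\norm{\Pi_\good\ket{\Psi^T}}^2$, a single Cauchy--Schwarz on each piece gives
\[
\sqrt{1-\varepsilon}\;\leq\;|\braket{\Phi}{\Psi^T}|\;\leq\;\sqrt\beta\,\norm{\Pi_\good\ket\Phi}+\sqrt{1-\beta}\,\norm{\Pi_\bad\ket\Phi}\;\leq\;\sqrt\beta+\sqrt\eta,
\]
so $\sqrt\beta\geq\sqrt{1-\varepsilon}-\sqrt\eta$ drops out in one line, with no auxiliary geometric lemma and no final algebraic simplification needed. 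Your approach instead proves a general ``angle triangle inequality'' $\norm{\Pi\ket B}\geq pq-\sqrt{(1-p^2)(1-q^2)}$, which yields the tighter intermediate bound $\sqrt{(1-\varepsilon)(1-\eta)}-\sqrt{\varepsilon\eta}$ before you relax it to $\sqrt{1-\varepsilon}-\sqrt\eta$. (Incidentally, your Cauchy--Schwarz check does the job cleanly: from $(ac-bd)^2+(ad+bc)^2=1$ with $a=\sqrt{1-\varepsilon}$, $b=\sqrt\varepsilon$, $c=\sqrt{1-\eta}$, $d=\sqrt\eta$, the inequality $(ac-bd)^2\geq(a-d)^2$ reduces to $2ad(1-ad-bc)\geq 0$.) The paper's argument is shorter and avoids the optimization step; your geometric lemma, on the other hand, is a reusable statement of independent interest and makes explicit that the bound is essentially a triangle inequality for Fubini--Study angles.
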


\begin{proof}
The initial value of the progress function and the bound on the amount of change between two queries are the same as the additive adversary method, so we only need to prove that $\aW^{T} \leq 1-\tilde K(\aGamma,\alambda,\varepsilon)$. Recall that by assumption, $\ket{\Psi^T}$ can be written 
\begin{align*}
 \ket{\Psi^T}&=\frac{1}{\sqrt{|\Fset|}}\sum_{f\in\Fset}\left[ \sqrt{1-\varepsilon}\ket{\psi_f,\junk_f}+\sqrt{\varepsilon}\ket{\error_f}\right]\ket{f}.
\end{align*}
The state $\ket{\Psi} =\frac{1}{\sqrt{|\Fset|}}\sum_{f\in\Fset}\ket{\psi_f,\junk_f}\ket{f}$ satisfies $|\braket{\Psi}{\Psi^T}|\geq \sqrt{1-\varepsilon}$, and $\Tr_\AA\proj{\Psi}=\target\circ M_\junk$. Let $\beta=\norm{\Pi_\good\ket{\Psi^T}}^2$, $\ket{\Psi_\good}=\Pi_\good\ket{\Psi^T}/\sqrt{\beta}$ and $\ket{\Psi_\bad}=\Pi_\bad\ket{\Psi^T}/\sqrt{1-\beta}$, so that
\begin{align*}
 \sqrt{1-\varepsilon}&\leq|\braket{\Psi}{\Psi^T}|
=\sqrt{\beta} \ |\braket{\Psi}{\Psi_\good}|+\sqrt{1-\beta}\ |\braket{\Psi}{\Psi_\bad}|\\
&\leq\sqrt{\beta} \ \norm{\Pi_\good\ket{\Psi}}+\sqrt{1-\beta}\ \norm{\Pi_\bad\ket{\Psi}}\\
&\leq\sqrt{\beta}+\sqrt{1-\beta}\ \sqrt{\Tr\left[\Pi_\bad(\target \circ M_\junk)\right]}\\
&\leq\sqrt{\beta}+\sqrt{\eta}.
\end{align*}
Since $\eta\leq 1-\varepsilon$, we obtain that $\beta\geq(\sqrt{1-\varepsilon}-\sqrt{\eta})^2$. We are now ready to bound $\aW^{T}=\Tr(\aGamma\rho^T)$, where $\rho^T =\beta\rho_\good+(1-\beta)\rho_\bad+\sqrt{\beta(1-\beta)}\left[\Tr_A(\ket{\Psi_\good}\bra{\Psi_\bad})+\Tr_A(\ket{\Psi_\bad}\bra{\Psi_\good})\right]$.

Since $\Tr(\aGamma\rho_\good)\leq\alambda$, $\Tr(\aGamma\rho_\bad)\leq 1$, and the off-diagonal terms are zero, we have
\begin{align}
 \aW^{T}&=\beta\ \Tr(\aGamma\rho_\good)+(1-\beta)\ \Tr(\aGamma\rho_\bad)\label{eq:progress-good-bad}\\
&\leq 1-(1-\alambda)\beta
\leq 1-(1-\alambda)(\sqrt{1-\varepsilon}-\sqrt{\eta})^2.
\end{align}
\end{proof}


For classical problems, we can use the following lemma:
\begin{lem}\label{lem:new-additive-classical}
 Let $\Pi_\bad$ be the projector on $V_\bad$, $\Pi_z=\sum_{\PP(f)=z}\proj{f}$, and assume that $\norm{\Pi_z\Pi_\bad}^2\leq\eta$ for any $z$. Then,
$\Tr\left[\Pi_\bad(\target \circ M)\right]\leq\eta$ for any junk matrix $M$.
\end{lem}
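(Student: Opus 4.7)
The plan is to expand $\target\circ M$ as a block-diagonal operator, and then bound each block via a PSD trace inequality that consumes the hypothesis on $\norm{\Pi_z\Pi_\bad}$.

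First, I would exploit the classical nature of $\PP$: since $\ket{\psi_f}=\ket{\PP(f)}$ is a computational basis state, $\braket{\psi_f}{\psi_{f'}}=\delta_{\PP(f),\PP(f')}$, so
\begin{align*}
(\target\circ M)_{f,f'} = \inv{|\Fset|}\,\delta_{\PP(f),\PP(f')}\,M_{f,f'}.
\end{align*}
Recognizing this matrix as supported on the blocks indexed by the possible outputs $z$ of $\PP$, and using that the $\Pi_z$ are diagonal projectors that pick out these blocks, one obtains
\begin{align*}
\target\circ M = \inv{|\Fset|}\sum_z \Pi_z M\Pi_z.
\end{align*}

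Next, by cyclicity of the trace together with the idempotence $\Pi_z=\Pi_z^2$, I would rewrite
\begin{align*}
\Tr\bigl[\Pi_\bad(\target\circ M)\bigr] = \inv{|\Fset|}\sum_z \Tr\bigl[(\Pi_z\Pi_\bad\Pi_z)(\Pi_z M\Pi_z)\bigr].
\end{align*}
Both factors in each summand are positive semi-definite: $\Pi_z M\Pi_z$ since $M$ is PSD (being a Gram matrix), and $\Pi_z\Pi_\bad\Pi_z=(\Pi_\bad\Pi_z)^\dagger(\Pi_\bad\Pi_z)$ by construction. Applying the inequality $\Tr(AB)\leq\norm{A}\tracenorm{B}$ from Section~\ref{sec:notations} with $A=\Pi_z\Pi_\bad\Pi_z$ and $B=\Pi_z M\Pi_z$, and using $\tracenorm{B}=\Tr[B]$ for PSD $B$ as well as $\norm{X^\dagger X}=\norm{X}^2$, I get
\begin{align*}
\Tr\bigl[(\Pi_z\Pi_\bad\Pi_z)(\Pi_z M\Pi_z)\bigr] \leq \norm{\Pi_z\Pi_\bad}^2\,\Tr[\Pi_z M] \leq \eta\,\Tr[\Pi_z M].
\end{align*}

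Finally, since $\{\Pi_z\}_z$ is a partition of the identity and $M_{ff}=1$ for every $f$, we have $\sum_z\Tr[\Pi_z M]=\Tr[M]=|\Fset|$, which yields
\begin{align*}
\Tr\bigl[\Pi_\bad(\target\circ M)\bigr] \leq \inv{|\Fset|}\sum_z \eta\,\Tr[\Pi_z M] = \eta.
\end{align*}
The only step demanding real thought is the PSD factorization $\Pi_z\Pi_\bad\Pi_z=(\Pi_\bad\Pi_z)^\dagger(\Pi_\bad\Pi_z)$, which is what allows the hypothesis $\norm{\Pi_z\Pi_\bad}^2\leq\eta$ to be fed into an operator-norm bound on a PSD matrix; the remainder is routine manipulation of projectors and traces, so I do not anticipate any serious obstacle.
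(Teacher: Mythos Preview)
Your proof is correct and complete; the block decomposition $\target\circ M=\tfrac{1}{|\Fset|}\sum_z\Pi_zM\Pi_z$ together with $\Tr(AB)\le\|A\|\,\Tr B$ for PSD $A,B$ does exactly what is needed, and the normalization $\Tr M=|\Fset|$ closes the argument cleanly.

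The paper takes a different route: it introduces a purification $\ket{\psitarget_M}=\tfrac{1}{\sqrt{|\Fset|}}\sum_f\ket{\PP(f)}\ket{M_f}\ket{f}$ of $\target\circ M$ (with $\braket{M_f}{M_{f'}}=M_{ff'}$), observes that this purification is a fixed vector of the block projector $P=\sum_z\proj{z}\otimes\Pi_z$, and then bounds $\Tr[\Pi_\bad(\target\circ M)]=\|\Pi_\bad\ket{\psitarget_M}\|^2=\|\Pi_\bad P\ket{\psitarget_M}\|^2\le\|\Pi_\bad P\|^2=\max_z\|\Pi_\bad\Pi_z\|^2$. The purification packages the junk matrix $M$ into an auxiliary register so that the final step is a single operator-norm inequality rather than a sum over blocks; your approach instead keeps everything on the $\FF$ register and handles the sum explicitly via the trace of $M$. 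Both arguments are equally short; yours is more elementary in that it avoids auxiliary Hilbert spaces, while the paper's version makes it transparent that only the \emph{maximum} of $\|\Pi_z\Pi_\bad\|^2$ over $z$ is used (your bound also only uses the maximum, but this is slightly hidden behind the averaging step).
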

\begin{proof}
For any junk matrix $M$, let us define the following purification of $\target\circ M$,
\begin{align*}
 \ket{\psitarget_M}=\frac{1}{\sqrt{|\Fset|}}\sum_f\ket{\PP(f)}\ket{M_f}\ket{f},
\end{align*}
where $\ket{M_f}$ are normalized states such that $\braket{M_f}{M_{f'}}=\bra{f}M\ket{f'}$. Let us also consider the operator
$
 P=\sum_z \proj{z}\otimes \Pi_z.
$
Then, we have $P\ket{\psitarget_M}=\ket{\psitarget_M}$, so that
\begin{align*}
 \Tr\left[\Pi_\bad(\target \circ M)\right] =\norm{\Pi_\bad\ket{\psitarget_M}}^2
=\norm{\Pi_\bad P\ket{\psitarget_M}}^2
 \leq\norm{\Pi_\bad P}^2=\max_z\norm{\Pi_\bad \Pi_z}^2\leq\eta.
\end{align*}
\end{proof}

\subsection{Multiplicative adversary method}
\begin{thm}[Multiplicative adversary method \cite{Spa08}]\label{thm:MAM}
Consider a quantum algorithm solving $\PP$ with success at least $1-\varepsilon$. Let $\Gamma$ be any multiplicative adversary matrix, $V_\bad$ be the direct sum of eigenspaces of $\Gamma$ with eigenvalue strictly smaller than $\lambda>1$, and assume that
$\Tr\left[\Pi_\bad(\target \circ M)\right]\leq\eta$ for any junk matrix $M$, where $\Pi_\bad$ is the projector on $V_\bad$, and $0\leq\eta\leq 1-\varepsilon$.
We have
\begin{align*}
	Q_{\varepsilon}(\PP) \geq \frac{\log K(\Gamma,\lambda,\varepsilon)}{\log \max \left\{\norm{\Gamma_{x}^{1/2}\Gamma^{-1/2}}^2
,\norm{\Gamma^{1/2}\Gamma_{x}^{-1/2}}^2:\forall x\in\II\right\} },
\end{align*}
where $K(\Gamma,\lambda,\varepsilon)=1+(\lambda-1)(\sqrt{1-\varepsilon}-\sqrt{\eta})^2$.
\end{thm}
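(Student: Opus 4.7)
The plan is to follow the same three-step template as the additive and hybrid proofs---evaluate $W^0$, bound the per-query change, and lower-bound $W^T$---but using a \emph{multiplicative} bound on the ratio $W^{t+1}/W^t$ rather than an additive bound on $W^{t+1}-W^t$, and then convert the resulting geometric growth into a query lower bound by passing to logarithms. The initial value is immediate: since $\Gamma\ket{\delta}=\ket{\delta}$, we have $W^0=\Tr[\Gamma\proj{\delta}]=1$.

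For the per-query bound, the first observation is that both $\Gamma$ and $\Gamma_x$ are invertible. $\Gamma\succeq\I$ by hypothesis, and Fact~\ref{fact:CP-map} together with $\sum_y\Pi_y^x=\I$ on the function register gives $\Gamma_x=\sum_y\Pi_y^x\Gamma\Pi_y^x\succeq\sum_y\Pi_y^x=\I$. Set $c_x=\norm{\Gamma_x^{1/2}\Gamma^{-1/2}}^2$; using $\norm{M}^2=\norm{MM^\dagger}$ this equals $\norm{\Gamma_x^{1/2}\Gamma^{-1}\Gamma_x^{1/2}}$, and since this operator is PSD we obtain $\Gamma_x^{1/2}\Gamma^{-1}\Gamma_x^{1/2}\preceq c_x\I$. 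Conjugating by $\Gamma_x^{-1/2}$ yields $\Gamma^{-1}\preceq c_x\Gamma_x^{-1}$, and inverting (which reverses the Loewner order on positive-definite operators) gives the key operator inequality $\Gamma_x\preceq c_x\Gamma$. Extending this through the direct-sum structure of $\Upsilon,\Upsilon'$ in Lemma~\ref{lemma:progress-oracle}, we get $\Upsilon'\preceq c\,\Upsilon$ with $c=\max_x c_x$; combined with $\tilde\rho^t\succeq 0$ this gives $W^{t+1}\leq c\,W^t$ for computing calls. For uncomputing calls, the note after Lemma~\ref{lemma:progress-oracle} says to swap the roles of $\rho^t$ and $\rho^{t+1}$, so the analogous argument requires the symmetric inequality $\Gamma\preceq c'\Gamma_x$ with $c'=\max_x\norm{\Gamma^{1/2}\Gamma_x^{-1/2}}^2$. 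In both cases the per-query multiplicative factor is bounded by the maximum appearing in the denominator of the theorem.

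For the final value I would decompose the spectrum of $\Gamma$ using $V_\good$ and $V_\bad$: on $V_\good$ all eigenvalues are $\geq\lambda$, while on $V_\bad$ all eigenvalues are $\geq 1$ (because $\Gamma\succeq\I$). Hence $\Gamma\succeq\lambda\Pi_\good+\Pi_\bad$, which yields $W^T\geq\lambda\beta+(1-\beta)=1+(\lambda-1)\beta$ with $\beta=\Tr[\Pi_\good\rho^T]$. To lower-bound $\beta$, I would reuse the argument from the proof of Theorem~\ref{thm:new-AAM} almost verbatim: write $\ket{\Psi^T}=\frac{1}{\sqrt{|\Fset|}}\sum_f[\sqrt{1-\varepsilon}\ket{\psi_f,\junk_f}+\sqrt\varepsilon\ket{\error_f}]\ket f$, set $\ket\Psi=\frac{1}{\sqrt{|\Fset|}}\sum_f\ket{\psi_f,\junk_f}\ket f$, use $|\braket{\Psi}{\Psi^T}|\geq\sqrt{1-\varepsilon}$, split $\ket{\Psi^T}$ via $\Pi_\good,\Pi_\bad$ acting on $\FF$, and invoke the hypothesis $\Tr[\Pi_\bad(\target\circ M_\junk)]\leq\eta$ to conclude $\beta\geq(\sqrt{1-\varepsilon}-\sqrt\eta)^2$. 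This yields $W^T\geq 1+(\lambda-1)(\sqrt{1-\varepsilon}-\sqrt\eta)^2=K(\Gamma,\lambda,\varepsilon)$.

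Combining the three steps gives $K(\Gamma,\lambda,\varepsilon)\leq W^T\leq\max(c,c')^T\cdot W^0=\max(c,c')^T$, so $T\geq\log K/\log\max(c,c')$, which is exactly the claimed bound. I expect the main technical obstacle to be the passage from the spectral-norm hypothesis to the operator inequality $\Gamma_x\preceq c_x\Gamma$: one must be careful about invertibility of $\Gamma_x$ (supplied by Fact~\ref{fact:CP-map}) and about the order-reversing nature of inversion on positive-definite operators. Once that algebraic step is in place the rest of the proof is essentially parallel to the additive and hybrid templates, with linear combinations of progress values replaced by products and with the query count read off on a logarithmic scale.
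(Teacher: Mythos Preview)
Your proof is correct and follows the same three-step template as the paper's proof: $W^0=1$, a multiplicative per-query bound, and $W^T\geq K(\Gamma,\lambda,\varepsilon)$ via the same decomposition as in the hybrid method. The final-value argument is identical to the paper's (it explicitly refers back to Theorem~\ref{thm:new-AAM}).

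The one place where you diverge is the per-query step. The paper writes the ratio directly as
\[
\frac{W^{t+1}}{W^t}=\frac{\Tr\bigl[\Upsilon'^{1/2}\Upsilon^{-1/2}\,\Upsilon^{1/2}\tilde\rho^t\Upsilon^{1/2}\,\Upsilon^{-1/2}\Upsilon'^{1/2}\bigr]}{\Tr\bigl[\Upsilon^{1/2}\tilde\rho^t\Upsilon^{1/2}\bigr]}
\le\norm{\Upsilon'^{1/2}\Upsilon^{-1/2}}^2,
\]
i.e., a Rayleigh-quotient-style estimate on the positive operator $\Upsilon^{1/2}\tilde\rho^t\Upsilon^{1/2}$. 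You instead convert the spectral-norm hypothesis into the Loewner inequality $\Gamma_x\preceq c_x\Gamma$ (via $\Gamma_x^{1/2}\Gamma^{-1}\Gamma_x^{1/2}\preceq c_x\I$, conjugation, and operator-monotone inversion), push it through the direct sum to get $\Upsilon'\preceq c\,\Upsilon$, and pair this against $\tilde\rho^t\succeq 0$. Both routes are valid and yield the same constant; your version has the small advantage of isolating a clean operator inequality $\Gamma_x\preceq c_x\Gamma$ that can be reused, while the paper's version avoids having to argue that $\Gamma_x$ is invertible and that inversion reverses the Loewner order. Your justification of $\Gamma_x\succeq\I$ from Fact~\ref{fact:CP-map} and $\sum_y\Pi_y^x=\I$ is exactly what is needed to make the inversion step legitimate.
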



\begin{proof}
As done in the previous proof, the initial value of the progress function is $W^0=1$. 

In this case we do not bound the difference of the progress function between two queries, but its quotient. From Fact~\ref{fact:CP-map}, we note that $\Upsilon$ and $\Upsilon'$ are definite-positive. Then, using Lemma~\ref{lemma:progress-oracle}, we have
\begin{align*}
\frac{W^{t+1}}{W^{t}} &= \frac{\Tr\left[\Upsilon'\tilde\rho^{t}\right]}{\Tr\left[\Upsilon\tilde\rho^{t}\right]} = \frac{\Tr\left[\Upsilon'^{1/2}\Upsilon^{-1/2}\Upsilon^{1/2}\tilde\rho^{t}\Upsilon^{1/2}\Upsilon^{-1/2}\Upsilon'^{1/2}\right]}{\Tr\left[\Upsilon^{1/2}\tilde\rho^{t}\Upsilon^{1/2}\right]} \\
& \leq \norm{\Upsilon'^{1/2}\Upsilon^{-1/2}}^{2} = \norm{\bigoplus_{x,y}\Gamma_{x}^{1/2}\Gamma^{-1/2}}^{2} = \max_{x} \norm{\Gamma_{x}^{1/2}\Gamma^{-1/2}}^{2},
\end{align*}
If the $(t+1)$-th oracle call is of \emph{uncomputing} type, we similarly obtain 
$\frac{W^{t+1}}{W^{t}}  \leq \max_{x} \norm{\Gamma^{1/2}\Gamma_{x}^{-1/2}}^{2}$.

The proof of the upper bound of $W^T$ is similar to the one in Theorem~\ref{thm:new-AAM} up to eq.~(\ref{eq:progress-good-bad}), where we now have $W^{T} =\beta\Tr(\Gamma\rho_\good)+(1-\beta)\Tr(\Gamma\rho_\bad) \geq 1+(\lambda-1)\beta  \geq 1+(\lambda-1)(\sqrt{1-\varepsilon}-\sqrt{\eta})^2$.

The lower-bound on the query complexity is a consequence of 
\begin{align*}\left(\max \left\{\norm{\Gamma_{x}^{1/2}\Gamma^{-1/2}}^2
,\norm{\Gamma^{1/2}\Gamma_{x}^{-1/2}}^2:\forall x\in\II\right\}\right)^T \geq K(\Gamma,\lambda,\varepsilon).
\end{align*}
\end{proof}
Note that since the condition on the adversary matrix is very similar as for the hybrid adversary, we can also use an analogue of Lemma~\ref{lem:new-additive-classical} to choose the adversary matrix in the special case of classical problems. This implies that our method is an extension of \v{S}palek's original multiplicative adversary method~\cite{Spa08}.

\section{Comparison of the adversary methods}
\begin{definition}
We define the \emph{additive adversary bound} and the \emph{hybrid adversary bound} respectively as
\begin{align*}
\NADV_\varepsilon(\PP)=\max_{\aGamma} \frac{1-C(\varepsilon)}{\max_{x}\norm{\aGamma-\aGamma_x}} \qquad \text{and} \qquad \ADV_\varepsilon(\PP)=\max_{\aGamma,\alambda<1} \frac{\tilde{K}(\aGamma,\alambda,\varepsilon)}{\max_{x}\norm{\aGamma-\aGamma_x}}
\end{align*}
where, for $\NADV$, the maximum is taken over additive adversary matrices $\aGamma$ such that $\Tr\left[\aGamma(\target \circ M)\right]=0$ for any junk matrix $M$, while for $\ADV$ it is taken over all additive adversary matrices.
Finally, we define the \emph{multiplicative adversary bound} as
\begin{align*}
 \MADV_\varepsilon(\PP)=\sup_{\lambda>1} \MADV_\varepsilon^{(\lambda)}(\PP) 
 \ \ \text{where} \ \
 \MADV_\varepsilon^{(\lambda)}(\PP)=\sup_{\Gamma} \tfrac{\log K(\Gamma,\lambda,\varepsilon)}{\log
\max \left\{\norm{\Gamma_{x}^{1/2}\Gamma^{-1/2}}^2
,\norm{\Gamma^{1/2}\Gamma_{x}^{-1/2}}^2:\forall x\in\II\right\}},
\end{align*}
and the supremum is taken over all multiplicative adversary matrices $\Gamma$.
\end{definition}

In this section, we show that the three methods are progressively stronger (the two inequalities are proved independently in the next two sections).
\begin{thm}\label{thm:comparison-methods}
$\MADV_{\varepsilon}(\PP) \geq \ADV_\varepsilon(\PP)\geq \NADV_\varepsilon(\PP)/60$.
\end{thm}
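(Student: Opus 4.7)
The plan is to prove each of the two inequalities by converting an adversary matrix that witnesses one bound into an adversary matrix that witnesses the stronger bound, using in both cases a simple one-parameter algebraic modification of the given matrix. The denominators will stay essentially the same, while the numerators transform in a controlled way.

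For $\MADV_\varepsilon \geq \ADV_\varepsilon$, I would start from any additive adversary matrix $\aGamma$ and threshold $\alambda<1$ witnessing the hybrid bound, and consider the family of multiplicative adversary matrices $\Gamma_c := \I + c(\I - \aGamma)$ for $c > 0$, paired with $\lambda_c := 1 + c(1-\alambda)$. Since $\I - \aGamma \succeq 0$, we have $\Gamma_c \succeq \I$ and $\Gamma_c\ket\delta = \ket\delta$, so $\Gamma_c$ is a valid multiplicative adversary with $\lambda_c>1$. Three short computations then tie the bounds together: (i) the eigenspace of $\Gamma_c$ below $\lambda_c$ equals the eigenspace of $\aGamma$ above $\alambda$, so the same $\eta$ works, and $K(\Gamma_c,\lambda_c,\varepsilon) = 1 + c\,\tilde K(\aGamma,\alambda,\varepsilon)$; (ii) $\I\circ D_x = \I$ gives $\Gamma_{c,x} - \Gamma_c = -c(\aGamma_x - \aGamma)$; (iii) from $\Gamma_c,\Gamma_{c,x}\succeq \I$ (the latter via Fact~\ref{fact:CP-map}) and the Hermitian bound $\norm{\I+A}\leq 1+\norm{A}$, one gets $\max\bigl\{\norm{\Gamma_{c,x}^{1/2}\Gamma_c^{-1/2}}^2,\norm{\Gamma_c^{1/2}\Gamma_{c,x}^{-1/2}}^2\bigr\}\leq 1 + c\max_x\norm{\aGamma_x-\aGamma}$. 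Plugging into Theorem~\ref{thm:MAM}, the ratio $\log(1+c\tilde K)/\log(1+c\max_x\norm{\aGamma_x-\aGamma})$ is a lower bound on $\MADV_\varepsilon$ for every $c>0$, and letting $c\to 0^+$ it converges to $\tilde K(\aGamma,\alambda,\varepsilon)/\max_x\norm{\aGamma_x-\aGamma}$. Maximizing over $\aGamma,\alambda$ yields $\MADV_\varepsilon\geq\ADV_\varepsilon$.

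For $\ADV_\varepsilon \geq \NADV_\varepsilon/60$, it suffices to work in the range $\varepsilon<1/5$, since otherwise $1-C(\varepsilon)\leq 0$ and the right-hand side is non-positive. Given $\aGamma$ witnessing $\NADV_\varepsilon$, I would feed the same matrix into Theorem~\ref{thm:new-AAM} with the fixed threshold $\alambda^* := 2^{2/3}-1$. The key observation is that the constraint $\Tr[\aGamma(\target\circ M)]=0$ automatically controls $\eta$: writing $\aGamma = \sum_i \mu_i P_i$ with $\mu_i\in[-1,1]$ and $p_i := \Tr[P_i\rho]\geq 0$ for the density matrix $\rho = \target\circ M$ (PSD by the Schur product theorem), the identity $\sum_i \mu_i p_i = 0$ combined with $\mu_i\geq -1$ gives $0 \geq \alambda^*\,p_\bad - p_\good = (1+\alambda^*)p_\bad - 1$, whence $p_\bad \leq 1/(1+\alambda^*) = 2^{-2/3}$. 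Since $1-\varepsilon > 4/5 > 2^{-2/3}$, the hypothesis $\eta\leq 1-\varepsilon$ is satisfied with $\eta = 2^{-2/3}$, so that $\tilde K(\aGamma,\alambda^*,\varepsilon) = (2-2^{2/3})(\sqrt{1-\varepsilon}-2^{-1/3})^2$.

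The remaining step is to establish $\tilde K(\aGamma,\alambda^*,\varepsilon)\geq(1-C(\varepsilon))/60$ on $[0,1/5)$, equivalently, using $1-C(\varepsilon)=(\sqrt{1-\varepsilon}-\sqrt\varepsilon)^2$, to show
\[
(2-2^{2/3})\bigl(\sqrt{1-\varepsilon}-2^{-1/3}\bigr)^2 \,\geq\, \tfrac{1}{60}\bigl(\sqrt{1-\varepsilon}-\sqrt\varepsilon\bigr)^2.
\]
The algebraic identity $(2-2^{2/3})(1-2^{-1/3})^2 = (2^{1/3}-1)^3$ handles $\varepsilon=0$, reducing it to $(2^{1/3}-1)^3\geq 1/60$, i.e., $120^{1/3}-60^{1/3}\geq 1$, which follows from $(1+60^{-1/3})^3 < 2$ by direct expansion. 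The main obstacle I expect is extending this endpoint bound to the whole interval: one needs to show that $r(\varepsilon) := (\sqrt{1-\varepsilon}-2^{-1/3})/(\sqrt{1-\varepsilon}-\sqrt\varepsilon)$ attains its minimum on $[0,1/5)$ at $\varepsilon=0$. This is a single-variable calculus check (the interior critical point is determined by $\sqrt{1-\varepsilon}+\sqrt\varepsilon = 2^{1/3}$, so one merely compares endpoint values), but it is the only non-trivial analytic step in the whole argument.
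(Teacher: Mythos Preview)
Your approach mirrors the paper's closely. For $\MADV_\varepsilon\geq\ADV_\varepsilon$ you use exactly the same one-parameter family $\Gamma_c=\I+c(\I-\aGamma)$ with threshold $\lambda_c=1+c(1-\alambda)$ and take $c\to 0^+$; your direct bound on the denominator (via $\Gamma_c,\Gamma_{c,x}\succeq\I$ and $\norm{\I+A}\leq 1+\norm{A}$) is in fact a bit cleaner than the paper's Taylor-expansion argument, but the content is the same. For $\ADV_\varepsilon\geq\NADV_\varepsilon/60$ you reproduce the paper's key step, namely that the constraint $\Tr[\aGamma(\target\circ M)]=0$ together with $-\I\preceq\aGamma$ forces $\Tr[\Pi_\bad(\target\circ M)]<1/(1+\alambda)$. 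The only difference is that the paper takes the $\varepsilon$-dependent threshold $\alambda=(4/(1-\varepsilon))^{1/3}-1$, while you fix $\alambda^*=2^{2/3}-1$ (the paper's value at $\varepsilon=0$) and then handle the $\varepsilon$-dependence by a single-variable calculus check on $r(\varepsilon)$.

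One algebraic slip to correct: the claimed identity $1-C(\varepsilon)=(\sqrt{1-\varepsilon}-\sqrt\varepsilon)^2$ is false, since the right-hand side equals $1-2\sqrt{\varepsilon(1-\varepsilon)}=1-C(\varepsilon)+\varepsilon$. This does not break your argument: the inequality you go on to prove,
\[
(2-2^{2/3})\bigl(\sqrt{1-\varepsilon}-2^{-1/3}\bigr)^2 \geq \tfrac{1}{60}\bigl(\sqrt{1-\varepsilon}-\sqrt\varepsilon\bigr)^2,
\]
is \emph{stronger} than what you need, and hence implies $\tilde K(\aGamma,\alambda^*,\varepsilon)\geq(1-C(\varepsilon))/60$. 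Just replace ``equivalently, using $1-C(\varepsilon)=(\sqrt{1-\varepsilon}-\sqrt\varepsilon)^2$'' by the observation $1-C(\varepsilon)\leq(\sqrt{1-\varepsilon}-\sqrt\varepsilon)^2$, and the rest of your endpoint-comparison argument goes through unchanged.
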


\subsection{Additive versus hybrid}
We show that the hybrid adversary method is always at least as strong as the original additive one (up to a constant factor).
\begin{lem}\label{lem:new-vs-original-AAM}
$\ADV_\varepsilon(\PP)\geq \NADV_\varepsilon(\PP)/60$.
\end{lem}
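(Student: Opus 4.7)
The plan is to reuse any near-optimal additive adversary matrix $\aGamma^{\ast}$ realizing $\NADV_\varepsilon(\PP)$ directly as a hybrid adversary matrix, with the fixed parameters $\alambda=1/2$ and $\eta=2/3$. Since the denominator $\max_x\norm{\aGamma^{\ast}-\aGamma^{\ast}_x}$ is common to the additive and hybrid bounds, the inequality will reduce to a scalar comparison between the two numerators. The case $\varepsilon\geq 1/5$ is vacuous, because $1-C(\varepsilon)\leq 0$ forces $\NADV_\varepsilon\leq 0$ while $\ADV_\varepsilon\geq 0$ is immediate (e.g.\ by the trivial choice $\aGamma=\proj{\delta}$); so I restrict attention to $\varepsilon<1/5$.

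The key technical step is to verify the hybrid hypothesis $\Tr[\Pi_\bad(\target\circ M)]\leq 2/3$ for every junk matrix $M$. First, $\sigma_M:=\target\circ M$ is a density matrix: it is positive semidefinite by the Schur product theorem (both $\target$ and $M$ are Gram matrices of unit vectors) and has unit trace because each of its $|\Fset|$ diagonal entries equals $1/|\Fset|$. Since $\Pi_\bad,\Pi_\good$ are spectral projectors of $\aGamma^{\ast}$, one has the block decomposition $\aGamma^{\ast}=\Pi_\bad\aGamma^{\ast}\Pi_\bad+\Pi_\good\aGamma^{\ast}\Pi_\good$ with $\aGamma^{\ast}\Pi_\bad\succeq\alambda\Pi_\bad$ (by definition of $V_\bad$) and $\aGamma^{\ast}\Pi_\good\succeq-\Pi_\good$ (from $-\I\preceq\aGamma^{\ast}$). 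Combined with the additive hypothesis $\Tr[\aGamma^{\ast}\sigma_M]=0$ and $\Tr\sigma_M=1$, this yields
\begin{align*}
 0 \;\geq\; \alambda\Tr[\Pi_\bad\sigma_M] \;-\; \Tr[\Pi_\good\sigma_M] \;=\; (1+\alambda)\Tr[\Pi_\bad\sigma_M] \;-\; 1,
\end{align*}
so $\Tr[\Pi_\bad\sigma_M]\leq 1/(1+\alambda)=2/3$. The auxiliary condition $\eta=2/3\leq 1-\varepsilon$ holds since $\varepsilon<1/5<1/3$, so Theorem~\ref{thm:new-AAM} applies.

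Plugging in, $\ADV_\varepsilon(\PP)\geq \tfrac{1}{2}(\sqrt{1-\varepsilon}-\sqrt{2/3})^2/\max_x\norm{\aGamma^{\ast}-\aGamma^{\ast}_x}$, so the ratio $\ADV_\varepsilon/\NADV_\varepsilon$ reduces to $R(\varepsilon):=\tfrac{1}{2}(\sqrt{1-\varepsilon}-\sqrt{2/3})^2/[1-C(\varepsilon)]$. Using the factorization $1-C(\varepsilon)=\sqrt{1-\varepsilon}(\sqrt{1-\varepsilon}-2\sqrt\varepsilon)$, the denominator vanishes at $\varepsilon=1/5$ while the numerator stays strictly positive, so $R(\varepsilon)\to\infty$ as $\varepsilon\to 1/5$. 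At $\varepsilon=0$, $R(0)=(1-\sqrt{2/3})^2/2$, and the inequality $R(0)\geq 1/60$ reduces, after isolating the surd and squaring, to $2400\leq 2401$, which holds.

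The main obstacle is upgrading the two endpoint checks to the uniform bound $R(\varepsilon)\geq 1/60$ on all of $[0,1/5)$. A direct derivative computation shows that $R$ is monotonically increasing on this interval (the numerator shrinks only linearly in $\sqrt{1-\varepsilon}-\sqrt{2/3}$ while the denominator vanishes to first order at $\varepsilon=1/5$), which combined with $R(0)\geq 1/60$ settles the lemma. If one wishes to avoid the tight margin at $\varepsilon=0$ (where $R(0)$ exceeds $1/60$ by less than a percent), one may instead fix a slightly suboptimal threshold such as $\alambda=3/5$: this inflates $R(0)$ and yields the same constant $60$ with a comfortable numerical margin, removing the need for a delicate numerical check.
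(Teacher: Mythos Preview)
Your proposal is correct and follows essentially the same route as the paper: reuse the optimal additive adversary matrix as a hybrid one, and extract the hybrid hypothesis $\Tr[\Pi_\bad(\target\circ M)]\leq 1/(1+\alambda)$ from the additive hypothesis $\Tr[\aGamma(\target\circ M)]=0$ via the spectral splitting $\aGamma=\Pi_\bad\aGamma\Pi_\bad+\Pi_\good\aGamma\Pi_\good$. This key step is identical in both arguments (it is stated in the paper as a separate lemma).

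The only substantive difference is the choice of threshold. The paper takes the $\varepsilon$-dependent value $\alambda=(4/(1-\varepsilon))^{1/3}-1$, which after substitution collapses the lower bound on $\tilde K$ to the closed form $\bigl((2-2\varepsilon)^{1/3}-1\bigr)^3$; you instead fix $\alambda=1/2$. Both proofs then end with a one-variable scalar inequality on $[0,1/5)$ that is asserted rather than fully verified. Your handling of this final step is the weakest part: the heuristic that ``the numerator shrinks only linearly in $\sqrt{1-\varepsilon}-\sqrt{2/3}$ while the denominator vanishes to first order at $\varepsilon=1/5$'' is not a proof of monotonicity on the whole interval (the numerator is in fact quadratic in that quantity, and endpoint behavior does not control interior critical points), and the fallback suggestion $\alambda=3/5$ still leaves the same interval-wide inequality to be checked. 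The paper's $\varepsilon$-dependent choice buys a cleaner explicit function to analyze, though it too leaves the final comparison $((2-2\varepsilon)^{1/3}-1)^3\geq (1-C(\varepsilon))/60$ as a stated fact.
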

The proof of this lemma relies on the following.
\begin{lem}\label{lem:new-vs-original-AAM}
 Let $\aGamma$ be an additive adversary method such that $\Tr\left[\aGamma(\target \circ M)\right]=0$ for any junk matrix $M$. Then, for any $\alambda,\varepsilon$ such that $\frac{\varepsilon}{1-\varepsilon}\leq\alambda\leq 1$, we have
\begin{align*}
 \tilde{K}(\aGamma,\alambda,\varepsilon)>(1-\alambda)\left(\sqrt{1-\varepsilon}-\frac{1}{\sqrt{1+\alambda}}\right)^2.
\end{align*}
\end{lem}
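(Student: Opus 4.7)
The plan is to exploit the hypothesis $\Tr[\aGamma(\target\circ M)]=0$ by splitting $\aGamma$ into its $V_\good$ and $V_\bad$ parts, extracting a bound on $\Tr[\Pi_\bad(\target\circ M)]$, and then plugging a sufficiently small $\eta$ into the definition of $\tilde K$.

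First I would observe that for every junk matrix $M$ the operator $\rho:=\target\circ M$ is in fact a density matrix: by the Schur product theorem it is positive semidefinite (both $\target$ and $M$ are PSD), and its trace is $1$ because the diagonal of $\target$ is $\tfrac{1}{|\Fset|}I$ and the diagonal of $M$ is all-ones (by definition of a junk matrix in both the coherent and non-coherent cases). Therefore $\Tr[\Pi_\good \rho]+\Tr[\Pi_\bad \rho]=1$.

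Next, I would decompose $\aGamma=\Pi_\good\aGamma\Pi_\good+\Pi_\bad\aGamma\Pi_\bad$ (the cross terms vanish because $V_\good$ and $V_\bad$ are complementary invariant subspaces of the Hermitian matrix $\aGamma$). On $V_\good$ we have $\aGamma|_{V_\good}\succeq -I$, while on $V_\bad$ we have $\aGamma|_{V_\bad}\succeq \alambda' I$ for some $\alambda'>\alambda$ — namely the smallest eigenvalue of $\aGamma$ restricted to $V_\bad$ (which exists and is strictly larger than $\alambda$ as soon as $V_\bad\neq\{0\}$; the case $\Pi_\bad=0$ is trivial since then $\eta=0$ and the conclusion is immediate). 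Applying both spectral bounds to the identity $0=\Tr[\aGamma\rho]$ gives
\begin{align*}
0\;=\;\Tr[\aGamma\Pi_\good\rho\Pi_\good]+\Tr[\aGamma\Pi_\bad\rho\Pi_\bad]\;\geq\; -\Tr[\Pi_\good\rho]+\alambda'\,\Tr[\Pi_\bad\rho]\;=\;-1+(1+\alambda')\Tr[\Pi_\bad\rho],
\end{align*}
so $\Tr[\Pi_\bad(\target\circ M)]\leq \tfrac{1}{1+\alambda'}<\tfrac{1}{1+\alambda}$ for every junk matrix $M$.

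This means we can instantiate Theorem~\ref{thm:new-AAM} with $\eta=\tfrac{1}{1+\alambda'}$. The constraint $\eta\leq 1-\varepsilon$ holds because $\tfrac{1}{1+\alambda}\leq 1-\varepsilon$ is equivalent to $\alambda\geq\tfrac{\varepsilon}{1-\varepsilon}$, which is the hypothesis. Plugging into the definition of $\tilde K$ and using $\sqrt{1-\varepsilon}\geq \tfrac{1}{\sqrt{1+\alambda}}$ (same hypothesis) so that the quantity inside the square is nonnegative, I get
\begin{align*}
\tilde K(\aGamma,\alambda,\varepsilon)\;\geq\;(1-\alambda)\!\left(\sqrt{1-\varepsilon}-\tfrac{1}{\sqrt{1+\alambda'}}\right)^{\!2}\!>\,(1-\alambha)\!\left(\sqrt{1-\varepsilon}-\tfrac{1}{\sqrt{1+\alambda}}\right)^{\!2}\!,
\end{align*}
where the strict inequality in the second step follows from $\alambda'>\alambda$.

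The only conceptual step is recognizing that the orthogonality condition $\Tr[\aGamma(\target\circ M)]=0$, combined with $\aGamma\succeq-I$, prevents $\target\circ M$ from concentrating more than a $\tfrac{1}{1+\alambda}$-fraction of its weight on the eigenspaces of $\aGamma$ with eigenvalue $>\alambda$. There is no serious obstacle; the minor technical point is being careful that the strict inequality in the lemma is produced by the gap $\alambda'>\alambda$ (or by $V_\bad=\{0\}$), rather than trying to extract strictness from the saturation bound $\tfrac{1}{1+\alambda}$ directly.
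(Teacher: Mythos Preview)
Your proof is correct and follows essentially the same approach as the paper's. Both arguments boil down to the same inequality: writing $0=\Tr[\aGamma(\target\circ M)]$, splitting along $V_\good\oplus V_\bad$, and using $\aGamma\succeq -\I$ on $V_\good$ together with $\aGamma\succ\alambda\I$ on $V_\bad$ to force $\Tr[\Pi_\bad(\target\circ M)]<1/(1+\alambda)$.

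There are two small cosmetic differences worth noting. First, the paper argues via the purification $\ket{\psitarget_M}$ of $\target\circ M$ (introduced in the preceding lemma), whereas you work directly with $\rho=\target\circ M$ as a density matrix, invoking the Schur product theorem for positivity; your route is slightly more self-contained. Second, you make the strictness uniform in $M$ by explicitly naming the smallest bad eigenvalue $\alambda'>\alambda$, which is a clean way to avoid any appeal to compactness of the set of junk matrices when passing from ``$<1/(1+\alambda)$ for every $M$'' to a strict bound on $\tilde K$. Neither difference is substantive.

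One trivial point: in your last display you have the typo \texttt{\textbackslash alambha} for \texttt{\textbackslash alambda}, which will not compile.
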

\begin{proof}
 Let $V_\bad$ be the direct sum of eigenspaces of $\aGamma$ with eigenvalue strictly larger than $\alambda$. From the definition of $\tilde{K}(\aGamma,\alambda,\varepsilon)$, it suffices to show that
$\Tr\left[\Pi_\bad(\target \circ M)\right]< 1/(1+\alambda)$ for any junk matrix $M$.
Let 
$p_\bad=\Tr\left[\Pi_\bad(\target \circ M)\right]=\norm{\Pi_\bad\ket{\psitarget_M}}^2 $,
where $\ket{\psitarget_M}$ is defined as above. Let us also define the states  $\ket{\psi_\bad}=\Pi_\bad\ket{\psitarget_M}/\sqrt{p_\bad}$ and $\ket{\psi_\good}=\Pi_\good\ket{\psitarget_M}/\sqrt{1-p_\bad}$, so that $\ket{\psitarget_M}=\sqrt{p_\bad}\ket{\psi_\bad}+\sqrt{1-p_\bad}\ket{\psi_\good}$. From the properties of the additive adversary matrix $\aGamma$, we have
\begin{align*}
 0&=\Tr\left[\aGamma(\target\circ M)\right]
=\Tr\left[\aGamma\proj{\psitarget_M}\right] =p_\bad\Tr\left[\aGamma\proj{\psi_\bad}\right]+(1-p_\bad)\Tr\left[\aGamma\proj{\psi_\good}\right]\\
&> p_\bad \alambda+(1-p_\bad)(-1)= (\alambda+1)p_\bad-1.
\end{align*}
This implies that $p_\bad< 1/(1+\alambda)$.
\end{proof}

\begin{proof}[Proof of Lemma~\ref{lem:new-vs-original-AAM}]
 This is immediate for $\varepsilon\geq 1/5$ as in this case, we have $\NADV_\varepsilon(\PP)=0$. Therefore, it suffices to show that for any additive adversary matrix $\aGamma$ and any $\varepsilon<1/5$, we have $\max_{\alambda}\tilde{K}(\aGamma,\alambda,\varepsilon)\geq (1-\varepsilon-2\sqrt{\varepsilon(1-\varepsilon)})/60$. Let
\begin{align*}
 \alambda=\left(\frac{4}{1-\varepsilon}\right)^{1/3}-1,
\end{align*}
and note that $\frac{\varepsilon}{1-\varepsilon}\leq\alambda\leq 1$ when $0\leq\varepsilon\leq 1/2$.
By Lemma~\ref{lem:new-vs-original-AAM}, we then have
\begin{align*}
 \max_{\alambda}\tilde{K}(\aGamma,\alambda,\varepsilon)
\geq 1-2\varepsilon-3(2-2\varepsilon)^{2/3}+3(2-2\varepsilon)^{1/3}
\geq (1-\varepsilon-2\sqrt{\varepsilon(1-\varepsilon)})/60,
\end{align*}
for any $0\leq\varepsilon\leq 1/2$.
\end{proof}

\subsection{Hybrid versus multiplicative}
We now show that the multiplicative adversary method is as always at least as strong as the hybrid one.

\begin{lem}\label{lem:new-vs-MAM}
$\lim_{\lambda\to 1}\MADV_\varepsilon^{(\lambda)}(\PP)\geq \ADV_\varepsilon(\PP)$.
\end{lem}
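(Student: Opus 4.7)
The plan is to take any additive adversary matrix $\aGamma$ with threshold $\alambda<1$ (as used in the hybrid method) and build from it a one-parameter family of multiplicative adversary matrices $\Gamma$, indexed by a small parameter $\mu>0$, such that the multiplicative lower bound recovers $\tilde K(\aGamma,\alambda,\varepsilon)/\max_x\norm{\aGamma-\aGamma_x}$ in the limit $\mu\to 0^+$. Concretely I would set
\[
\Gamma \ =\ (1+\mu)\I - \mu\aGamma, \qquad \lambda \ =\ 1+\mu(1-\alambda),
\]
so that $\lambda>1$, and then check that $\Gamma$ is a valid multiplicative adversary matrix: Hermiticity and $\Gamma\ket\delta=\ket\delta$ are immediate from the corresponding properties of $\aGamma$, and $\Gamma\succeq\I$ follows from $\aGamma\preceq\I$. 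Crucially, any eigenvector of $\aGamma$ with eigenvalue $\tilde a$ is also an eigenvector of $\Gamma$ with eigenvalue $1+\mu(1-\tilde a)$, so the multiplicative bad subspace $\{v:\Gamma v=\rho v,\ \rho<\lambda\}$ coincides \emph{as a subspace} with the additive bad subspace $\{v:\aGamma v=\tilde a v,\ \tilde a>\alambda\}$; hence the hypothesis on $\Pi_\bad$ transfers with the same $\eta$.

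The next step is to translate the spectral quantities of the multiplicative bound into their additive counterparts. Since the diagonal of $D_x$ is all ones, $\I\circ D_x=\I$, and therefore $\Gamma_x=(1+\mu)\I-\mu\aGamma_x$ and $\Gamma_x-\Gamma=-\mu(\aGamma_x-\aGamma)$. I would then write
\[
\norm{\Gamma_x^{1/2}\Gamma^{-1/2}}^2 \ =\ \norm{\Gamma^{-1/2}\Gamma_x\Gamma^{-1/2}} \ =\ \norm{\I+\mu\,\Gamma^{-1/2}(\aGamma-\aGamma_x)\Gamma^{-1/2}},
\]
and use $\Gamma\succeq\I\Rightarrow\Gamma^{-1}\preceq\I$ to obtain the clean estimate $\norm{\Gamma_x^{1/2}\Gamma^{-1/2}}^2\le 1+\mu\norm{\aGamma-\aGamma_x}$. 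A symmetric argument based on $\Gamma_x\succeq\I$ (which follows from $\aGamma_x\preceq\I$, itself a consequence of Fact~\ref{fact:CP-map} applied to $\I-\aGamma$) bounds $\norm{\Gamma^{1/2}\Gamma_x^{-1/2}}^2$ in the same way. On the numerator side, $K(\Gamma,\lambda,\varepsilon)=1+(\lambda-1)(\sqrt{1-\varepsilon}-\sqrt{\eta})^2=1+\mu\,\tilde K(\aGamma,\alambda,\varepsilon)$ by the choice of $\lambda$.

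Assembling these pieces, Theorem~\ref{thm:MAM} gives
\[
\MADV_\varepsilon^{(\lambda)}(\PP) \ \geq\ \frac{\log\bigl(1+\mu\,\tilde K(\aGamma,\alambda,\varepsilon)\bigr)}{\log\bigl(1+\mu\max_x\norm{\aGamma-\aGamma_x}\bigr)}.
\]
Since both logarithms are linear in $\mu$ to leading order, sending $\mu\to 0^+$ (equivalently $\lambda\to 1^+$) makes the ratio converge to $\tilde K(\aGamma,\alambda,\varepsilon)/\max_x\norm{\aGamma-\aGamma_x}$. Taking the supremum over admissible $\aGamma$ and $\alambda<1$ then yields $\liminf_{\lambda\to 1^+}\MADV_\varepsilon^{(\lambda)}(\PP)\geq\ADV_\varepsilon(\PP)$, which is the desired statement. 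The main obstacle is the spectral-norm sandwich estimate: a naive Taylor expansion of $\Gamma^{-1/2}$ and $\Gamma_x^{1/2}$ would leave behind $O(\mu^2)$ remainders whose operator norms would need to be controlled uniformly in the (possibly very large) dimension, whereas the $\Gamma^{-1/2}(\cdot)\Gamma^{-1/2}$ sandwich sidesteps this entirely by producing an exact inequality valid for every $\mu>0$.
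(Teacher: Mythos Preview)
Your argument is correct and uses the same construction as the paper: $\Gamma=(1+\mu)\I-\mu\aGamma$ (written there as $\I+\gamma(\I-\aGamma)$) with threshold $\lambda=1+\mu(1-\alambda)$, then let $\mu\to 0$. The only genuine difference is how the denominator is controlled. The paper does precisely the Taylor expansion you flagged as the ``naive'' route, writing $\norm{\Gamma_x^{1/2}\Gamma^{-1/2}}^2=\norm{\I+\tfrac{\mu}{2}(\aGamma-\aGamma_x)}^2+O(\mu^2)$ and similarly for the inverse, then swapping $\lim$ and $\max_x$ at the end; since the dimension is fixed and finite this is harmless, so your concern about uniform control of the remainder is not actually an issue here. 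Your sandwich identity $\norm{\Gamma_x^{1/2}\Gamma^{-1/2}}^2=\norm{\Gamma^{-1/2}\Gamma_x\Gamma^{-1/2}}$ combined with $\Gamma,\Gamma_x\succeq\I$ is nonetheless a cleaner way to get there: it produces the exact inequality $\leq 1+\mu\norm{\aGamma-\aGamma_x}$ for every $\mu>0$, avoids any asymptotic error terms, and makes the passage to the limit a one-liner.
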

\begin{proof}
 Let $\aGamma$ be the additive adversary matrix achieving $\ADV_\varepsilon(\PP)$.
Therefore, we have
\begin{align*}
 \ADV_\varepsilon(\PP)=\frac{\tilde{K}(\aGamma,\alambda,\varepsilon)}{\max_{x}\norm{\aGamma-\aGamma_x}}.
\end{align*}
Let $\Gamma(\gamma)=\I+\gamma(\I-\aGamma)$. Since $\aGamma\ket{\delta}=\norm{\aGamma}=1$, we see that for any $\gamma>0$, $\Gamma(\gamma)$ is definite positive with $\Gamma(\gamma)\succeq \I$ and $\Gamma(\gamma)\ket{\delta}=1$,
therefore it is a valid multiplicative adversary matrix. Moreover, $\Gamma$ has eigenvalue at least $\lambda=1+\gamma(1-\alambda)$ over $V_\good$. Therefore, $K(\Gamma(\gamma),\lambda(\gamma),\varepsilon)=1+\gamma \tilde{K}(\aGamma,\alambda,\varepsilon)$ and, by definition of the multiplicative adversary bound,
\begin{align*}
 \MADV_\varepsilon(\PP)\geq\sup_{\gamma>0} \frac{\ln \left[1+\gamma \tilde{K}(\aGamma,\alambda,\varepsilon))\right]}{\ln
\max \left\{\norm{\Gamma_{x}^{1/2}(\gamma)\Gamma^{-1/2}(\gamma)}^2
,\norm{\Gamma^{1/2}(\gamma)\Gamma_{x}^{-1/2}(\gamma)}^2:\forall x\in\II\right\}
}.
\end{align*}
We show that in the limit $\gamma\xrightarrow[>]{} 0$, the argument of the supremum is just $\ADV_\varepsilon(\PP)$, which implies the lemma. For the numerator, we immediately have
\begin{align*}
 \ln \left[1+\gamma \tilde{K}(\aGamma,\lambda,\varepsilon))\right]= \gamma \tilde{K}(\aGamma,\lambda,\varepsilon)+\OO(\gamma^2).
\end{align*}
Also, since $\Gamma_x(\gamma)=\I+\gamma(\I-\aGamma_x)$, we have
\begin{align*}
\norm{\Gamma_{x}^{1/2}(\gamma)\Gamma^{-1/2}(\gamma)}^2
&=\norm{\I+\frac{\gamma}{2}(\aGamma-\aGamma_x)}^2+\OO(\gamma^2),\\
\norm{\Gamma^{1/2}(\gamma)\Gamma_{x}^{-1/2}(\gamma)}^2
&=\norm{\I-\frac{\gamma}{2}(\aGamma-\aGamma_x)}^2+\OO(\gamma^2).
\end{align*}
Therefore, we have for the denominator
\begin{align*}
 L(\gamma,x) \stackrel{\rm def}{=} \ln
\max \left\{\norm{\Gamma_{x}^{1/2}(\gamma)\Gamma^{-1/2}(\gamma)}^2
,\norm{\Gamma^{1/2}(\gamma)\Gamma_{x}^{-1/2}(\gamma)}^2\right\}
&=\gamma\norm{\aGamma-\aGamma_x}+O(\gamma^2).
\end{align*}
Since $\lim_{\gamma \rightarrow 0}L(\gamma,x)$ exists for all $x$ and there are only a finite number of possible $x$, we can swap lim and max, which finally implies that:
\begin{align*}
 \lim_{\gamma\to 0} \frac{\ln \left[1+\gamma \tilde{K}(\aGamma,\lambda,\varepsilon))\right]}{\ln
\max \left\{\norm{\Gamma_{x}^{1/2}(\gamma)\Gamma^{-1/2}(\gamma)}^2
,\norm{\Gamma^{1/2}(\gamma)\Gamma_{x}^{-1/2}(\gamma)}^2:\forall x\in\II\right\}
}
&= \ADV_\varepsilon(\PP).
\end{align*}
\end{proof}

\section{Strong direct product theorem}

In this section we extend \v{S}palek's strong direct product theorem~\cite{Spa08} to quantum state generation problems.
We prove that for any problem which accepts a multiplicative adversary bound $\MADV_\varepsilon^{(\lambda)}(\PP)$, if one wants to solve $\PP^{(k)}$, i.e., $k$ independent instances of $\PP$, using less than $k/10$ times the number of queries necessary to solve one instance with error $\varepsilon$, then the success probability for $\PP^{(k)}$ is exponentially small in $k$. Let us note that a similar theorem was recently proved for the polynomial method~\cite{She10}.

\begin{thm}[Strong direct product]
\label{thm:SDPT}
For any problem $\PP$ and $\lambda>1$,
there exist a constant $0<c<1$ and an integer $k_0 >0$ such that, for any
$k>k_0$, we have $\MADV_{1-c^k}^{(\lambda)}(\PP^{(k)}) \geq \frac{k}{10}\cdot \MADV_{\varepsilon}^{(\lambda)}(\PP)$.
\end{thm}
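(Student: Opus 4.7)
The plan is to lift the optimal single-instance adversary $\Gamma^\ast$ achieving $\MADV_\varepsilon^{(\lambda)}(\PP)$ to the product problem $\PP^{(k)}$ via the tensor product $\Gamma^{(k)}=(\Gamma^\ast)^{\otimes k}$. Admissibility is immediate: $\Gamma^{(k)}$ is Hermitian, $\Gamma^{(k)}\succeq\I$ follows from tensoring matrices $\succeq\I$, and $\Gamma^{(k)}\ket{\delta^{(k)}}=\ket{\delta^{(k)}}$ because $\ket{\delta^{(k)}}=\ket\delta^{\otimes k}$ and $\Gamma^\ast\ket\delta=\ket\delta$. A query to $\PP^{(k)}$ is labelled by $(i,x)\in[k]\times\II$ and touches only the $i$-th factor, so $\Gamma^{(k)}_{(i,x)}=(\Gamma^\ast)^{\otimes(i-1)}\otimes\Gamma^\ast_x\otimes(\Gamma^\ast)^{\otimes(k-i)}$. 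Multiplicativity of the spectral norm on tensor products, together with cancellation of identity factors coming from $(\Gamma^{(k)})^{-1/2}$, gives
\begin{align*}
\norm{(\Gamma^{(k)}_{(i,x)})^{1/2}(\Gamma^{(k)})^{-1/2}}^2=\norm{(\Gamma^\ast_x)^{1/2}(\Gamma^\ast)^{-1/2}}^2,
\end{align*}
and identically for the reciprocal norm, so the denominator of the $\MADV^{(\lambda)}$-ratio for $\PP^{(k)}$ equals the one for $\Gamma^\ast$ on $\PP$. The proof thus reduces to establishing $\log K^{(k)}\geq(k/10)\log K(\Gamma^\ast,\lambda,\varepsilon)$.

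The numerator is amplified by exploiting that $\Gamma^\ast\succeq\I$: any eigenvector of $\Gamma^{(k)}$ having at least $m$ good coordinates (eigenvalue $\geq\lambda$ in that factor) has total eigenvalue $\geq\lambda^m$. Choosing $m=\lceil k/2\rceil$, the good subspace of $\Gamma^{(k)}$ thus has minimum eigenvalue $\lambda^{k/2}$, and re-running the argument of the proof of Theorem~\ref{thm:MAM} with this stronger minimum eigenvalue yields $W^T\geq 1+(\lambda^{k/2}-1)\beta^{(k)}$, where $\beta^{(k)}\geq(\sqrt{1-\varepsilon^{(k)}}-\sqrt{\eta^{(k)}})^2$ and $\eta^{(k)}$ is the target-times-junk overlap with the corresponding bad projector $\Pi_\bad^{(k)}$. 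The heart of the proof is the concentration bound $\eta^{(k)}\leq c_0^k$ for some $c_0<1$ depending only on $\eta$, which I would derive from the inductive lemma
\begin{align*}
\Tr\left[\Bigl(\bigotimes_{i\in S}\Pi_\bad\otimes\bigotimes_{i\notin S}\I\Bigr)(\target^{(k)}\circ M^{(k)})\right]\leq\eta^{|S|}\qquad\forall S\subseteq[k],\;\forall M^{(k)}.
\end{align*}
To prove this by induction on $|S|$, purify $\target^{(k)}\circ M^{(k)}$ using unit vectors $\ket{v_{\vec f}}$ with $\braket{v_{\vec f}}{v_{\vec f'}}=M^{(k)}_{\vec f,\vec f'}$, fix the coordinates outside $S$, and observe that the residual one-coordinate Gram system $(\braket{v_{(f_i,\vec f_{-i})}}{v_{(f_i',\vec f_{-i})}})_{f_i,f_i'}$ remains a valid junk matrix for $\PP$, so the single-instance hypothesis $\Tr[\Pi_\bad(\target\circ M')]\leq\eta$ applies at that coordinate. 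A Chernoff-style union bound over the $\binom{k}{k/2}$ patterns of bad coordinates yields $\eta^{(k)}\leq c_0^k$ provided $\eta$ is small enough, which can always be arranged by first amplifying $\Gamma^\ast$ to a fixed tensor power as in \v{S}palek's argument.

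Choosing $c\in(c_0,1)$ and setting $\varepsilon^{(k)}=1-c^k$ gives $\beta^{(k)}=\Omega(c^k)$, so
\begin{align*}
\log K^{(k)}\geq\log\!\bigl(1+(\lambda^{k/2}-1)\,\Omega(c^k)\bigr)=\Omega(k\log\lambda)
\end{align*}
provided $c$ is chosen close enough to $1$ that $\lambda^{k/2}c^k\to\infty$. Since $\log K(\Gamma^\ast,\lambda,\varepsilon)\leq\log\lambda$ is an absolute constant, dividing by the common denominator yields the target inequality $\MADV_{1-c^k}^{(\lambda)}(\PP^{(k)})\geq(k/10)\MADV_\varepsilon^{(\lambda)}(\PP)$ for all sufficiently large $k$. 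The main obstacle is the concentration lemma, since the junk matrix $M^{(k)}$ does not in general factor as a tensor product: the single-coordinate bound must be propagated inductively through a purification that preserves the junk structure at every step, and ensuring that the geometric decay by $\eta$ is not spoiled by a multiplicative blow-up when combined with the binomial factor $\binom{k}{k/2}$ is the nontrivial quantum ingredient.
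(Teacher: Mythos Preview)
Your high-level architecture matches the paper's: tensor the optimal adversary, show the denominator is unchanged because an oracle call touches only one tensor factor, and bound the bad-subspace overlap $\eta^{(k)}$ via the combinatorics of tensor-product eigenspaces. Two points of departure matter, and the second is a genuine gap.

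First, the paper sets the threshold at $\lambda'=\lambda^{k/10}$, not $\lambda^{k/2}$. With your choice the bad subspace of $\Gamma^{(k)}$ contains all patterns with fewer than $k/2$ good coordinates, so the union bound runs over $\sim 2^{k}$ terms against $\eta^{k/2}$; this forces $\eta<1/4$, and your fix of ``amplifying $\Gamma^\ast$ to a fixed tensor power'' does not help, since that changes the underlying problem from $\PP$ to $\PP^{(r)}$. The paper's threshold $k/10$ gives only $2^{H(1/10)k}$ patterns against $\eta^{9k/10}$, which works for $\eta\le 1/2$ and, after a short computation with $\zeta_0=(K/\lambda)^{1/10}$, yields exactly the constant $k/10$ in the statement.

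Second, and more seriously, your inductive lemma
\[
\Tr\!\Bigl[\bigl(\textstyle\bigotimes_{i\in S}\Pi_\bad\otimes\bigotimes_{i\notin S}\I\bigr)(\target^{(k)}\circ M^{(k)})\Bigr]\le\eta^{|S|}
\]
is the right target, but your sketch does not prove it for $|S|\ge 2$. Fixing $\vec f_{-i}$ only shows that each single-coordinate marginal is $\target\circ M'$ for a valid junk matrix $M'$; that is the case $|S|=1$. For the inductive step you would need that \emph{after} projecting one function register to $V_\bad$, the reduced state on another coordinate is still of the form $\target\circ M''$ with $M''$ a junk matrix. It is not: writing $\ket\Psi=|\Fset|^{-1/2}\sum_{f_2}\ket{\psi_{f_2}}\ket{W'_{f_2}}\ket{f_2}$ with $\FF_1$ inside $\ket{W'_{f_2}}$, the residual Gram matrix after applying $\Pi_\bad^{(1)}$ has entries $\bra{W'_{f_2}}\Pi_\bad^{(1)}\ket{W'_{f_2'}}$, whose diagonal $\|\Pi_\bad^{(1)}\ket{W'_{f_2}}\|^2$ depends on $f_2$ and is generically not $1$. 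So the single-instance hypothesis no longer applies at the second coordinate, and the induction stalls.

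The paper circumvents this entirely: it writes any junk matrix for $\PP^{(k)}$ as $M'=\sum_j m_j\bigotimes_i M_{i,j}$ with $\sum_j m_j=1$ and each $M_{i,j}$ a junk matrix for $\PP$; once this is granted, the product bound $\prod_i\Tr[\Pi_{v_i}(\target\circ M_{i,j})]\le\eta^{9k/10}$ is one line. In the coherent case the only junk matrix is the all-ones matrix and $J_{|\Fset|^k}=J_{|\Fset|}^{\otimes k}$, so the decomposition is immediate; in the non-coherent case the decomposition is asserted rather than argued, but it is what replaces your induction.

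In short: right blueprint, but the step you correctly flag as the ``nontrivial quantum ingredient'' is not actually carried out, and the threshold/constant bookkeeping also needs $k/10$ rather than $k/2$ to land on the stated inequality.
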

\begin{proof}
This proof closely follows the footsteps of the one by \v{S}palek in~\cite[Sec.~5]{Spa08}, which dealt with the special case of computing functions. Let us assume that the multiplicative adversary bound for $\PP$ with threshold $\lambda$ is obtained by the adversary matrix $\Gamma$. For $\PP^{(k)}$, we construct an adversary matrix $\Gamma' = \Gamma^{\otimes k}$ and set the threshold at value $\lambda' = \lambda^{\frac{k}{10}}$.

First of all we observe that $\max_{x\in\Sigma_I,i\in[k]} \norm{\Gamma_{x,i}^{\prime 1/2}\Gamma^{\prime -1/2}}=\max_{x\in\Sigma_I} \norm{\Gamma_x^{1/2}\Gamma^{-1/2}}$ where $i$ is the index of the queried oracle and $\Gamma'_{x,i} = \Gamma' \circ (\I^{i-1} \otimes D_x \otimes \I^{k-i})$. The proof follows by noting that for $x\in\Sigma_{I}$ and for all $i\in[k]$ we have
\begin{align*}
	\Gamma_{x,i}^{\prime 1/2}\Gamma^{\prime -1/2} &= \left({\Gamma^{1/2}}^{\otimes i-1} \otimes \Gamma_x^{1/2} \otimes {\Gamma^{1/2}}^{\otimes k-i}\right)\left({\Gamma^{-1/2}}^{\otimes i-1} \otimes \Gamma^{-1/2} \otimes {\Gamma^{-1/2}}^{\otimes k-i}\right) \\
		&= \I^{\otimes i-1} \otimes \Gamma_x^{1/2}\Gamma^{-1/2} \otimes \I^{\otimes k-i}.
\end{align*}
We can do the same calculation for the uncomputing oracle.

Let us now find an upper bound to $\max_M \Tr[\Pi'_\bad(\target\circ M)]$. The ``bad'' subspace $V'_\bad$ for the problem $\PP^{(k)}$ is defined by the direct sum of eigenspaces of $\Gamma^{\otimes k}$ with eigenvalue at most $\lambda'=\lambda^{k/10}$.
While, we do not have in general $V'_\bad \subset V_\bad^{\otimes k}$ nor  $V_\bad^{\otimes k} \subset V'_\bad$, we know that $V'_\bad$ is a subspace of the direct sum of spaces $\bigotimes_{i=1}^k V_{v_i}$ where $v\in\{\good,\bad\}^k$ and the number of good subspaces $|v|$ is at most $\frac{k}{10}$.
Indeed, any other eigenspace of $\Gamma'$ has eigenvalue at least $1^{9k/10}\lambda^{k/10}=\lambda'$ since the eigenvalues of $\Gamma$ are greater than $1$, and those associated to good subspaces are greater than $\lambda>1$. Therefore, the projector $\Pi'_\bad$ on the bad subspace is such that $\Pi'_\bad = \Pi'_\bad\cdot\left(\bigoplus_v \bigotimes_i \Pi_{v_i}\right)$. Let us consider a junk matrix $M'$ for $\PP^{(k)}$. Such a matrix can be written as $M'=\sum_j m_j \bigotimes_{i=1}^k M_{i,j}$ where $\sum_j m_j = 1$, and each $M_{i,j}$ is a junk matrix for $\PP$.
\begin{align*}
	\Tr[\Pi'_\bad (\target^{\otimes k}\circ M')] 
	&\leq \sum_{v,j} m_j \Tr\left[\bigotimes_{i=1}^k \Pi_{v_i}(\target\circ M_{ij})\right] \\
	&= \sum_{v,j} m_j \prod_i \Tr[\Pi_{v_i}(\target\circ M_{ij})] \\
	&\leq \sum_{v,j} m_j \eta^{9k/10} \\
	&\leq \eta^{9k/10} \sum_{v: |v|<k/10} 1 \\
	&\leq \eta^{2k/5} \qquad \text{ for } \eta \leq 1/2 \text{ and } k\geq 361
\end{align*}
We conclude that we can take $\eta' = \eta^{2k/5}$.
Let us also define the constants $\zeta = (\sqrt{1-\varepsilon}-\sqrt{\eta})^2$ and $\zeta_0 = \left(\frac{K(\Gamma,\lambda,\varepsilon)}{\lambda}\right)^{1/10} = \left(\frac{1+(\lambda-1)\zeta}{\lambda}\right)^{1/10} <1$ since $\lambda > 1$. There exists $k_0>361$ and $0<c<1$ such that for all $k > k_0,\ \zeta_0^{k/2} + \eta^{k/5} \leq c^{k/2}$. For such $k$'s, we choose $\varepsilon' = 1-c^k$. With these choices, we have
\begin{align*}
	K(\Gamma',\lambda',\epsilon')  \geq 1 + (\lambda' -1)\zeta_0^{k} = 1 + (1-\lambda^{-k/10})K(\Gamma,\lambda,\varepsilon)^{k/10}  \geq K(\Gamma,\lambda,\varepsilon)^{k/10},
\end{align*}
where we used the fact that $K(\Gamma,\lambda,\varepsilon)<\lambda$. Combining everything, we then have
\begin{align*}
	\frac{k}{10} \MADV_\varepsilon(\PP) &= \frac{\ln K(\Gamma,\lambda,\varepsilon)^{k/10}}{\ln
\max \left\{\norm{\Gamma_{x}^{1/2}\Gamma^{-1/2}}^2
,\norm{\Gamma^{1/2}\Gamma_{x}^{-1/2}}^2:\forall x\in\II\right\}} \\
 &\leq \frac{\ln K(\Gamma',\lambda',\varepsilon')}{\ln
\max \left\{\norm{\Gamma_{x}^{\prime 1/2}\Gamma^{\prime -1/2}}^2
,\norm{\Gamma^{\prime 1/2}\Gamma_{x}^{\prime-1/2}}^2:\forall x\in\II\right\}} \leq \MADV_{\epsilon'}(\PP^{(k)}).
\end{align*}
\end{proof}

Let us note that while we have proved that the multiplicative adversary method is stronger than the additive one, we cannot directly conclude that this strong direct product theorem also applies to the additive bound. This is because we can only prove that the multiplicative adversary method becomes stronger in the limit of $\lambda$ going to 1, while in the same limit the constant $c$ in the theorem also goes to 1. Therefore, this only implies a direct sum theorem for the additive adversary bound.

%

\section{Representation theory}\label{sec:representation-theory}
\subsection{Symmetrization of the circuit}
In this section we will study how the symmetries of the problem can help choosing the adversary matrix and in turn obtain the lower bounds. Recall that the oracle computes a function $f\in\Fset$ from $\Sigma_{I}$ to $\Sigma_{O}$, where 
the input alphabet has size $N=|\Sigma_{I}|$ and the output alphabet has size $M=|\Sigma_{O}|$. Let us consider permutations $(\pi,\tau)\in S_{N}\times S_{M}$ acting on $f\in\Fset$ as
\begin{align*}
	f_{\pi,\tau} &= \tau \circ f \circ \pi,
\end{align*}
that is, $f_{\pi,\tau}:\Sigma_{I}\mapsto\Sigma_{O}:x\mapsto \tau(f(\pi(x)))$.

\begin{definition}[Automorphism group of $\PP$]\label{def:automorphism}
We call a group $G\subseteq S_{N}\times S_{M}$ an \emph{automorphism group} of a problem $\PP$ if
\begin{itemize*}
 \item For any $(\pi,\tau)\in G$ and $f\in\Fset$, we have $f_{\pi,\tau}\in\Fset$.
 \item For any $(\pi,\tau)\in G$, there exists a unitary $V_{\pi,\tau}$ such that $V_{\pi,\tau}\ket{\PP(f)}=\ket{\PP(f_{\pi,\tau})}$ for all $f\in\Fset$.
\end{itemize*}
\end{definition}

Note that from an oracle for $f$, it is easy to simulate an oracle for $f_{\pi,\tau}$ by prefixing and appending the necessary permutations on the input and output registers. Consider for example a computing oracle call. Then, $O_{f_{\pi,\tau}}$ acts on $\ket{x}\ket{0}$ just as $(\pi^{-1}\otimes\tau)O_{f}(\pi\otimes\I)$.

Therefore, if $(\pi,\tau)$ is an element of an automorphism $G$ of $\PP$, we can solve the problem with oracle $f$ in the following indirect way:
\begin{enumerate*}
 \item Solve the problem for $f_{\pi,\tau}$, which will prepare a state close to $\ket{\PP(f_{\pi,\tau})}$.
\item Apply $V_{\pi,\tau}^\dagger$ to map this state to a state close to $\ket{\PP(f)}$.
\end{enumerate*}
Since we want the algorithm to work just as well for any possible $f$, we can use this property to symmetrize the circuit. The idea is to solve the algorithm for $f$ by solving it for $f_{\pi,\tau}$ for all possible $(\pi,\tau)\in G$ simultaneously in superposition. Just as we considered $\ket f$ as an additional input to the circuit, we can also use the same mathematical trick and consider $\ket{\pi,\tau}$ as another input. We then run the algorithm on the superposition $\tfrac{1}{\sqrt{|G|}}\sum_{(\pi,\tau)\in G}\ket{\pi,\tau}$. Note that we can assume without loss of generality that the best algorithm for $\PP$ is symmetrized. Indeed, for any algorithm for $\PP$ with success probability $p$ and query complexity $T$, the symmetrized version will have the same query complexity and a success probability at least $p$. For the same reason, we can also assume that the optimal adversary matrix satisfies a similar symmetry, in the following sense:
%
%
%
%
%
%
%
\begin{lem}\label{lem:symmetry-rho-gamma}
For all $(\pi,\tau)\in G$, let $U_{\pi,\tau}$ be the unitary that maps $\ket f$ onto $\ket{f_{\pi,\tau}}$. Then, we can assume without loss of generality that
the optimal adversary matrix $\Gamma$ satisfies
$
U_{\pi,\tau}\Gamma U_{\pi,\tau}^{\dagger} = \Gamma
$
 for any $(\pi, \tau)\in G$.
\end{lem}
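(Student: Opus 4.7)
The plan is to symmetrize: given any adversary matrix $\Gamma$ achieving the optimum, replace it by its $G$-average
$$\bar\Gamma := \frac{1}{|G|}\sum_{(\pi,\tau)\in G} U_{\pi,\tau}\,\Gamma\, U_{\pi,\tau}^\dagger,$$
and show that $\bar\Gamma$ is (i) $G$-invariant, (ii) still a valid adversary matrix, and (iii) yields a lower bound at least as large as the one obtained from $\Gamma$. The $G$-invariance is immediate: for any $(\pi',\tau')\in G$, the reindexing $(\pi,\tau)\mapsto (\pi',\tau')(\pi,\tau)$ (a permutation of $G$) leaves the sum unchanged.

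For validity, I would check each requirement in turn. Hermiticity is preserved by unitary conjugation and convex combination. The identity $\bar\Gamma\ket\delta=\ket\delta$ follows because the first item of Definition~\ref{def:automorphism} implies that $(\pi,\tau)$ permutes $\Fset$, so $U_{\pi,\tau}\ket\delta=\ket\delta$ and each summand acts on $\ket\delta$ as $\Gamma$ does. The spectral constraint ($\norm{\Gamma}\leq 1$ in the additive case, $\Gamma\succeq\I$ in the multiplicative case) is preserved under conjugation and averaging. The trace condition $\Tr[\Gamma(\target\circ M)]=0$ transfers once one establishes that $\target$ itself is $G$-invariant, i.e., $U_{\pi,\tau}^\dagger\,\target\,U_{\pi,\tau}=\target$; this follows from the change of variables $g=f_{\pi,\tau}$ in the defining sum of $\target$ together with the second item of Definition~\ref{def:automorphism} and unitarity of $V_{\pi^{-1},\tau^{-1}}$, which combine to give $\braket{\psi_f}{\psi_{f'}}=\braket{\psi_g}{\psi_{g'}}$. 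Since $U_{\pi,\tau}^\dagger(\target\circ M)U_{\pi,\tau}=\target\circ M'$ with $M':=U_{\pi,\tau}^\dagger M U_{\pi,\tau}$ again a junk matrix, the trace condition for $\bar\Gamma$ reduces term-by-term to the one for $\Gamma$.

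The key computation is the identity
$$\bar\Gamma_x=\frac{1}{|G|}\sum_{(\pi,\tau)\in G} U_{\pi,\tau}\,\Gamma_{\pi(x)}\,U_{\pi,\tau}^\dagger,$$
which rests on two facts: for any permutation matrix $P$ and matrices $A,B$ one has $(PAP^\dagger)\circ B=P\bigl(A\circ(P^\dagger BP)\bigr)P^\dagger$; and unwinding the definition of $D_x$ with the substitution $f\mapsto f_{\pi,\tau}$ gives $U_{\pi,\tau}^\dagger D_x U_{\pi,\tau}=D_{\pi(x)}$ (the outer $\tau$'s cancel by its injectivity). Subtracting $\bar\Gamma$ and applying the triangle inequality for the operator norm then yields $\norm{\bar\Gamma_x-\bar\Gamma}\leq\max_y\norm{\Gamma_y-\Gamma}$, so the additive and hybrid bounds do not degrade. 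For the multiplicative bound, the same identity lifts the operator inequality $\Gamma_y\preceq B\,\Gamma$ (equivalent to $\norm{\Gamma_y^{1/2}\Gamma^{-1/2}}^2\leq B$) to $\bar\Gamma_x\preceq B\,\bar\Gamma$ by averaging the $G$-conjugates; the uncomputing direction is symmetric.

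The main technical obstacle I foresee is the careful verification that $\target$ is $G$-invariant and that this invariance passes through the Hadamard product with an arbitrary junk matrix $M$; every other requirement is routine linear algebra of averages of unitary conjugates.
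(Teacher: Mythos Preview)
Your proposal is correct and follows essentially the same symmetrization-by-averaging approach as the paper. The differences are cosmetic: the paper derives the key identity $\bar\Gamma_x=\frac{1}{|G|}\sum_{(\pi,\tau)}U_{\pi,\tau}\Gamma_{\pi(x)}U_{\pi,\tau}^\dagger$ via the Kraus form $\Gamma_x=\sum_y\Pi_y^x\Gamma\Pi_y^x$ and the relation $U_{\pi,\tau}\Pi_y^x U_{\pi,\tau}^\dagger=\Pi_{\tau(y)}^{\pi^{-1}(x)}$, whereas you obtain it from the Hadamard-product identity $(PAP^\dagger)\circ B=P(A\circ(P^\dagger BP))P^\dagger$ together with $U_{\pi,\tau}^\dagger D_xU_{\pi,\tau}=D_{\pi(x)}$; and the paper handles the trace condition by passing to the averaged junk matrix $\bar M$, while you argue term-by-term using that each $M'=U_{\pi,\tau}^\dagger M U_{\pi,\tau}$ is again a junk matrix. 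Your extra treatment of the multiplicative denominator via the operator inequality $\Gamma_y\preceq B\,\Gamma$ (preserved under conjugation and averaging) is a clean addition that the paper does not spell out.
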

\noindent\begin{proof}
Let $\Gamma$ be an adversary matrix that does not satisfy this property, and let us consider its symmetrized version
$\bar{\Gamma}=\frac{1}{|G|}\sum_{(\pi,\tau)\in G}U_{\pi,\tau}\Gamma U_{\pi,\tau}^{\dagger}$.

We first show that this matrix is still a valid adversary matrix.
Since $U_{\pi,\tau}\ket{\delta}=\ket{\delta}$ for any $(\pi,\tau)\in G$, we immediately have $\bar{\Gamma}\ket{\delta}=\ket{\delta}$ if $\Gamma\ket{\delta}=\ket{\delta}$. By definition of the automorphism group, we have for any $f,g\in\Fset$ and $(\pi,\tau)\in G$
\begin{align*}
 \braket{\PP(f_{\pi,\tau})}{\PP(g_{\pi,\tau})}
=\bra{\PP(f)}V_{\pi,\tau}^\dagger V_{\pi,\tau}\ket{\PP(g)}=\braket{\PP(f)}{\PP(g)}.
\end{align*}
Therefore, for any junk matrix $M$, we have
\begin{align*}
 \frac{1}{|G|}\sum_{(\pi,\tau)\in G}U_{\pi,\tau} \left(\target\circ M\right)U_{\pi,\tau}^{\dagger}
&=\frac{1}{|G|}\sum_{(\pi,\tau)\in G}\frac{1}{|\Fset|}\sum_{f,g} \braket{\PP(f)}{\PP(g)} M_{fg}\ketbra{g_{\pi,\tau}}{f_{\pi,\tau}}\\
&=\frac{1}{|G|}\sum_{(\pi,\tau)\in G}\frac{1}{|\Fset|}\sum_{f,g} \braket{\PP(f_{\pi,\tau})}{\PP(g_{\pi,\tau})} M_{f_{\pi,\tau}g_{\pi,\tau}}\ketbra{g}{f}\\
&=\frac{1}{|\Fset|}\sum_{f,g} \braket{\PP(f}{\PP(g)}\frac{1}{|G|}\sum_{(\pi,\tau)\in G} M_{f_{\pi,\tau}g_{\pi,\tau}}\ketbra{g}{f}\\
&=\target\circ \bar{M},
\end{align*}
where $\bar{M}$ is the symmetrized version of $M$. Therefore, if $\Gamma$ satisfies $\Tr\left[\Gamma(\target \circ M)\right]=0$ for any junk matrix $M$, we have for $\bar{\Gamma}$,
\begin{align*}
\Tr\left[\bar{\Gamma}(\target \circ M)\right]
=\frac{1}{|G|}\sum_{(\pi,\tau)\in G}\Tr\left[U_{\pi,\tau}\Gamma U_{\pi,\tau}^{\dagger}(\target \circ M)\right]
=\Tr\left[\Gamma(\target \circ \bar{M})\right]=0.
\end{align*}
Similarly, if $\Tr\left[\Pi_\bad(\target \circ M)\right]\leq\eta$ for any junk matrix $M$, where $\Pi_\bad$ is the projector on the bad subspace of $\Gamma$, then
\begin{align*}
 \Tr\left[\bar{\Pi}_\bad(\target \circ M)\right]
=\Tr\left[\Pi_\bad(\target \circ \bar{M})\right]\leq\eta,
\end{align*}
where $\bar{\Pi}_\bad$ is the projector on the bad subspace of $\bar{\Gamma}$.

Let us now show that substituting $\Gamma$ by $\bar{\Gamma}$ can only make the adversary bound stronger. It suffices to show that $\max_x\norm{\bar{\Gamma}-\bar{\Gamma}_x}\leq\max_x\norm{\Gamma-\Gamma_x}$, where $\bar{\Gamma}_x=\bar{\Gamma}\circ D_x$. Recall from Fact~\ref{fact:CP-map} that
$
 \Gamma_x=\sum_{y} \Pi_y^x \Gamma \Pi_y^x
$,
and similarly for $\bar{\Gamma}_x$. By definition of $\Pi_y^x$, we have
$
 U_{\pi,\tau}\Pi_y^x U_{\pi,\tau}^{\dagger}=\Pi^{\pi^{-1}(x)}_{\tau(y)}
$
and in turn
\begin{align*}
 \bar{\Gamma}_x
&=\frac{1}{|G|}\sum_y\sum_{(\pi,\tau)\in G}\Pi_y^xU_{\pi,\tau}\Gamma U_{\pi,\tau}^{\dagger}\Pi_y^x 
 =\frac{1}{|G|}\sum_y\sum_{(\pi,\tau)\in G}U_{\pi,\tau}\Pi^{\pi(x)}_{\tau^{-1}(y)}\Gamma \Pi^{\pi(x)}_{\tau^{-1}(y)}U_{\pi,\tau}^{\dagger} \\
 &=\frac{1}{|G|}\sum_{(\pi,\tau)\in G}U_{\pi,\tau}\Gamma_{\pi(x)}U_{\pi,\tau}^{\dagger}
\end{align*}

Finally, we have
\begin{align*}
 \norm{\bar{\Gamma}-\bar{\Gamma}_x}
&=\frac{1}{|G|}\norm{\sum_{(\pi,\tau)\in G}U_{\pi,\tau}\left[\Gamma-\Gamma_{\pi(x)}\right] U_{\pi,\tau}^{\dagger}} 
 \leq \frac{1}{|G|}\sum_{(\pi,\tau)\in G}\norm{\Gamma-\Gamma_{\pi(x)}} 
 \leq \max_x\norm{\Gamma-\Gamma_x},
\end{align*}
where we have used the triangle inequality.
\end{proof}

Note that the mapping $\rep:(\pi,\tau)\mapsto U_{\pi,\tau}$ defines a representation of the automorphism group $G$ and that
Lemma~\ref{lem:symmetry-rho-gamma} implies that
$\Gamma$ commutes with $U_{\pi,\tau}$ for any $(\pi,\tau)\in G$. This means that the matrices $U_{\pi,\tau}$ and $\Gamma$ block-diagonalize simultaneously in a common basis, where each block corresponds to a different irrep of $G$ in $\rep$. From now on, we will consider the special case where $\rep$ is multiplicity-free. This happens for different interesting problems, such as $t$-\fsearch{}~\cite{ASW07,Spa08} and \IE{} (see Section~\ref{sec:index-erasure}), as a consequence of the following lemma.
\begin{lem}\label{lem:multiplicity-free}
 If, for any $f,g\in\Fset$, there exists $(\pi,\tau)\in G$ such that $g=f_{\pi,\tau}$ and $g_{\pi,\tau}=f$, then $\rep$ is multiplicity-free.
\end{lem}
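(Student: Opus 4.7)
The plan is to exploit the standard correspondence between multiplicity-freeness of a permutation representation and commutativity of its centralizer algebra. Concretely, $\rep$ is the permutation representation of $G$ on the space spanned by $\{\ket f:f\in\Fset\}$ via $U_{\pi,\tau}\ket f=\ket{f_{\pi,\tau}}$. By Schur's lemma, $\rep$ decomposes as a direct sum of irreps with multiplicity at most one if and only if its commutant $\End_G(\FF)$ is a commutative algebra. So the goal reduces to showing this algebra is commutative under the hypothesis.

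First I would recall (or verify directly) that $\End_G(\FF)$ admits a canonical basis of \emph{orbital matrices}: for each orbit $O$ of the diagonal action of $G$ on $\Fset\times\Fset$, define $A_O$ by $(A_O)_{g,f}=1$ if $(f,g)\in O$ and $0$ otherwise. The identity $U_{\pi,\tau}A_O U_{\pi,\tau}^\dagger=A_O$ is immediate from the $G$-invariance of $O$, and conversely any $G$-equivariant operator is constant on orbits, so the $A_O$ span the commutant.

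Next I would translate the hypothesis into a statement about orbits. The assumption says that for every pair $(f,g)\in\Fset\times\Fset$ there exists $(\pi,\tau)\in G$ with $(\pi,\tau)\cdot(f,g)=(g,f)$. This means that every orbit $O$ of $G$ on $\Fset\times\Fset$ is \emph{self-paired}, i.e.\ invariant under the swap $(f,g)\leftrightarrow(g,f)$. Equivalently, each orbital matrix satisfies $A_O^T=A_O$, so the commutant is spanned by symmetric matrices.

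Finally I would conclude by the classical two-line argument: since $\End_G(\FF)$ is closed under multiplication, the product $A_O A_{O'}$ lies in the commutant and is therefore also a real linear combination of symmetric matrices, hence symmetric. But then
\[
A_O A_{O'}=(A_O A_{O'})^T=A_{O'}^T A_O^T=A_{O'}A_O,
\]
so the algebra is commutative and $\rep$ is multiplicity-free. I do not foresee any real obstacle: the only subtle point is recognizing that the condition in the statement is exactly generous transitivity (self-pairing of all orbits), but once that is observed the result is a textbook consequence of the Gelfand-pair criterion.
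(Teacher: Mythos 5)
Your proof is correct and is essentially the same as the paper's: both reduce multiplicity-freeness to commutativity of the commutant algebra and then establish commutativity by showing every element is symmetric, closing with $AB=(AB)^{T}=B^{T}A^{T}=BA$. The only difference is presentational: the paper argues directly that any $A$ in the commutant satisfies $A^{t}=A$ by applying the swapping $(\pi,\tau)$ to the $(f,g)$ matrix entry, whereas you first expand the commutant in the orbital-matrix (Bose-Mesner) basis, observe that the hypothesis is precisely generous transitivity so that every orbit is self-paired, and then conclude each basis element (hence every element) is symmetric. The paper itself flags the association-scheme viewpoint parenthetically; you have simply made it the backbone of the argument.
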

\begin{proof}
Let us consider the set of matrices $\MM=\{ A \in \mathbb{C}^{|F| \times|F| } : \forall (\pi,\tau)\in G,\ U_{\pi,\tau} A U_{\pi,\tau}^\dagger = A \}$. It is easy to see that for any $A,B\in\MM,$ we have $AB\in\MM$, therefore $\MM$ defines an algebra.
Note that $\rep$ is multiplicity-free if and only if $\MM$ is commutative, in which case all matrices in $\MM$ diagonalize in a common basis~\cite[p.~65]{Cam99}.
For any matrix $A\in\MM$, we have $A^t = A$ since there exists $(\pi,\tau)\in G$ such $\bra{f}A\ket{g} = \bra{f}U_{\pi,\tau}AU_{\pi,\tau}^\dagger\ket{g}=\bra{g}A\ket{f}$. This immediately implies that for any $A,B\in\MM$, we have $AB=(AB)^t=B^tA^t=BA$, therefore $\MM$ is a commutative algebra. (More precisely, it is a Bose-Mesner algebra associated to an association scheme \cite{Bai04}) 
\end{proof}

\subsection{Symmetry of oracle calls}
Recall that oracle calls are closely related to the Hadamard product with $D_x$.
We show that the invariance of $\Gamma$ under the action of a group $G$ implies the invariance of $\Gamma_x=\Gamma\circ D_x$ under the action of the subgroup $G_x$ of $G$ that leaves $x$ invariant.
\begin{lem}\label{lem:symmetry-gammax}
For any $x\in\Sigma_{I}$ and $y\in\Sigma_{O}$, let us define the following subgroups of $G$
\begin{align*}
 G_{xy}&=\{(\pi,\tau)\in G:\pi(x)=x, \tau(y)=y\},\\
 G_x&=\{(\pi,\tau)\in G:\pi(x)=x\}.
\end{align*}
Then $\Pi_y^x$ satisfies
$
U_{\pi,\tau}\Pi_y^x U_{\pi,\tau}^{\dagger} = \Pi_y^x
$
for any $(\pi,\tau)\in G_{xy}$, and $\Gamma_x$ satisfies
$
U_{\pi,\tau}\Gamma_x U_{\pi,\tau}^{\dagger} = \Gamma_x
$
for any $(\pi,\tau)\in G_x$.
\end{lem}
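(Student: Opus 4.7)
My plan is to reduce both statements to a single explicit computation of $U_{\pi,\tau}\Pi_y^x U_{\pi,\tau}^{\dagger}$, and then recognize the invariances from the stabilizer conditions defining $G_{xy}$ and $G_x$.

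First, recall that $U_{\pi,\tau}\ket{f} = \ket{f_{\pi,\tau}}$ with $f_{\pi,\tau} = \tau\circ f\circ\pi$, so the substitution $g = f_{\pi,\tau}$, equivalently $f = \tau^{-1}\circ g\circ\pi^{-1}$, is a bijection on $\Fset$ (by Definition~\ref{def:automorphism}). Under this change of variables, the condition $f(x) = y$ becomes $g(\pi^{-1}(x)) = \tau(y)$. Therefore
\begin{align*}
U_{\pi,\tau}\Pi_y^x U_{\pi,\tau}^{\dagger} = \sum_{f: f(x)=y} \proj{f_{\pi,\tau}} = \sum_{g: g(\pi^{-1}(x))=\tau(y)} \proj{g} = \Pi_{\tau(y)}^{\pi^{-1}(x)}.
\end{align*}
This single identity is the engine of the proof.

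For the first statement, when $(\pi,\tau)\in G_{xy}$, both $\pi^{-1}(x) = x$ and $\tau(y) = y$, so the right-hand side collapses to $\Pi_y^x$, giving the claim.

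For the second statement, apply Fact~\ref{fact:CP-map} to write $\Gamma_x = \sum_y \Pi_y^x\,\Gamma\,\Pi_y^x$. For $(\pi,\tau)\in G_x$, we have $\pi^{-1}(x) = x$ (but $\tau$ is unconstrained), so $U_{\pi,\tau}\Pi_y^x U_{\pi,\tau}^{\dagger} = \Pi_{\tau(y)}^x$. Combining this with $U_{\pi,\tau}\Gamma U_{\pi,\tau}^{\dagger} = \Gamma$ from Lemma~\ref{lem:symmetry-rho-gamma}, we obtain
\begin{align*}
U_{\pi,\tau}\Gamma_x U_{\pi,\tau}^{\dagger} = \sum_y \Pi_{\tau(y)}^x\,\Gamma\,\Pi_{\tau(y)}^x = \sum_{y'} \Pi_{y'}^x\,\Gamma\,\Pi_{y'}^x = \Gamma_x,
\end{align*}
where the reindexing $y' = \tau(y)$ is a bijection on $\Sigma_O$. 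No step looks delicate; the only thing to be careful about is keeping the direction of the permutations ($\pi$ versus $\pi^{-1}$, $\tau$ versus $\tau^{-1}$) straight when changing the summation variable, which is why I would explicitly write out the substitution $g = f_{\pi,\tau}$ rather than invoke it in passing.
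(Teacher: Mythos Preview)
Your proof is correct and follows essentially the same route as the paper: both hinge on the identity $U_{\pi,\tau}\Pi_y^x U_{\pi,\tau}^{\dagger}=\Pi^{\pi^{-1}(x)}_{\tau(y)}$, then specialize to $G_{xy}$ for the first claim and combine with Fact~\ref{fact:CP-map} and Lemma~\ref{lem:symmetry-rho-gamma} for the second. The only difference is that you spell out the change of variables explicitly, whereas the paper quotes the identity (having already derived it inside the proof of Lemma~\ref{lem:symmetry-rho-gamma}).
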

\begin{proof}
Recall that by definition of $\Pi_y^x$, we have
$
 U_{\pi,\tau}\Pi_y^x U_{\pi,\tau}^{\dagger}=\Pi^{\pi^{-1}(x)}_{\tau(y)}
$
for any $(\pi,\tau)\in G$.
This immediately implies the first part of the lemma for $(\pi,\tau)\in G_{xy}$. Moreover, Fact~\ref{fact:CP-map} and Lemma~\ref{lem:symmetry-rho-gamma} imply that
$
  U_{\pi,\tau}\Gamma_x U_{\pi,\tau}^{\dagger}=\Gamma_{\pi^{-1}(x)}
$
for any $(\pi,\tau)\in G$. This implies the second part of the lemma for $(\pi,\tau)\in G_x$.
\end{proof}
Since $\rep$ is a representation of $G$, it is also a representation of the subgroup $G_x$. However, even if $\rep$ is multiplicity-free with respect to $G$, it is typically not with respect to $G_x$. Indeed, when restricting $G$ to $G_x$, multiplicities can happen due to two different mechanisms. First, an irrep can become reducible, and one of the new smaller irreps can be a copy of another irrep. Secondly, two irreps that are different for $G$ could be the same when we restrict to the elements of $G_x$. Let us identify an irrep of $G_x$ by three indices $(k,l,m)$: the first index identifies the irrep $k$ of $G$ from which it originates, the second index identifies the irrep $l$ of $G_x$, and the last index allows to discriminate betwen different copies of the same irrep of $G_x$. For example, two irreps having the same index $l$ but different indices $k$ are two copies of the same irrep of $G_x$ originating from different irreps of $G$. Also, we denote by $V_{k,l,m}$ the subspace spanned by irrep $(k,l,m)$. These subspaces are such that $\bigoplus_{l,m} V_{k,l,m}=V_k$, where $V_k$ is the subspace spanned by the irrep $k$ of $G$ (we assume that $V_{k,l,m}$ is empty if $(k,l,m)$ does not correspond to a valid irrep).
In the following, it will also be useful to define $W_{l} = \bigoplus_{k,m} V_{k,l,m}$ which is sometimes called the isotypical component corresponding to $l$~\cite{Ser77}.

\begin{lem}\label{lem:block-gammax}
Let $\rep$ be multiplicity-free for $G$.
Then, $\Gamma$ can be written as
$
\Gamma = \sum_{k} \gamma_{k} \Pi_k,
$
where $k$ indexes the irreps of $G$ and $\Pi_k$ is the projector onto the space $V_{k}$ spanned by the irrep $k$.
Also, $\Gamma_x$ block-diagonalizes as
$
\Gamma_x = \sum_{l} \Gamma_x^{l},
$
where $l$ indexes the irreps of $G_x$, and, for each $l$, $\Gamma_x^{l}$ is a matrix on the isotypical component  $W_{l}= \bigoplus_{k,m} V_{k,l,m}$  of $l$. Moreover, $\Gamma_x^{l}$ can be written as
\begin{align*}
 \Gamma_x^{l}=\sum_{k_1,m_1,k_2,m_2} \gamma^{l}_{x;k_1m_1;k_2m_2} \Pi^l_{k_1m_1\leftarrow k_2m_2},
\end{align*}
where $d_l$ is the dimension of irrep $l$, $\Pi^l_{k_1m_1\leftarrow k_2m_2}$ is the \emph{``transporter''} from $V_{k_2,l,m_2}$ to $V_{k_1,l,m_1}$, i.e., the operator that maps any state in $V_{k_2,l,m_2}$ to the corresponding state in  $V_{k_1,l,m_1}$, and
\begin{align*}
 \gamma^{l}_{x;k_1m_1;k_2m_2}=\frac{1}{d_l} \Tr\left[ \Gamma_x \Pi^l_{k_2m_2\leftarrow k_1m_1}\right].
\end{align*}
\end{lem}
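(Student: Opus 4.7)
The plan is to apply Schur's lemma at two levels: first to $G$ acting via $\rep$ to get the form of $\Gamma$, then to the restriction of $\rep$ to the subgroup $G_x$ to get the block structure of $\Gamma_x$.

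For the first assertion, Lemma~\ref{lem:symmetry-rho-gamma} states that $\Gamma$ commutes with every $U_{\pi,\tau}$ for $(\pi,\tau)\in G$. Since $\rep$ is multiplicity-free, the commutant of $\rep(G)$ is spanned by the projectors $\Pi_k$ onto the $G$-isotypic subspaces $V_k$, so $\Gamma=\sum_k\gamma_k\Pi_k$ for some real scalars $\gamma_k$. This takes one line modulo invoking Schur.

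For the second assertion, Lemma~\ref{lem:symmetry-gammax} gives that $\Gamma_x$ commutes with every $U_{\pi,\tau}$ for $(\pi,\tau)\in G_x$. The restriction of $\rep$ to $G_x$ need not be multiplicity-free, but Schur's lemma still forces any operator in its commutant to preserve each isotypical component $W_l=\bigoplus_{k,m}V_{k,l,m}$ of $G_x$. This yields the block decomposition $\Gamma_x=\sum_l\Gamma_x^l$ with $\Gamma_x^l$ supported on $W_l$. Inside $W_l$, which consists of copies of a single $G_x$-irrep indexed by $(k,m)$, Schur's lemma further tells us that the space of $G_x$-equivariant maps from $V_{k_2,l,m_2}$ to $V_{k_1,l,m_1}$ is one-dimensional, spanned by the transporter $\Pi^l_{k_1m_1\leftarrow k_2m_2}$. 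Hence $\Gamma_x^l$ expands as claimed with scalar coefficients $\gamma^l_{x;k_1m_1;k_2m_2}$.

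To extract these coefficients, one uses the orthogonality of transporters under the Hilbert--Schmidt inner product. Once a consistent identification of the different copies of the irrep $l$ is fixed, the transporters satisfy $\Pi^l_{k_1m_1\leftarrow k_2m_2}\,\Pi^l_{k_3m_3\leftarrow k_4m_4}=\delta_{k_2k_3}\delta_{m_2m_3}\,\Pi^l_{k_1m_1\leftarrow k_4m_4}$, from which $\Tr\bigl[\Pi^l_{k_1m_1\leftarrow k_2m_2}\Pi^l_{k_3m_3\leftarrow k_4m_4}\bigr]=d_l\,\delta_{k_1k_4}\delta_{m_1m_4}\delta_{k_2k_3}\delta_{m_2m_3}$ follows by noting that the diagonal transporters $\Pi^l_{km\leftarrow km}$ are projectors of rank $d_l$. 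Tracing the expansion of $\Gamma_x^l$ against $\Pi^l_{k_2m_2\leftarrow k_1m_1}$ then isolates $\gamma^l_{x;k_1m_1;k_2m_2}=\tfrac{1}{d_l}\Tr\bigl[\Gamma_x\,\Pi^l_{k_2m_2\leftarrow k_1m_1}\bigr]$, where we used that $\Pi^l_{k_2m_2\leftarrow k_1m_1}$ is supported on $W_l$ so only the $\Gamma_x^l$ block contributes.

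The only genuine subtlety is the convention fixing the identifications between the different copies of $l$: each transporter is unique only up to a scalar, and one must choose these scalars compatibly (e.g.\ via a common orthonormal basis of intertwiners) so that the composition rule above holds. Once this bookkeeping is settled, the rest is a direct application of Schur's lemma and the structural part of the proof is entirely routine.
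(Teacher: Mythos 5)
Your proof is correct and follows the same route as the paper, which simply cites Lemmas~\ref{lem:symmetry-rho-gamma}--\ref{lem:symmetry-gammax} and Serre's canonical decomposition; you have filled in the standard Schur's-lemma details, including the correct orthogonality relation $\Tr\bigl[\Pi^l_{k_1m_1\leftarrow k_2m_2}\Pi^l_{k_3m_3\leftarrow k_4m_4}\bigr]=d_l\,\delta_{k_1k_4}\delta_{m_1m_4}\delta_{k_2k_3}\delta_{m_2m_3}$ that yields the coefficient formula. The caveat you flag about fixing phases of the transporters compatibly is a genuine but routine point, and you handle it appropriately.
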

\begin{proof}
 This directly follows from Lemmas~\ref{lem:symmetry-rho-gamma}-\ref{lem:symmetry-gammax} using the canonical decomposition of the representation $\rep$~\cite{Ser77}.
\end{proof}

\subsection{Computing the adversary bounds}
Lemma~\ref{lem:block-gammax} tells us how to choose the adversary matrix: it suffices to assign weights $\gamma_{k}$ to each irrep $k$ of $G$, i.e., $\Gamma = \sum_{k} \gamma_{k} \Pi_k$. Moreover, it also implies that computing the associated adversary bounds boils down to bounding for each irrep $l$ of $G_x$ the norm of a small $m_l\times m_l$ matrix, where $m_l$ is the multiplicity of irrep $l$.
\begin{thm}\label{thm:adv-representation}
Let $\rep$ be multiplicity-free for $G$. Then, we have
\begin{align*}
\norm{\aGamma_{x}-\aGamma}   = \max_{l} \norm{\tilde{\Delta}_{x}^l}, \qquad\quad
\norm{\Gamma_{x}^{1/2}\Gamma^{-1/2}}^2   = \max_{l} \norm{\Delta_{x}^l},\qquad\quad
\norm{\Gamma^{1/2}\Gamma_{x}^{-1/2}}^2   = \max_{l} \norm{(\Delta_{x}^l)^{-1}},
\end{align*}
where the maximums are over irreps $l$ of $G_x$. For each irrep $l$, $\tilde{\Delta}_{x}^l$ and $\Delta_x^l$ are $m_l\times m_l$ matrices, where $m_l$ is the multiplicity of $l$ for $G_x$, with elements labeled by the different copies of the irrep and such that
\begin{align*}
(\tilde{\Delta}^{l}_{x})_{k_1m_1,k_2,m_2}
&=  \frac{1}{d_l}\sum_{k,y} \gamma_{k}\Tr\left[\Pi_y^x\Pi_k\Pi_y^x \Pi^l_{k_1m_1\leftarrow k_2m_2}\right]-\gamma_{k_1}\delta_{k_1k_2}\\
(\Delta^{l}_{x})_{k_1m_1,k_2,m_2} &=  \frac{1}{d_l}\sum_{k,y} \frac{\gamma_{k}}{\sqrt{\gamma_{k_{1}}\gamma_{k_{2}}}}\Tr\left[\Pi_y^x\Pi_k\Pi_y^x \Pi^l_{k_1m_1\leftarrow k_2m_2}\right].
\end{align*}
\end{thm}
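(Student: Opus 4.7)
The approach will be to apply Schur's lemma to the subgroup $G_x$. Each of the three operators whose norms we want, namely $\aGamma_x - \aGamma$, the positive operator $\Gamma^{-1/2}\Gamma_x\Gamma^{-1/2}$ (whose norm equals $\norm{\Gamma_x^{1/2}\Gamma^{-1/2}}^2$), and its inverse $\Gamma^{1/2}\Gamma_x^{-1}\Gamma^{1/2}$ (giving $\norm{\Gamma^{1/2}\Gamma_x^{-1/2}}^2$), commutes with $\rep|_{G_x}$: indeed $\Gamma$ and its powers $\Gamma^{\pm 1/2} = \sum_k \gamma_k^{\pm 1/2}\Pi_k$ commute with all of $\rep$ by Lemma~\ref{lem:symmetry-rho-gamma}, and $\Gamma_x$ commutes with $\rep|_{G_x}$ by Lemma~\ref{lem:symmetry-gammax}. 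Writing the representation space as $\bigoplus_l W_l$ with $W_l \cong V_l \otimes \mathbb{C}^{m_l}$ for each irrep $l$ of $G_x$, Schur's lemma then forces any such $G_x$-equivariant operator $A$ to take the form $A = \bigoplus_l I_{d_l} \otimes A_l$ for an $m_l \times m_l$ matrix $A_l$ on the multiplicity space, and the spectral norm factorizes as $\norm{A} = \max_l \norm{I_{d_l}\otimes A_l} = \max_l \norm{A_l}$. This already yields the three norm identities once $A_l$ is identified with $\tilde{\Delta}_x^l$, $\Delta_x^l$, and $(\Delta_x^l)^{-1}$ respectively (for the last case, one uses $(I_{d_l} \otimes \Delta_x^l)^{-1} = I_{d_l} \otimes (\Delta_x^l)^{-1}$).

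To read off $A_l$ explicitly I will use the transporters $\Pi^l_{k_1 m_1 \leftarrow k_2 m_2}$ already introduced in Lemma~\ref{lem:block-gammax}. Chosen consistently over aligned bases, they satisfy the orthogonality relation $\Tr[\Pi^l_{k_1 m_1 \leftarrow k_2 m_2} \Pi^l_{k_2' m_2' \leftarrow k_1' m_1'}] = d_l\,\delta_{k_1 k_1'}\delta_{m_1 m_1'}\delta_{k_2 k_2'}\delta_{m_2 m_2'}$, which inverts the expansion of Lemma~\ref{lem:block-gammax} into the compact formula $(A_l)_{k_1 m_1, k_2 m_2} = \tfrac{1}{d_l}\Tr[A\,\Pi^l_{k_2 m_2 \leftarrow k_1 m_1}]$ for any $G_x$-equivariant $A$. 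The key further observation is that $\Gamma$ acts as a scalar on each isotypic component $V_k$ of $G$, which gives the two scalar-pass-through identities $\aGamma\,\Pi^l_{k_2 m_2 \leftarrow k_1 m_1} = \gamma_{k_2}\Pi^l_{k_2 m_2 \leftarrow k_1 m_1}$ and $\Gamma^{-1/2}\,\Pi^l_{k_2 m_2 \leftarrow k_1 m_1}\,\Gamma^{-1/2} = \tfrac{1}{\sqrt{\gamma_{k_1}\gamma_{k_2}}}\,\Pi^l_{k_2 m_2 \leftarrow k_1 m_1}$.

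Putting things together is then a short trace calculation. For the additive bound, the contribution of $-\aGamma$ produces the Kronecker correction $-\gamma_{k_1}\delta_{k_1 k_2}$ while expanding $\aGamma_x = \sum_{k,y}\gamma_k \Pi_y^x\Pi_k\Pi_y^x$ through Fact~\ref{fact:CP-map} yields the claimed double sum. For the multiplicative bound, the same expansion of $\Gamma_x$, combined with the second scalar-pass-through identity above and cyclicity of the trace, produces the $1/\sqrt{\gamma_{k_1}\gamma_{k_2}}$ prefactor in $\Delta_x^l$. The main obstacle is purely notational rather than conceptual: carefully verifying that the transporters only pair up within a single isotypic component $W_l$ of $G_x$ (so that no cross-blocks $W_l \leftrightarrow W_{l'}$ contaminate the calculation), and keeping the transporter indexing straight so that the Kronecker deltas and scalar prefactors emerge exactly as stated. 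Once the transporter orthogonality relation is set up correctly, every remaining step is a one-line manipulation.
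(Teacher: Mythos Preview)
Your proposal is correct and follows essentially the same route as the paper: the paper's own proof is a one-liner (``follows directly from Lemma~\ref{lem:block-gammax} and the definition of $\Gamma_x$''), and what you have written is precisely the unpacking of that lemma via Schur's lemma on $G_x$, the transporter orthogonality relation, and the expansion $\Gamma_x=\sum_{k,y}\gamma_k\Pi_y^x\Pi_k\Pi_y^x$ from Fact~\ref{fact:CP-map}. One small remark: your Kronecker correction should strictly read $-\gamma_{k_1}\delta_{k_1k_2}\delta_{m_1m_2}$, since $\Tr[\Pi^l_{k_2m_2\leftarrow k_1m_1}]=d_l\,\delta_{k_1k_2}\delta_{m_1m_2}$; the paper's statement suppresses the $\delta_{m_1m_2}$ (harmless in the applications, where each irrep $l$ of $G_x$ occurs at most once inside each irrep $k$ of $G$), but your derivation actually gives the fully correct form.
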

\begin{proof}
 This follows directly from Lemma~\ref{lem:block-gammax} and the definition of $\Gamma_x$.
\end{proof}

We see that to obtain the adversary bounds, we need to compute the traces of products of four operators. Since $G_{xy}$ is a subgroup of both $G$ and $G_x$, each of these operators can be decomposed into a sum of projectors onto irreps of $G_{xy}$ (or transporters from and to these irreps). To compute these traces, we can use the following lemma, which shows that it is sufficient to compute the traces of products of two projectors onto irreps of $G_{xy}$.
\begin{lem}\label{lem:trace-proj}
	Let $\lambda,\mu,\nu_1,\nu_2$ denote irreps of $G_{xy}$. If any of $\mu,\nu_1$ or $\nu_2$ is not isomorphic to $\lambda$, then
$
		\Tr\left[\Pi_\lambda\Pi_\mu\Pi_\lambda \Pi_{\nu_1\leftarrow \nu_2}\right]
		=0.
$
Otherwise, we have
	\begin{align*}
		\Tr\left[\Pi_\lambda\Pi_\mu\Pi_\lambda \Pi_{\nu_1\leftarrow \nu_2}\right]
		&=\frac{1}{d}\Tr\left[\Pi_\lambda\Pi_\mu\right]\cdot\Tr\left[\Pi_\lambda \Pi_{\nu_1\leftarrow\nu_2}\right], \\
		\abs{\Tr\left[\Pi_\lambda \Pi_{\nu_1\leftarrow \nu_2}\right]}
		&=\sqrt{\Tr\left[\Pi_\lambda \Pi_{\nu_1}\right]\cdot\Tr\left[\Pi_\lambda \Pi_{\nu_2}\right]},
	\end{align*}
where $d$ is the dimension of the representation $\lambda$.
\end{lem}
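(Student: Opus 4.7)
The plan is to prove everything by careful applications of Schur's lemma to the $G_{xy}$-intertwiner algebra. Each of $\Pi_\lambda, \Pi_\mu, \Pi_{\nu_1 \leftarrow \nu_2}$ commutes with (or intertwines) the $G_{xy}$-action: the first two are orthogonal projectors onto $G_{xy}$-invariant irreducible subspaces $V_\lambda, V_\mu$, while the transporter $\Pi_{\nu_1 \leftarrow \nu_2}$ is a partial isometry from $V_{\nu_2}$ to $V_{\nu_1}$ that commutes with the group action in the sense that $\Pi_{\nu_1\leftarrow\nu_2}U_g = U_g\Pi_{\nu_1\leftarrow\nu_2}$ for all $g\in G_{xy}$. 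In particular, $\Pi_{\nu_2\leftarrow\nu_1} = \Pi_{\nu_1\leftarrow\nu_2}^\dagger$ and $\Pi_{\nu_1\leftarrow\nu_2}\Pi_{\nu_2\leftarrow\nu_1} = \Pi_{\nu_1}$.

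\textbf{First claim (vanishing).} I would apply Schur's lemma pointwise: if $V_\alpha, V_\beta$ carry non-isomorphic irreps of $G_{xy}$, then any $G_{xy}$-intertwiner from $V_\alpha$ into $V_\beta$ vanishes, so $V_\alpha \perp V_\beta$ and any product starting or ending with $\Pi_\beta$ and passing through $V_\alpha$ is zero. Concretely, if $\mu\not\cong\lambda$ then $\Pi_\lambda\Pi_\mu = 0$; if $\nu_1\not\cong\lambda$ then $\Pi_\lambda\Pi_{\nu_1\leftarrow\nu_2} = 0$ because the image of the transporter lies in $V_{\nu_1}\perp V_\lambda$; and if $\nu_2\not\cong\lambda$ then by the cyclic property of the trace we reduce to $\Pi_{\nu_1\leftarrow\nu_2}\Pi_\lambda = 0$, since the transporter annihilates the orthogonal complement of $V_{\nu_2}$, which contains $V_\lambda$.

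\textbf{Second claim (factorization).} Assuming $\mu,\nu_1,\nu_2\cong\lambda$, the operator $\Pi_\lambda\Pi_\mu\Pi_\lambda$ is a $G_{xy}$-intertwiner $V_\lambda \to V_\lambda$ that vanishes on $V_\lambda^\perp$. By Schur it equals $c\Pi_\lambda$ for some scalar $c$; taking the trace and using cyclicity plus $\Pi_\lambda^2=\Pi_\lambda$ gives $c = \frac{1}{d}\Tr[\Pi_\lambda\Pi_\mu]$. Substituting this into the trace of interest yields exactly $\tfrac{1}{d}\Tr[\Pi_\lambda\Pi_\mu]\cdot\Tr[\Pi_\lambda\Pi_{\nu_1\leftarrow\nu_2}]$, as claimed.

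\textbf{Third claim (modulus).} Since $\Pi_{\nu_2\leftarrow\nu_1} = \Pi_{\nu_1\leftarrow\nu_2}^\dagger$ and $\Pi_\lambda$ is Hermitian, we have $|\Tr[\Pi_\lambda\Pi_{\nu_1\leftarrow\nu_2}]|^2 = \Tr[\Pi_\lambda\Pi_{\nu_1\leftarrow\nu_2}]\cdot\Tr[\Pi_\lambda\Pi_{\nu_2\leftarrow\nu_1}]$. I would then evaluate the product $\Pi_\lambda\Pi_{\nu_1\leftarrow\nu_2}\Pi_\lambda\Pi_{\nu_2\leftarrow\nu_1}$ in two ways. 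First, exactly as in the second claim, $\Pi_\lambda\Pi_{\nu_1\leftarrow\nu_2}\Pi_\lambda = c'\Pi_\lambda$ with $c' = \tfrac{1}{d}\Tr[\Pi_\lambda\Pi_{\nu_1\leftarrow\nu_2}]$, giving the four-fold trace as $\tfrac{1}{d}\Tr[\Pi_\lambda\Pi_{\nu_1\leftarrow\nu_2}]\cdot\Tr[\Pi_\lambda\Pi_{\nu_2\leftarrow\nu_1}]$. Second, regrouping as $\Pi_\lambda\cdot(\Pi_{\nu_1\leftarrow\nu_2}\Pi_\lambda\Pi_{\nu_2\leftarrow\nu_1})$, the inner bracket is a $G_{xy}$-intertwiner $V_{\nu_1}\to V_{\nu_1}$, hence by Schur equals $c''\Pi_{\nu_1}$ with $c'' = \tfrac{1}{d}\Tr[\Pi_\lambda\Pi_{\nu_2\leftarrow\nu_1}\Pi_{\nu_1\leftarrow\nu_2}] = \tfrac{1}{d}\Tr[\Pi_\lambda\Pi_{\nu_2}]$; this rewrites the four-fold trace as $\tfrac{1}{d}\Tr[\Pi_\lambda\Pi_{\nu_1}]\cdot\Tr[\Pi_\lambda\Pi_{\nu_2}]$. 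Equating the two expressions and taking square roots gives the desired identity.

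The only real subtlety is bookkeeping around the transporter: one must be careful that $\Pi_{\nu_1\leftarrow\nu_2}$ acts as zero outside $V_{\nu_2}$ and that the identification implicit in it is indeed $G_{xy}$-equivariant, so that Schur's lemma legitimately applies to each sub-product. Everything else is a routine chain of invocations of Schur together with the cyclicity of the trace.
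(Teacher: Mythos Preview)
Your proof is correct and is a cleaner, more conceptual version of what the paper does. The paper's argument is essentially Schur's lemma unwound into explicit coordinates: it picks a basis $\{\ket{i,j}\}$ for the $\lambda$-isotypical component so that $V_\lambda=\Span\{\ket{i,1}\}$, writes the other copies as $\ket{\psi_i}=\sum_j\alpha_j\ket{i,j}$, $\ket{\phi_i^{(k)}}=\sum_j\beta_j^{(k)}\ket{i,j}$, and then computes each trace as a sum over $i$ in which every summand is the same $i$-independent product of the $\alpha_1,\beta_1^{(k)}$ coefficients; the factor $d$ simply counts the $d$ equal terms. The $i$-independence of those coefficients is precisely the content of Schur's lemma, which you invoke directly to conclude $\Pi_\lambda\Pi_\mu\Pi_\lambda=c\,\Pi_\lambda$ and $\Pi_{\nu_1\leftarrow\nu_2}\Pi_\lambda\Pi_{\nu_2\leftarrow\nu_1}=c''\,\Pi_{\nu_1}$ without ever writing down a basis. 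What your route buys is brevity and robustness (no index bookkeeping, no implicit claim about how the bases of different copies line up); what the paper's explicit computation buys is that a reader unfamiliar with the intertwiner formulation of Schur can verify each step by inspection. One small point worth making explicit in your write-up: the step ``$V_\alpha\perp V_\beta$ when the irreps are non-isomorphic'' is not literally Schur's lemma but its standard unitary consequence---the restriction $\Pi_\beta|_{V_\alpha}$ is an intertwiner and hence zero---so a one-line justification there would make the first claim airtight.
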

\begin{proof}
If two irreps are not isomorphic to each other, they belong to different isotypical subspaces of $\rep$, and therefore the product of their projectors (or transporters) is zero. Let us now assume that all the irreps are isomorphic to $\lambda$, and therefore belong to the same isotypical subspace. Then, we can define isomorphic bases $\{\ket{i}\}_{i\in[d]},\{\ket{\psi_i}\}_{i\in[d]}$, $\{\ket{\phi_i^{(1)}}\}_{i\in[d]}$ and $\{\ket{\phi_i^{(2)}}\}_{i\in[d]}$ for the subspaces spanned by irreps $\lambda, \mu,\nu_1$ and $\nu_2$, respectively, such that
\begin{align*}
 \Pi_\lambda&=\sum_{i=1}^d\proj{i}, &
 \Pi_\mu&=\sum_{i=1}^d\proj{\psi_i}, &
 \Pi_{\nu_1\leftarrow \nu_2}&=\sum_{i=1}^d\ket{\phi_i^{(1)}}\bra{\phi_i^{(2)}}.
\end{align*}
Let us also choose a basis $\{\ket{i,j}\}_{(i,j)\in[d]\times[m]}$ for the whole $(d\times m)$-dimensional isotypical subspace, $m$ being the multiplicity of the irreps. Without loss of generality, we may choose this basis such that $\{\ket{i,1}\}_{i\in[d]}=\{\ket{i}\}_{i\in[d]}$ corresponds to $\lambda$ itself, and, for any $j\neq 1$, $\{\ket{i,j}\}_{i\in[d]}$ corresponds to a copy of $\lambda$. Since $\lambda, \mu,\nu_1$ and $\nu_2$ are isomorphic, there exists coefficients $\{\alpha_j\}_{j\in[m]}, \{\beta_j^{(1)}\}_{j\in[m]}$ and $\{\beta_j^{(2)}\}_{j\in[m]}$ such that 
\begin{align*}
 \ket{\psi_i}&=\sum_{j=1}^m\alpha_j\ket{i,j}, &
 \ket{\phi_i^{(1)}}&=\sum_{j=1}^m\beta_j^{(1)}\ket{i,j}, &
 \ket{\phi_i^{(2)}}&=\sum_{j=1}^m\beta_j^{(2)}\ket{i,j}.
\end{align*}
We now have
\begin{align*}
 \Tr\left[\Pi_\lambda\Pi_\mu\Pi_\lambda \Pi_{\nu_1\leftarrow \nu_2}\right]
		&=\sum_{i=1}^d\braket{i}{\psi_i}\braket{\psi_i}{i}\braket{i}{\phi_i^{(1)}}\braket{\phi_i^{(2)}}{i}\\
&=d\cdot\braket{1}{\psi_1}\braket{\psi_1}{1}\braket{1}{\phi_1^{(1)}}\braket{\phi_1^{(2)}}{1}\\
&=\frac{1}{d}\sum_{i=1}^d\braket{i}{\psi_i}\braket{\psi_i}{i}\cdot\sum_{j=1}^d\braket{j}{\phi_j^{(1)}}\braket{\phi_j^{(2)}}{j}\\
&=\frac{1}{d}\Tr\left[\Pi_\lambda\Pi_\mu\right]\cdot\Tr\left[\Pi_\lambda \Pi_{\nu_1\leftarrow\nu_2}\right].
\end{align*}
Similarly, we also have
\begin{align*}
 \Tr\left[\Pi_\mu \Pi_{\nu_1\leftarrow \nu_2}\right]
\cdot\Tr\left[\Pi_\mu \Pi_{\nu_2\leftarrow \nu_1}\right]
&=\sum_{i=1}^d\braket{i}{\phi_i^{(1)}}\braket{\phi_i^{(2)}}{i}\cdot\sum_{j=1}^d\braket{j}{\phi_j^{(2)}}\braket{\phi_j^{(1)}}{j}\\
&=d^2\cdot\braket{1}{\phi_1^{(1)}}\braket{\phi_1^{(2)}}{1}\braket{1}{\phi_1^{(2)}}\braket{\phi_1^{(1)}}{1}\\
&=\sum_{i=1}^d\braket{i}{\phi_i^{(1)}}\braket{\phi_i^{(1)}}{i}\cdot\sum_{j=1}^d\braket{j}{\phi_j^{(2)}}\braket{\phi_j^{(2)}}{j}\\
	&=\Tr\left[\Pi_\mu \Pi_{\nu_1}\right]\cdot\Tr\left[\Pi_\mu \Pi_{\nu_2}\right].
\end{align*}
\end{proof}

\section{Applications}
\subsection{\search}

By considering Grover's \search{} problem~\cite{Gro96}, which we denote $\search_n$, we can show that the inequalities in Theorem~\ref{thm:comparison-methods} are strict.

\begin{thm}\label{thm:search}
For any $0<\varepsilon<1-\frac{1}{n}$, we have
\begin{align*}
 \NADV_\varepsilon(\search_n)&=\Omega\left((1-\varepsilon-2\sqrt{\varepsilon(1-\varepsilon)})\sqrt{n}\right)\\
 \ADV_\varepsilon(\search_n)&=\Omega\left((\sqrt{1-\varepsilon}-1/\sqrt{n})^2\sqrt{n}\right)\\
\MADV_\varepsilon(\search_n)&=\Omega\left((\sqrt{1-\varepsilon}-1/\sqrt{n})\sqrt{n}\right).
\end{align*}
In particular, for $\varepsilon>1/5$, we have $\MADV_\varepsilon(\search_n)>\ADV_\varepsilon(\search_n)> \NADV_\varepsilon(\search_n)$.
\end{thm}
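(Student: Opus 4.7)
The plan is to apply the representation-theoretic framework of Section~\ref{sec:representation-theory} with $\Fset = \{f_{1},\dots,f_{n}\}$ where $f_{i}(x) = \delta_{ix}$, and automorphism group $G = S_{n}$ permuting indices. The induced representation $\rep$ is multiplicity-free (Lemma~\ref{lem:multiplicity-free}) and decomposes as $V_{\mathrm{triv}}\oplus V_{\mathrm{std}}$ with $V_{\mathrm{triv}} = \Span\{\ket{\delta}\}$ one-dimensional and $V_{\mathrm{std}}$ its $(n-1)$-dimensional complement, so by Lemma~\ref{lem:symmetry-rho-gamma} it suffices to consider adversary matrices $\Gamma = \gamma_{1}\Pi_{\mathrm{triv}} + \gamma_{2}\Pi_{\mathrm{std}}$. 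Since the target states $\ket{\PP(f_{i})} = \ket{i}$ are orthonormal, $\target = \I/n$ and $\target\circ M = \I/n$ for every junk matrix $M$, so $\Tr[\Gamma(\target\circ M)] = \Tr\Gamma/n$ and $\Tr[\Pi_{\mathrm{triv}}(\target\circ M)] = 1/n$.

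For the additive bound, the orthogonality condition forces $\gamma_{1} = 1$ and $\gamma_{2} = -1/(n-1)$, giving the Grover matrix $\aGamma = (J-\I)/(n-1)$. A direct computation yields $\aGamma - \aGamma_{x} = \frac{1}{\sqrt{n-1}}(\ket{f_{x}}\bra{v_{x}}+\ket{v_{x}}\bra{f_{x}})$ with $\ket{v_{x}} = \frac{1}{\sqrt{n-1}}\sum_{j\neq x}\ket{f_{j}}$, whence $\norm{\aGamma - \aGamma_{x}} = 1/\sqrt{n-1}$, and Theorem~\ref{thm:additive-adversary} gives the first stated bound. For the hybrid bound, I would reuse this $\aGamma$ with $\alambda = 0$: the eigenvalue of $\aGamma$ above $0$ is $1$, attained on $V_{\mathrm{triv}}$, so $\Pi_{\bad} = \proj{\delta}$ and $\eta = 1/n$. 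Theorem~\ref{thm:new-AAM} with $\tilde{K} = (\sqrt{1-\varepsilon}-1/\sqrt{n})^{2}$ gives the second bound.

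The multiplicative bound is the main technical step. I would take $\Gamma = \Pi_{\mathrm{triv}} + \lambda\Pi_{\mathrm{std}}$ with $\lambda > 1$ to be optimized. The bad subspace is again $V_{\mathrm{triv}}$ with $\eta = 1/n$, so $K = 1 + (\lambda-1)(\sqrt{1-\varepsilon}-1/\sqrt{n})^{2}$. Restricting $\rep$ to $G_{x} = S_{n-1}$, the trivial of $S_{n-1}$ appears with multiplicity two (with copies $\ket{\delta}$ and the unique $S_{n-1}$-invariant $\ket{\delta_{x}^{\perp}}\in V_{\mathrm{std}}$), while the $(n-2)$-dimensional standard of $S_{n-1}$ has multiplicity one and contributes trivially ($\Gamma$ and $\Gamma_{x}$ both act as $\lambda\I$ there). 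By Theorem~\ref{thm:adv-representation} and Lemma~\ref{lem:trace-proj}, $\norm{\Gamma_{x}^{1/2}\Gamma^{-1/2}}^{2}$ reduces to the norm of a single $2\times 2$ matrix $\Delta_{x}^{\mathrm{triv}}$. Computing its entries from the overlaps $|\braket{\delta}{f_{x}}|^{2} = 1/n$ and $|\braket{\delta}{v_{x}}|^{2} = (n-1)/n$, and using the fortuitous factorization $au^{2}+(a-1)^{2}u-2(a^{2}+1) = (u-2)(au+a^{2}+1)$ with $a=n-1$ and $u=\lambda+1/\lambda$, the eigenvalues collapse to the closed form $\mu_{\pm} = W/(W\mp\sqrt{n-1}\,v)$ with $v=\sqrt{\lambda}-1/\sqrt{\lambda}$ and $W=\sqrt{n^{2}+(n-1)v^{2}}$. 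Setting $\lambda - 1 = t/s^{2}$ with $s=\sqrt{1-\varepsilon}-1/\sqrt{n}$ and optimizing the constant $t$ (the maximum is attained near $t\approx 4$) gives $\log K = \Theta(1)$ while $\log\mu_{+} = O(1/(s\sqrt{n}))$, yielding the $\Omega(s\sqrt{n})$ bound.

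The main obstacle is the algebraic simplification of the $2\times 2$ eigenvalue computation: the four entries of $\Delta_{x}^{\mathrm{triv}}$ are individually unwieldy but collapse cleanly thanks to the factorization above. Once the three bounds are established, the strict inequalities for $\varepsilon > 1/5$ follow because symmetry restricts valid $\aGamma$ for $\NADV$ to a one-parameter family, making $\NADV_{\varepsilon}(\search_{n}) \leq (1-C(\varepsilon))\sqrt{n-1}$ which is nonpositive in this regime, whereas the lower bounds established for $\ADV$ and $\MADV$ are strictly positive for $\varepsilon<1-1/n$ and exhibit strictly separated leading scalings $s^{2}\sqrt{n}$ versus $s\sqrt{n}$ whenever $s<1$.
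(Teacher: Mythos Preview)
Your proposal is correct and follows essentially the same route as the paper: the automorphism group $S_n$, the two-irrep decomposition $V_{\mathrm{triv}}\oplus V_{\mathrm{std}}$, the same adversary matrices, and the same choice of bad subspace with $\eta=1/n$. The additive and hybrid computations are identical to the paper's (the paper leaves $\gamma$ free and sets $\alambda=\gamma$, which is equivalent to your choice since the $(1-\gamma)$ factors cancel between numerator and denominator).

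The one substantive difference is in the multiplicative bound. The paper does not attempt an exact diagonalization of the $2\times 2$ block; instead it expands $\Delta_x^0$ to first order in $\alpha=1/\sqrt{n}$, obtaining
\[
\Delta_x^0=\begin{pmatrix}1 & -\frac{\gamma-1}{\sqrt{\gamma}}\alpha\\ -\frac{\gamma-1}{\sqrt{\gamma}}\alpha & 1\end{pmatrix}+O(\alpha^2),
\]
and then invokes the Gershgorin circle theorem to place the eigenvalues in $\bigl[1-\tfrac{\gamma-1}{\sqrt{\gamma n}},\,1+\tfrac{\gamma-1}{\sqrt{\gamma n}}\bigr]$. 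This bypasses your ``fortuitous factorization'' entirely and leads directly to the same optimization $\gamma=1+1/\beta^2$ (your $\lambda-1=t/s^2$) and the same $\Omega(\beta\sqrt{n})$ bound. Your exact-eigenvalue route is valid but is more work than necessary; the Gershgorin shortcut is the cleaner argument here and is worth knowing.
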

In order to illustrate our method, we will use representation theory to compute the adversary bounds, even though this is not really necessary for such a simple problem. The $\Omega(\sqrt{n})$ lower bound for large success probability is well-known (see e.g~\cite{BBB+97}), and the case of small success probability has been studied in~\cite{Amb05, Spa08} using the multiplicative adversary method. The fact that a non-trivial bound can also be found in this regime using and additive adversary method (our hybrid method) is new to the present work.

\begin{proof}
Let us denote by $f_x$ the oracle that marks element $x$, that is, $f_x(x')=1$ if $x'=x$ and $0$ otherwise. Let us consider the symmetric group $S_n$ acting on $f$ as $f_\pi(x)=f(\pi(x))$. This groups forms an automorphism for $\search_n$, and the associated representation $\rep$ corresponds to the natural representation acting on $[n]$. This representation decomposes into two irreps, the one-dimensional trivial representation on $V_0=\Span\{\ket{\delta}\}$, where $\ket{\delta}=(1/\sqrt{n})\sum_x \ket{f_x}$, and an $(n-1)$-dimensional irrep on $V_1=V_0^\perp$. Following Lemma~\ref{lem:symmetry-rho-gamma}, we set $\Gamma=\Pi_{0}+\gamma\Pi_{1}$.

Let us now fix some input $x\in\Sigma_{I}$ to the oracle (by symmetry, the calculation will be the same for any $x$). When restricting $G$ to $G_x=\{\pi\in G: \pi(x)=x\}$, the second representation splits into two irreps, the first one being a second copy of the trivial representation, now acting on $V_{1,0}=\Span\{\ket{\delta_x}\}$, where $\ket{\delta_x}=(\ket{\delta}-\sqrt{n}\ket{f_x})/\sqrt{n-1}$. Following our convention, we index the three irreps of $G_x$ with labels $(k,l)$ as $(0,0)$, $(1,0)$ and $(1,1)$ (no need for a third index as each irrep of $G_x$ appears only once in a given irrep of $G$). Since we have one irrep with multiplicity two, and one irrep with multiplicity one, the matrix $\Gamma_x$ will block-diagonalize into two blocks: one $2\times2$ block $\Gamma_x^0$ on $V_0\oplus V_{1,0}$, and one $(n-1)\times(n-1)$ block $\Gamma_x^1$ on $V_{1,1}$.

It is easy to check that only the block corresponding to the trivial representation $l=0$ is relevant. Indeed, since the other representation has multiplicity 1, the corresponding block is characterized by a single scalar, and it is straightforward to check that $\tilde{\Delta}_{x}^1=0$ and $\Delta_{x}^1=1$, so that the maximum in Theorem~\ref{thm:adv-representation} will not be achieved by this block.

Let us now consider the other representation, corresponding to a $2\times2$ block. In order to compute matrices $\tilde{\Delta}_{x}^0$, and $\Delta_{x}^0$, we first need to compute $\Pi_{0}\circ D_x$ and $\Pi_{1}\circ D_x$, which is straightforward using Fact~\ref{fact:CP-map}. In the basis $\{\ket{\delta},\ket{\delta_x}\}$, we obtain
\begin{align*}
\Pi_{0}\circ D_x=
 \begin{pmatrix}
  1-2\alpha^2(1-\alpha^2) & \alpha\sqrt{1-\alpha^2}(1-2\alpha^2)\\
\alpha\sqrt{1-\alpha^2}(1-2\alpha^2) & 2\alpha^2(1-\alpha^2)
 \end{pmatrix},
\end{align*}
where $\alpha=1/\sqrt{n}$, and therefore $\Pi_{1}\circ D_x=\I-\Pi_{0}\circ D_x$. For the additive adversary methods, we then obtain from Theorem~\ref{thm:adv-representation}
\begin{align*}
 \tilde{\Delta}_x^0=(1-\gamma)\alpha\sqrt{1-\alpha}
 \begin{pmatrix}
  -2\alpha\sqrt{1-\alpha^2} & 1-2\alpha^2\\
1-2\alpha^2 & 2\alpha\sqrt{1-\alpha^2}
 \end{pmatrix}.
\end{align*}
The matrix has eigenvalues $\pm 1$, so that $\norm{\tilde{\Delta}_x^0}=(1-\gamma)\alpha\sqrt{1-\alpha}$.

For the usual additive adversary method, we need to choose $\gamma$ such that $\Tr(\aGamma (\target\circ M))=0$ for any junk matrix $M$. Here, $\target=\I/n$, therefore this condition reduces to $\Tr(\aGamma)=0$, which is satisfied for $\gamma=-1/(n-1)$. This yields $\norm{\tilde{\Delta}_x^0}=1/\sqrt{n-1}$, and therefore $\NADV_\varepsilon(\search_n)=(1-\varepsilon-2\sqrt{\varepsilon(1-\varepsilon)})\sqrt{n-1}$, which is $\Omega(\sqrt{n})$ for $\varepsilon<1/5$, but negative otherwise.

For the new additive adversary method, we can choose $\alambda=\gamma$, so that $V_\bad=V_0$ and $\eta=1/n$. This implies that as soon as $\varepsilon<1-1/n$, we have a non-trivial bound $\ADV_\varepsilon(\search_n)=(\sqrt{1-\varepsilon}-1/\sqrt{n})^2\sqrt{n-1}$.

For the multiplicative adversary method, we choose $\gamma>1$ and $\alambda=\gamma$, so that $V_\bad=V_0$ and $\eta=1/n$. We then obtain similarly
\begin{align*}
 \Delta_x^0&=
 \begin{pmatrix}
  1+2(\gamma-1)\alpha^2(1-\alpha^2) & -\frac{\gamma-1}{\sqrt{\gamma}}\alpha\sqrt{1-\alpha^2}(1-2\alpha^2)\\
 -\frac{\gamma-1}{\sqrt{\gamma}}\alpha\sqrt{1-\alpha^2}(1-2\alpha^2) & 1-2\frac{\gamma-1}{\gamma}\alpha^2(1-\alpha^2)
 \end{pmatrix}\\
&=
\begin{pmatrix}
  1 & -\frac{\gamma-1}{\sqrt{\gamma}}\alpha\\
 -\frac{\gamma-1}{\sqrt{\gamma}}\alpha & 1
 \end{pmatrix}+O(\alpha^2).
\end{align*}
By Gershgorin circle theorem, the eigenvalues of this matrix lie in the range $[1-\frac{\gamma-1}{\sqrt{\gamma n}},1+\frac{\gamma-1}{\sqrt{\gamma n}}]$, so that
\begin{align*}
 \MADV(\search_n)\geq\frac{\log [1+(\gamma-1)\beta^2]}{\log[1+(\gamma-1)/{\sqrt{\gamma n}}]},
\end{align*}
where $\beta=\sqrt{1-\varepsilon}-1/\sqrt{n}$. In the limit $\gamma\xrightarrow[>]{}1$, we obtain the same bound as for the new additive adversary method. However, for $\gamma= 1+1/\beta^2$, we obtain
\begin{align*}
 \MADV(\search_n)\geq(\log 2)\cdot \frac{\sqrt{\gamma n}}{\gamma-1}=\Omega\left((\sqrt{1-\varepsilon}-1/\sqrt{n})\sqrt{n}\right),
\end{align*}
where we have used the fact that $\log(1+x)\leq x$.
\end{proof}

\subsection{\IE\label{sec:index-erasure}}

Let us now consider the following coherent quantum state generation problem, called \IE~\cite{Shi02}: given an oracle for an injective function $f:[N]\to[M]$, coherently generate the superposition $\ket{\psi_f}=\frac{1}{\sqrt{N}}\sum_{x=1}^N\ket{f(x)}$ over the image of $f$. The name \IE{} comes from the fact that we can easily prepare the superposition $\frac{1}{\sqrt{N}}\sum_{x=1}^N\ket{x}\ket{f(x)}$ using one oracle call, so the problem is to \emph{erase} the index $\ket{x}$.

The previously best known lower bound for \IE{} is $\Omega(\sqrt[5]{N/\log N})$, which follows from a reduction to the \SE{} problem~\cite{Mid04}.
It is also known that this problem may be solved with $O(\sqrt{N})$ oracle calls. Indeed, given $\ket{f(x)}$, one can find the index $\ket{x}$ with $O(\sqrt{N})$ oracle calls using Grover's algorithm for \search~\cite{Gro96}. Therefore, the quantum circuit for this algorithm maps the superposition $\ket{\psi_f}=\frac{1}{\sqrt{N}}\sum_{x=1}^N\ket{f(x)}$ to the the state $\frac{1}{\sqrt{N}}\sum_{x=1}^N\ket{x}\ket{f(x)}$. The algorithm for \IE{} then follows by inverting this circuit.

We now show that this algorithm is optimal by proving a matching lower bound using the hybrid adversary method. 
\begin{thm}
\label{thm:index-erasure}
 $Q_\varepsilon(\IE)=\Theta(\sqrt{N})$.
\end{thm}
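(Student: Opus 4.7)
The upper bound $O(\sqrt N)$ is given by reducing to Grover's search, as sketched in the introduction. For the lower bound I plan to apply the hybrid adversary method (Theorem~\ref{thm:new-AAM}) together with the representation-theoretic machinery of Section~\ref{sec:representation-theory}. The natural automorphism group is $G = S_N \times S_M$ acting on an injective $f\colon[N]\to[M]$ by $f_{\pi,\tau} = \tau \circ f \circ \pi$; the associated $V_{\pi,\tau}$ is just the permutation of the output register by $\tau$, since $\ket{\psi_f}$ depends on $f$ only through $\Im(f)$. The $G$-action on $\Fset$ is transitive with point-stabilizer isomorphic to $S_N\times S_{M-N}$ embedded diagonally, the hypothesis of Lemma~\ref{lem:multiplicity-free} is verified by choosing an appropriate involution in $G$ for each pair $f,g$, and hence $\rep$ is multiplicity-free. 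Frobenius reciprocity then gives
\begin{align*}
 \mathbb{C}[\Fset] \;\cong\; \bigoplus_{(\lambda_N,\lambda_M)} V_{\lambda_N}\otimes V_{\lambda_M},
\end{align*}
summed over pairs with $\lambda_N\subseteq\lambda_M$ and $\lambda_M/\lambda_N$ a horizontal strip of size $M-N$; using the convention of Section~\ref{sec:notations} both diagrams can be parameterized by a common partition $\lambda$ sitting below the first row.

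Because $\ket{\psi_f}$ only depends on $\Im(f)$, the target Gram matrix $\target$ has image inside $\Span\{A_y : y\in[M]\}$ with $A_y = \sum_{f:y\in\Im(f)}\ket{f}$. Computing $\braket{A_y}{A_{y'}}$ shows that this span carries the $G$-representation $V_{(N)}\otimes(V_{(M)}\oplus V_{(M-1,1)})$, and diagonalizing the associated Gram matrix yields
\begin{align*}
 \target \;=\; \tfrac{N}{M}\,\Pi_{\mathrm{triv}} + \tfrac{M-N}{M(M-1)}\,\Pi_{\mathrm{std}},
\end{align*}
where $\Pi_{\mathrm{std}}$ projects onto $V_{(N)}\otimes V_{(M-1,1)}$. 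I will take the additive adversary matrix $\aGamma = \sum_k \gamma_k \Pi_k$ with $\gamma_{\mathrm{triv}} = 1$ and $\gamma_k = \alambda$ (for some $\alambda<1$) on every other irrep of $G$. Then $V_{\bad}$ is exactly the trivial irrep, so $\eta = \Tr[\Pi_{\bad}\target] = N/M$ and Theorem~\ref{thm:new-AAM} gives
\begin{align*}
 Q_\varepsilon(\IE) \;\geq\; \frac{(1-\alambda)\bigl(\sqrt{1-\varepsilon}-\sqrt{N/M}\bigr)^{2}}{\max_x \norm{\aGamma-\aGamma_x}}.
\end{align*}
The numerator is $\Omega(1-\alambda)$ whenever $M$ is a sufficiently large constant times $N$ and $\varepsilon$ is small; the remaining regime (small $M$) can be handled by enlarging $V_{\bad}$ to include the other irreps on which $\target$ has large eigenvalue.

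The real work is bounding $\max_x\norm{\aGamma-\aGamma_x}$ using Theorem~\ref{thm:adv-representation}. The stabilizer is $G_x = S_{N-1}\times S_M$, and restricting each $V_{\lambda_N}\otimes V_{\lambda_M}$ to $G_x$ splits $V_{\lambda_N}$ via the branching rule $S_N \downarrow S_{N-1}$ (remove a corner of $\lambda_N$). Consequently an irrep $l=(\mu,\lambda_M)$ of $G_x$ in $\mathbb{C}[\Fset]$ has multiplicity equal to the number of valid $\lambda_N$ obtained from $\mu$ by adding a box while keeping $\lambda_M/\lambda_N$ a horizontal strip, which is at most the number of addable corners of $\mu$ and hence $O(1)$; so each $\tilde\Delta_x^l$ is at most a $3\times 3$ matrix. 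Every entry of $\tilde\Delta_x^l$ carries a factor $\gamma_{k_1}-\gamma_{k_2}$, so only entries coupling the trivial irrep of $G$ to a non-trivial one contribute, each with prefactor $(1-\alambda)$. The remaining factor is a trace of products of four projectors onto irreps of $G_{xy} = S_{N-1}\times S_{M-1}$; Lemma~\ref{lem:trace-proj} reduces these to products of pairwise overlaps, which in turn become ratios of dimensions of $S_N$- and $S_{N-1}$-irreps along the relevant branching chain. Evaluating these ratios via the hook-length formula yields $\max_x\norm{\aGamma-\aGamma_x} = O((1-\alambda)/\sqrt N)$, from which $Q_\varepsilon(\IE) = \Omega(\sqrt N)$ follows.

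The hard part will be the last step: tracking the representations along the chain $G \supseteq G_x \supseteq G_{xy}$, computing the entries of $\tilde\Delta_x^l$ in closed form via branching rules and hook lengths, and verifying that the dimension ratios aggregate to the sharp $1/\sqrt N$ factor rather than something weaker. A secondary technical point is the handling of the small-$M$ regime, where $N/M$ approaches $1$: there one must enlarge $V_{\bad}$ to include additional irreps on which $\target$ has non-negligible eigenvalue, and simultaneously re-tune the weights $\gamma_k$ so that the small-matrix norm computation above still goes through with the same asymptotics.
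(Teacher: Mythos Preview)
Your two-level adversary matrix $\aGamma = \Pi_{\mathrm{triv}} + \alambda\,\Pi_{\mathrm{triv}}^{\perp}$ cannot give more than a constant lower bound. Write $\aGamma = \alambda\I + (1-\alambda)\Pi_0$ with $\Pi_0=\proj{\delta}$; then $\aGamma-\aGamma_x = (1-\alambda)\bigl(\Pi_0 - (\Pi_0)_x\bigr)$. For any fixed $x$, the vectors $\Pi_y^x\ket{\delta}$ ($y\in[M]$) are pairwise orthogonal and all have squared norm $1/M$, so $(\Pi_0)_x=\sum_y\Pi_y^x\proj{\delta}\Pi_y^x = \tfrac{1}{M}\Pi_V$ where $V$ is the $M$-dimensional span of these vectors and $\ket{\delta}\in V$. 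Hence $\Pi_0-(\Pi_0)_x$ has eigenvalue $1-1/M$ on $\ket{\delta}$, and $\max_x\norm{\aGamma-\aGamma_x}=(1-\alambda)(1-1/M)$. Plugging into Theorem~\ref{thm:new-AAM} the factors of $(1-\alambda)$ cancel and you get $Q_\varepsilon(\IE)=\Omega(1)$, not $\Omega(\sqrt N)$. The situation is unlike \search{} because here the oracle output $f(x)$ is uniformly distributed over $[M]$, so a single query fully decoheres $\ket{\delta}$ across the $M$ blocks; in \search{} the output is $0$ with probability $1-1/n$, which is exactly where the $1/\sqrt n$ comes from. Your assertion that each entry of $\tilde\Delta_x^l$ carries a factor $\gamma_{k_1}-\gamma_{k_2}$ is also not what Theorem~\ref{thm:adv-representation} says: the entries involve $\sum_k\gamma_k\Tr[\cdots]-\gamma_{k_1}\delta_{k_1k_2}$, and with your weights the diagonal entry at $k_1=\mathrm{triv}$ equals $(1-\alambda)\bigl(\tfrac{1}{d_l}\sum_y\Tr[\Pi_y^x\Pi_0\Pi_y^x\Pi^l]-1\bigr)$, which is $-(1-\alambda)(1-1/M)$ as above.

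The missing idea is that the adversary matrix must be a \emph{staircase} across a carefully chosen family of irreps. In the paper's proof one calls an irrep $(\lambda_N,\lambda_M)$ ``bad'' when the two diagrams have the same shape $\lambda$ below the first row, sets $\aGamma=\sum_\lambda\gamma_{|\lambda|}\Pi_{(\lambda,\lambda)}$ with $\gamma_{|\lambda|}=\max\{0,\,1-|\lambda|/\sqrt N\}$, and puts weight $0$ on all ``good'' irreps (those with $\lambda_N\subsetneq\lambda_M$ below the first row). The target state $\target$ has weight $1-N/M$ on a good irrep, so one may take $\alambda=0$ and $\eta=N/M$. The point of the staircase is that one oracle call, viewed through the chain $G\supset G_x\supset G_{xy}$, can only connect bad irreps $(\lambda,\lambda)$ and $(\lambda^\pm,\lambda^\pm)$ that differ by a single box below the first row; the resulting matrices $\tilde\Delta_x^l$ then pick up the increment $\gamma_{|\lambda|}-\gamma_{|\lambda|+1}=1/\sqrt N$ rather than $1-\alambda$. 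The three relevant $G_x$-irreps are $(\lambda,\lambda)$, $(\lambda,\lambda^+)$, $(\lambda,\lambda^{++})$ with multiplicities $1,2,3$, and the hook-length computations show each block has norm $O(1/\sqrt N)$. Your outline of the representation theory (branching along $S_N\downarrow S_{N-1}$, multiplicity-freeness, reduction to small blocks via Theorem~\ref{thm:adv-representation}) is on the right track, but without the graded weights on the bad irreps the argument cannot produce $\sqrt N$.
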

\begin{proof}
Let $(\pi,\tau)\in S_N\times S_M$ act on the set $\Fset$ of injective functions from $[N]$ to $[M]$ by mapping $f$ to $f_{\pi,\tau}=\tau\circ f\circ\pi$. Since we can obtain the state $\ket{\psi_f}$ from $\ket{\psi_{f_{\pi,\tau}}}$ by applying the permutation $\tau^{-1}$ on the target register, the whole group $G=S_N\times S_M$ defines an automorphism group for the problem.

\paragraph{Representations.}
Let us study the representation $\rep$ corresponding to the action of $G$ on the set of injective functions $\Fset$. From Lemma~\ref{lem:multiplicity-free}, this representation is multiplicity-free: indeed, for any $f,g\in\Fset$, it is easy to construct a group element $(\pi,\tau)$ that maps both $f$ to $g$ and $g$ to $f$. Therefore, any irrep of $G$ appears in $\rep$ at most once. Let us now show that many irreps do not appear at all. Recall that irreps of $G=S_N\times S_M$ can be represented by pairs of Young diagrams $(\lambda_N,\lambda_M)$, where $\lambda_N$ has $N$ boxes, and $\lambda_M$ has $M$ boxes~\cite{Sag01}. We show that only irreps where the diagram $\lambda_N$ is contained in the diagram $\lambda_M$ can appear. We show this by induction on $M$, starting from $M=N$. For the base case, the set of injective functions $\Fset$ is isomorphic to the set of permutations in $S_N$, and $(\pi,\tau)\in S_N\times S_N$ acts on a permutation $\sigma$ as $\tau\sigma\pi$. Therefore, the only irreps which occur in $\rep$ are those where the two diagrams are the same, that is, $\lambda_N=\lambda_M$. When extending the range of functions in $\Fset$ from $M$ to $M+1$, we induce irreps of $S_N\times S_M$ to irreps of $S_N\times S_{M+1}$ by adding an extra box on the diagram corresponding to $S_M$. Since we start from a case where the two diagrams are the same, we can only obtain pairs of diagrams $(\lambda_N,\lambda_M)$ where $\lambda_N$ is contained inside $\lambda_M$.

\paragraph{Initial and target states.}
The initial state is $\rho_0=\proj{\delta}$, where $\ket{\delta}=\frac{1}{\sqrt{|\Fset|}}\sum_{f\in\Fset}\ket{f}$ is the superposition over all injective functions, which is invariant under any element $(\pi,\tau)\in G$. Therefore, it corresponds to the trivial one-dimensional irrep of $S_N\times S_M$, represented by a pair of diagrams $(\lambda_N,\lambda_M)$ where both diagrams contain only one row of $N$ and $M$ boxes, respectively (see Fig.~\ref{fig:YoungDiagrams}). Let $V_0=\Span\{\ket{\delta}\}$ be the corresponding one-dimensional subspace. We now show that the target state $\target$ is a mixed state over $V_0\oplus V_1$, where $V_1=\Span\{\ket{\phi_y}:y\in[M]\}$ is the $(M-1)$-dimensional subspace spanned by states $\ket{\phi_y}=\sqrt{1-({N}/{M})}\ket{\psi_y}-\sqrt{{N}/{M}}\ket{\bar{\psi}_y}$, $\ket{\psi_y}$ being the uniform superposition over functions $f$ such that $y\in\Im(f)$, and $\ket{\bar{\psi}_y}$ the uniform superposition over functions $f$ such that $y\notin\Im(f)$.
This subspace corresponds to the irrep represented by diagrams $(\lambda_N,\lambda_M)$ where $\lambda_N$ contains only one row of $N$ boxes, and $\lambda_M$ has $M-1$ boxes on the first row and one box on the second (see Fig~\ref{fig:YoungDiagrams}). We have for the target state
\begin{align*}
 \target&=\inv{|\Fset|}\sum_{f,f'\in\Fset}\braket{\psi_f}{\psi_{f'}}\ketbra{f'}{f}
=\inv{|\Fset|}\sum_{f,f'\in\Fset}\frac{|\Im(f)\cap\Im(f')|}{N}\ketbra{f'}{f}\\
&=\frac{1}{M}\sum_{y=1}^M\proj{\psi_y}
=\frac{N}{M}\proj{\delta}+\left(1-\frac{N}{M}\right)\frac{1}{M}\sum_{y=1}^M\proj{\phi_y}\\
&=\frac{N}{M}\rho_0+\left(1-\frac{N}{M}\right)\rho_1,
\end{align*}
where $\rho_0$ and $\rho_1$ are the maximally mixed states over $V_0$ and $V_1$, respectively. 

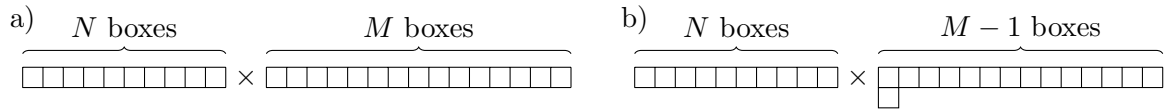
\begin{figure}
\begin{tikzpicture}[scale=.9, decoration=brace]
\node at (0,1) {a)};
\draw[step=.3] (0,0) grid (3,.3); 
\node at (3.3, .15) {$\times$};
\draw[step=.3] (3.6,0) grid (8.1, .3);
\draw[decorate] (0, .5) -- (3, .5); \node at (1.5, .9) {$N$ boxes};
\draw[decorate] (3.6, .5) -- (8.1, .5); \node at (5.85, .9) {$M$ boxes};
\end{tikzpicture}

\vspace*{-1.38cm}

\hspace*{8cm}
\begin{tikzpicture}[scale=.9, decoration=brace]
\node at (0,1) {b)};
\draw[step=.3] (0,0) grid (3,.3);
\node at (3.3, .15) {$\times$};
\draw[step=.3] (3.6,0) grid (7.8, .3);
\draw (3.6,0) rectangle (3.9, -.3);
\draw[decorate] (0, .5) -- (3, .5); \node at (1.5, .9) {$N$ boxes};
\draw[decorate] (3.6, .5) -- (7.8, .5); \node at (5.7, .9) {$M-1$ boxes};
\end{tikzpicture}

\caption{We use $N=10$ and $M=15$. a) Young diagrams corresponding to the one-dimensional space $V_0$. The initial state $\rho^0$ is the projector over $V_0$ ; b) Young diagrams corresponding to the $(M-1)$-dimensional space $V_1$. The target state $\target$ has a large overlap $(1-N/M)$ with the completely mixed state over $V_1$.\label{fig:YoungDiagrams}}
\end{figure}

\paragraph{Adversary matrix.}
Since we start from state $\rho_0$ and we want to reach state $\target$ which has a large weight over $\rho_1$, the strategy for the lower bound is to show that it is hard to transfer weight from $V_0$ to $V_1$. More precisely, we divide all irreps (and by consequence their corresponding subspaces) into two sets: one set of \emph{bad} irreps containing all irreps represented by diagrams $(\lambda_N,\lambda_M)$ where $\lambda_N$ and $\lambda_M$ only differ in their first row, and one set of \emph{good} irreps containing all the other irreps (see Fig.~\ref{fig:YoungDiagramsGoodBad}). By this definition, the irrep corresponding to $V_0$ is bad, while the irrep corresponding to $V_1$ is good. The lower bound is based on the fact that it is hard to transfer weight onto good subspaces (in particular $V_1$) starting from $V_0$. As mentioned in Section~\ref{sec:notations}, from now on, we note the irreps only by their part under the first row; $(\lambda, \lambda')$ then denotes an irrep of $S_N\times S_M$. Therefore, bad irreps are precisely those such that $\lambda=\lambda'$. Recall from Lemma~\ref{lem:symmetry-gammax} that constructing an adversary matrix $\aGamma$ amounts to assigning an eigenvalue to each irrep of $G$. We choose $\aGamma$ such that it has eigenvalue $0$ on good irreps, and eigenvalue $\gamma_{|\lambda|}$ on a bad irrep $(\lambda,\lambda)$, which only depends on $|\lambda|$, i.e.,
\begin{align*}
 \aGamma&=\sum_{\lambda}\gamma_{|\lambda|}\Pi_{(\lambda,\lambda)},
\end{align*}
where $\Pi_{(\lambda,\lambda')}$ is the projector onto the subspace corresponding to the irrep $(\lambda,\lambda')$. We set
\begin{align*}
 \gamma_{|\lambda|}=
\begin{cases}
   1-\frac{|\lambda|}{\sqrt{N}} & \textrm{if } \lambda<\sqrt{N}\\
0 & \textrm{otherwise}.
\end{cases}
\end{align*}
Therefore, we have $\gamma_0=1$ and $0\leq\gamma_{|\lambda|}\leq 1$ for any $\lambda$, and $\aGamma$ is a valid additive adversary matrix. 
 Let $V_\bad$ denote the direct-sum of the bad subspaces. Since $\target$ only has overlap $N/M$ over $V_\bad$, we have $\Tr(\Pi_\bad\target)\leq N/M$. Therefore, we can set the threshold eigenvalue $\alambda=0$ and the base success probability $\eta=N/M$.

\begin{figure}

\begin{tikzpicture}[scale=.85, decoration=brace, step=.3]
\node at (-.8, 1.3) {a)};
\draw (0,0) grid (0.3, 0.3);
\draw (0,.3) grid (0.9,0.6);
\draw (0,.6) grid (1.2,0.9);
\draw (0,.9) grid (2.7,1.2);
\node at (3.3, 1.05) {$\times$};
\draw (3.6,0) grid (3.9, 0.3);
\draw (3.6,.3) grid (4.2,0.6);
\draw (3.6,.6) grid (4.8,0.9);
\draw (3.6,.9) grid (7.5,1.2);

\draw[line width=1.3pt] (0,0) -- (0, 0.9) -- (1.2, 0.9) -- (1.2, 0.6)  -- (0.9, .6) -- (.9, .3) -- (.3, .3) -- (.3, 0) -- cycle;
\draw[line width=1.3pt] (3.6,0) -- (3.6, 0.9) -- (4.8, 0.9) -- (4.8, 0.6)  -- (4.5, .6) -- (4.5, .3) -- (3.9, .3) -- (3.9, 0) -- cycle;

\draw[decorate] (-.15,0) -- (-.15,1.2); \node at(-.7,.6) {$\lambda_N$};
\draw[decorate] (7.65,1.2) -- (7.65,0); \node at (8.2, .6) {$\lambda_M$};

\node (lambda) at (2,.1) {$\lambda$} ;
\draw[->] (lambda) -- (1,.4) ;
\draw[->] (lambda) -- (3.3,.4) ;

\end{tikzpicture}

\vspace*{-1.6cm}

\hspace*{8.3cm}
\begin{tikzpicture}[scale=.85, decoration=brace, step=.3]
\node at (-.4, 1.3) {b)};
\draw (0,0) grid (0.3, 0.3);
\draw (0,.3) grid (0.9,0.6);
\draw (0,.6) grid (.9,0.9);
\draw (0,.9) grid (3,1.2);
\node at (3.3, 1.05) {$\times$};
\draw (3.6,0) grid (4.2, 0.3);
\draw (3.6,.3) grid (4.8,0.6);
\draw (3.6,.6) grid (4.8,0.9);
\draw (3.6,.9) grid (6.9,1.2);

\draw[line width=1.3pt] (0,0) -- (0, 0.9) -- (1.2, 0.9) -- (1.2, 0.6)  -- (0.9, .6) -- (.9, .3) -- (.3, .3) -- (.3, 0) -- cycle;
\draw[line width=1.3pt] (3.6,0) -- (3.6, 0.9) -- (4.8, 0.9) -- (4.8, 0.3)  -- (4.2, .3) -- (4.2, 0) -- cycle;

\node (lambda prime) at (2.2,.1) {$\lambda\neq\lambda'$};
\draw[->] (lambda prime) -- (1,.4) ;
\draw[->] (lambda prime) -- (3.45,.4) ;

\end{tikzpicture}

\caption{\label{fig:YoungDiagramsGoodBad}
We use $N=17$ and $M=21$. a)  Example of a ``bad'' irrep $\lambda_N \times \lambda_M$: the shape of the diagrams below the first row for $S_N$ and $S_M$ are the same $\lambda$ ; b) Example of a ``good'' irrep: the shape of the diagram below the first line of $S_N$ is strictly included into the one for $S_M$.}
\end{figure}
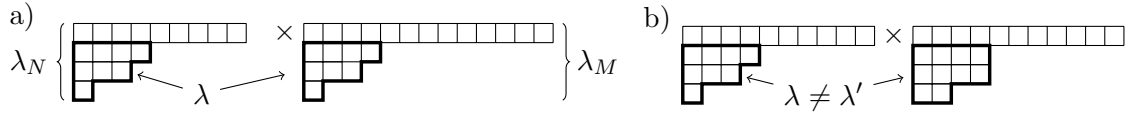

\paragraph{Discussion.}
From Theorem~\ref{thm:adv-representation}, we see that we need to compute the norm of a matrix $\Delta_x^l$ for each irrep $l$ of $G_x=S_{N-1}\times S_M$. We show that these matrices are non-zero only for three different types of irreps of $G_x$. Indeed, for irreps $k$ of $G$ and $l$ of $G_x$, the quantity $\gamma_k\Tr\left[\Pi_y^x\Pi_k\Pi_y^x \Pi^l_{k_1m_1\leftarrow k_2m_2}\right]$ is non-zero only if: \ding{172} $k$ is a bad irrep (otherwise $\gamma_k=0$); \ding{173} $k$ and $l$ restrict to a common irrep of $G_{xy}=S_{N-1}\times S_{M-1}$ (otherwise the product of the projectors is zero). The restrictions of an irrep $(\lambda,\lambda')$ of $G$ to $G_{xy}$ are obtained by removing one box from each of the diagrams $\lambda$ and $\lambda'$. Similarly, the restrictions of an irrep $(\lambda,\lambda')$ of $G_x$ to $G_{xy}$ are obtained by removing one box from $\lambda'$. \ding{174} Note that not all irreps of $G_{xy}$ appear in the projector $\Pi_{y}^x$, as it projects on all injective functions such that $f(x)=y$. Therefore, this set is isomorphic to the set of injective functions from $[N-1]$ to $[M-1]$, and we know that the irrep $\rep$ acting on this set is multiplicity-free, and that only irreps $(\lambda,\lambda')$ where $\lambda$ is contained in $\lambda'$ can occur. Altogether, this implies that only three type of irreps of $G_x=S_{N-1}\times S_M$ lead to non-zero matrices (see Fig.~\ref{fig:discussion})
\begin{enumerate*}
 \item $l=(\lambda,\lambda)$: Same diagram for $S_{N-1}$ and $S_{M}$ below the first row. This irrep has multiplicity one since there is only one way to induce to a valid irrep of $S_N\times S_M$, by adding a box in the first row of the left diagram, leading to irrep $k=(\lambda,\lambda)$.
 \item $l=(\lambda,\lambda^+)$: Diagram for $S_{M}$ has one additional box below the first row. This irrep has multiplicity two since there are two ways to induce to a valid irrep of $S_N\times S_M$, by adding a box either in the first row, leading to $k=(\lambda,\lambda^+)$, or at the missing place below the first row, leading to $k=(\lambda^+,\lambda^+)$.
\item $l=(\lambda,\lambda^{++})$: Diagram for $S_{M}$ has two additional boxes below the first row. This irrep has multiplicity three since there are three ways to induce to a valid irrep of $S_N\times S_M$, by adding a box either in the first row, leading to $k=(\lambda,\lambda^+)$, or at to one of the missing places below the first row, leading to $k=(\lambda^+,\lambda^{++})$.
\end{enumerate*}
Let us now consider these three cases separately.

\newcommand{\scale}{.7}

\newcommand{\lambdaYoung}[1][1]{
\begin{tikzpicture}[scale=#1]
\draw[line width=1.5pt] (0,0) -- (0, 0.9) -- (1.2, 0.9) -- (1.2, 0.6)  -- (0.9, .6) -- (.9, .3) -- (.3, .3) -- (.3, 0) -- cycle;
\draw[step=.3] (0,0) grid (.3, .9);
\draw[step=.3] (.3, .3) grid (.9, .9);
\end{tikzpicture}
}

\newcommand{\lambdaMoins}[1][1]{
\begin{tikzpicture}[scale=#1]
\draw[line width=1.5pt] (0,0) -- (0, 0.9) -- (1.2, 0.9) -- (1.2, 0.6)  -- (.6, .6) -- (.6, .3) -- (.3, .3) -- (.3, 0) -- cycle;
\draw[step=.3] (0,.3) grid (.6, .9);
\draw[step=.3] (.6, .6) grid (1.2, .9);
\end{tikzpicture}
}

\newcommand{\lambdaPlus}[1][1]{
\begin{tikzpicture}[scale=#1]
\draw[line width=1.5pt] (0,0) -- (0, 0.9) -- (1.2, 0.9) -- (1.2, 0.6)  -- (0.9, .6) -- (.9, .3) -- (.6, .3) -- (.6, 0) -- cycle;
\draw[step=.3] (0,0) grid (.3, .9);
\draw[step=.3] (.3, .3) grid (.9, .9);
\end{tikzpicture}
}

\newcommand{\lambdaTilde}[1][1]{
\begin{tikzpicture}[scale=#1]
\draw[line width=1.3pt] (0,0) -- (0, 0.9) -- (1.2, 0.9) -- (1.2, 0.6)  -- (0.6, .6) --  (.6, 0) -- cycle;
\draw[step=.3] (0,0) grid (.6, .9);
\draw[step=.3] (.6, .6) grid (1.2, .9);
\end{tikzpicture}
}

\newcommand{\morebox}[1][1]{
\begin{tikzpicture}[scale=#1] \filldraw[fill=lightgray] (0,0) rectangle (.3, .3); \end{tikzpicture}
}

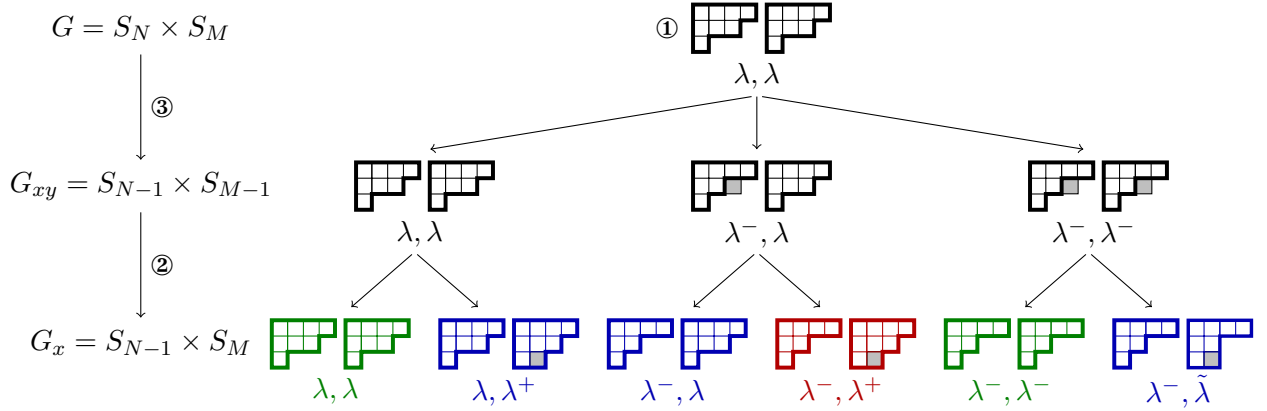
\begin{figure}
\begin{tikzpicture}[scale=\scale]

\node at (-3,0) {$G_x = S_{N-1} \times S_M$};

\node at (0,0) {\color{green!50!black} \lambdaYoung[\scale]};
\node at (1.4,0) {\color{green!50!black} \lambdaYoung[\scale]};
\node at (.7,-.9) {\color{green!50!black} $\lambda, \lambda$};

\node at (3.2,0) {\color{blue!70!black} \lambdaYoung[\scale]};
\node at (4.45, -.3) {\morebox[\scale]};
\node at (4.6,0) {\color{blue!70!black} \lambdaPlus[\scale]};
\node at (3.9,-.9) {\color{blue!70!black} $\lambda, \lambda^+$};

\node at (6.4, 0) {\color{blue!70!black} \lambdaMoins[\scale]};
\node at (7.8, 0) {\color{blue!70!black} \lambdaYoung[\scale]};
\node at (7.1,-.9) {\color{blue!70!black} $\lambda^-, \lambda$};

\node at (9.6,0)   {\color{red!70!black}  \lambdaMoins[\scale]};
\node at (10.85, -.3) {\morebox[\scale]};
\node at (11,0) {\color{red!70!black} \lambdaPlus[\scale]};
\node at (10.3,-.9) {\color{red!70!black} $\lambda^-, \lambda^+$};

\node at (12.8,0) {\color{green!50!black} \lambdaMoins[\scale]};
\node at (14.2,0) {\color{green!50!black} \lambdaMoins[\scale]};
\node at (13.5,-.9) {\color{green!50!black} $\lambda^-, \lambda^-$};

\node at (16,0) {\color{blue!70!black} \lambdaMoins[\scale]};
\node at (17.25, -.3) {\morebox[\scale]};
\node at (17.4,0) {\color{blue!70!black} \lambdaTilde[\scale]};
\node at (16.7,-.9) {\color{blue!70!black} $\lambda^-, \tilde\lambda$};

\node at (-3,3) {$G_{xy} = S_{N-1} \times S_{M-1}$};

\node at (1.6, 3) {\lambdaYoung[\scale]};
\node at (3, 3) {\lambdaYoung[\scale]};
\node at (2.3,2.1) {$\lambda, \lambda$};

\node at (8.16, 3) {\morebox[\scale]};
\node at (8, 3) {\lambdaMoins[\scale]};
\node at (9.4, 3) {\lambdaYoung[\scale]};
\node at (8.7,2.1) {$\lambda^-, \lambda$};

\node at (14.56, 3) {\morebox[\scale]};
\node at (15.96, 3) {\morebox[\scale]};
\node at (14.4, 3) {\lambdaMoins[\scale]};
\node at (15.8, 3) {\lambdaMoins[\scale]};
\node at (15.1,2.1) {$\lambda^-, \lambda^-$};

\node at (-3,6) {$G = S_{N} \times S_{M}$};

\node at (8, 6) {\lambdaYoung[\scale]};
\node at (9.4, 6) {\lambdaYoung[\scale]};
\node at (8.7,5.1) {$\lambda, \lambda$};
\node at (7, 6) {\ding{172}};

\draw[->] (8.7, 4.7) -- (8.7, 3.7);
\draw[->] (8.6, 4.7) -- (2.5, 3.7);
\draw[->] (8.8, 4.7) -- (14.9, 3.7);
\draw[->] (2.15, 1.7) -- ( 1, .7);
\draw[->] (2.25, 1.7) -- ( 3.4, .7);
\draw[->] (8.65, 1.7) -- ( 7.5, .7);
\draw[->] (8.75, 1.7) -- ( 9.9, .7);
\draw[->] (14.95, 1.7) -- ( 13.8, .7);
\draw[->] (15.05, 1.7) -- ( 16.2, .7);
\draw[->] (-3, 5.5) -- node[right] {\ding{174}} (-3, 3.5);
\draw[->] (-3, 2.5) -- node[right] {\ding{173}} (-3, .5);
\end{tikzpicture}

\caption{In accordance with our convention, we draw only the part of the diagram below the first row. 
The condition \ding{172} imposes that the two diagrams on top have the same shape. 
From the first to the second row, one should remove one box to each diagram. When the removed box does not belong to the first row, it is show in light gray. The condition \ding{174} imposes that the left diagram is included into the right one.
The condition \ding{173} gives the third row of diagrams. Finally we have 3 ‘‘generic'' types of irreps: case 1 (blue) where the irreps have the same shape; case 2 (green) where the right diagram has one more box; and case 3 (red) where the the right diagram has 2 more boxes. \label{fig:discussion}}
\end{figure}

\paragraph{Case $(\lambda,\lambda)$.} Since this irrep has multiplicity one, we just need to compute a scalar. As an irrep of $S_{N-1}\times S_M$, $(\lambda,\lambda)$ restricts to only one valid irrep of $S_{N-1}\times S_{M-1}$, by removing a box on the first row of the right diagram, therefore this irrep is also labeled $(\lambda,\lambda)$. Inducing from this irrep of $S_{N-1}\times S_{M-1}$ to $S_{N}\times S_{M}$, we obtain three valid irreps, two ``bad'' ones, $(\lambda,\lambda)$ and $(\lambda^+,\lambda^+)$, and a good one, $(\lambda,\lambda^+)$. To differentiate between projectors of irreps of the different groups, we will from now on use superscripts (for example $\Pi_{\lambda,\lambda}^{N,M}$ denotes a projector on the irrep $(\lambda,\lambda)$ of $S_{N}\times S_{M}$). We therefore have from Theorem~\ref{thm:adv-representation}
\begin{align*}
\Delta_x^{\lambda,\lambda}
=&\frac{\gamma_{|\lambda|}}{d_{\lambda,\lambda}^{N-1,M}}\sum_y\Tr\left[\Pi_y^x\Pi_{\lambda,\lambda}^{N,M}\Pi_y^x \Pi_{\lambda,\lambda}^{N-1,M}\right]
+\frac{\gamma_{|\lambda|+1}}{d_{\lambda,\lambda}^{N-1,M}}\sum_y\Tr\left[\Pi_y^x\Pi_{\lambda^+,\lambda^+}^{N,M}\Pi_y^x \Pi_{\lambda,\lambda}^{N-1,M}\right]
-\gamma_{|\lambda|}\\
=&\frac{M\gamma_{|\lambda|}}{d_{\lambda,\lambda}^{N-1,M}d_{\lambda,\lambda}^{N-1,M-1}}\cdot \Tr\left[\Pi_{\lambda,\lambda}^{N-1,M-1}\Pi_{\lambda,\lambda}^{N,M}\right]\cdot
\Tr\left[\Pi_{\lambda,\lambda}^{N-1,M-1} \Pi_{\lambda,\lambda}^{N-1,M}\right]\\
& +\frac{M\gamma_{|\lambda|+1}}{d_{\lambda,\lambda}^{N-1,M}d_{\lambda,\lambda}^{N-1,M-1}}
\cdot\Tr\left[\Pi_{\lambda,\lambda}^{N-1,M-1}\Pi_{\lambda^+,\lambda^+}^{N,M}\right]\cdot
\Tr\left[\Pi_{\lambda,\lambda}^{N-1,M-1} \Pi_{\lambda,\lambda}^{N-1,M}\right]
-\gamma_{|\lambda|},
\end{align*}
where we have used Lemma~\ref{lem:trace-proj} and the fact that all terms in the sum over $y$ are equal by symmetry. We also have
\begin{align*}
 \Tr\left[\Pi_{\lambda,\lambda}^{N-1,M-1}\Pi_{\lambda,\lambda}^{N,M}\right]
&=\Tr\left[\Pi_{\lambda,\lambda}^{N-1,M-1}\Pi_{\lambda,\lambda}^{N-1,M}\right]\\
\Tr\left[\Pi_{\lambda,\lambda}^{N-1,M-1}\Pi_{\lambda^+,\lambda^+}^{N,M}\right]
&=\Tr\left[\Pi_{\lambda,\lambda}^{N-1,M-1}\Pi_{\lambda^+,\lambda}^{N,M-1}\right],
\end{align*}
since the only way for $(\lambda,\lambda)$ as an irrep of $S_{N}\times S_{M}$ to restrict to $(\lambda,\lambda)$ as an irrep of $S_{N-1}\times S_{M-1}$ is to first restrict to $(\lambda,\lambda)$ as an irrep of $S_{N-1}\times S_{M}$, and similarly for $(\lambda^+,\lambda^+)$. Therefore, we only have two traces to compute. For the first one, we consider the maximally mixed state $\rho_{\lambda,\lambda}^{N-1,M-1}$ over the corresponding irrep. By inducing from $S_{M-1}$ to $S_M$ we find that its overlap over the irrep $(\lambda,\lambda)$ of $S_{N-1}\times S_M$ is given by
\begin{align*}
 \Tr\left[\rho_{\lambda,\lambda}^{N-1,M-1}\Pi_{\lambda,\lambda}^{N-1,M}\right]
=\frac{d_{\lambda,\lambda}^{N-1,M}}{Md_{\lambda,\lambda}^{N-1,M-1}}
=\frac{d_{\lambda}^{M}}{Md_{\lambda}^{M-1}}.
\end{align*}
Similarly, we obtain
\begin{align*}
 \Tr\left[\rho_{\lambda,\lambda}^{N-1,M-1}\Pi_{\lambda^+,\lambda}^{N,M-1}\right]
=\frac{d_{\lambda^+,\lambda}^{N,M-1}}{Nd_{\lambda,\lambda}^{N-1,M-1}}
=\frac{d_{\lambda^+}^{N}}{Nd_{\lambda}^{N-1}},
\end{align*}
and finally
\begin{align*}
 \Delta_x^{\lambda,\lambda}
&=\gamma_{|\lambda|}\frac{d_{\lambda}^{M}}{Md_{\lambda}^{M-1}}
+\gamma_{|\lambda|+1}\frac{d_{\lambda^+}^{N}}{Nd_{\lambda}^{N-1}}
-\gamma_{|\lambda|}\\
&=\frac{1}{\sqrt{N}}+O(\frac{1}{N}),
\end{align*}
where we have used the hook-length formula for dimensions of irreps and the fact that the number of boxes $|\lambda|$ below the first row is at most $\sqrt{N}$, otherwise $\gamma_{|\lambda|}=0$.

\paragraph{Case $(\lambda,\lambda^+)$.}
This irrep has multiplicity two, so we need to compute a $2\times 2$ matrix.
Let $(\lambda, \lambda^+, 1)$ denote the copy of $(\lambda,\lambda^+)$ irrep 
of $S_{N-1}\times S_M$ which is inside the $(\lambda^+,\lambda^+)$ irrep of $S_N \times S_M$.
Let $(\lambda, \lambda^+, 2)$ denote the copy of $(\lambda,\lambda^+)$ irrep 
of $S_{N-1}\times S_M$ which is inside the $(\lambda,\lambda^+)$ irrep of $S_N \times S_M$.
Let the first row and the first column of $\Delta_x^{\lambda,\lambda^+}$ be indexed
by $(\lambda, \lambda^+, 1)$ and the second row and the second column be indexed by
$(\lambda, \lambda^+, 2)$.

An irrep $(\lambda, \lambda^+)$ of $S_{N-1}\times S_M$ restricts to two valid irreps 
of $S_{N-1}\times S_{M-1}$: $(\lambda, \lambda)$ and $(\lambda, \lambda^+)$.
Those two irreps can be induced to the following bad irreps of $S_N \times S_M$:
$(\lambda, \lambda)$ and any irrep
$(\lambda', \lambda')$ which has one more square below the first row than
$\lambda$. ($\lambda'$ may be equal or different from $\lambda^+$.)

For brevity, we denote $\Delta_x^{\lambda,\lambda^+}$ simply by $\Delta$.
Since $(\lambda, \lambda^+, 1)$ is contained inside a bad irrep of $S_N \times S_M$, we have
\begin{align*} \Delta_{1, 1}
& = \frac{\gamma_{|\lambda|}}{d_{\lambda,\lambda^+}^{N-1,M}}\sum_y\Tr\left[\Pi_y^x\Pi_{\lambda,\lambda}^{N,M}\Pi_y^x \Pi_{\lambda,\lambda^+, 1}^{N-1,M}\right]
+\frac{\gamma_{|\lambda|+1}}{d_{\lambda,\lambda^+}^{N-1,M}}\sum_{\lambda'}
\sum_y\Tr\left[\Pi_y^x\Pi_{\lambda',\lambda'}^{N,M}\Pi_y^x \Pi_{\lambda,\lambda^+, 1}^{N-1,M}\right]
-\gamma_{|\lambda|+1} \\
& = \frac{M \gamma_{|\lambda|}}{d_{\lambda,\lambda^+}^{N-1,M} d_{\lambda,\lambda}^{N-1,M-1}}\Tr\left[\Pi_{\lambda, \lambda}^{N-1, M-1}\Pi_{\lambda,\lambda}^{N,M}\right] \Tr\left[\Pi_{\lambda, \lambda}^{N-1, M-1} \Pi_{\lambda,\lambda^+, 1}^{N-1,M}\right] \\
&+
\frac{M \gamma_{|\lambda|+1}}{d_{\lambda,\lambda^+}^{N-1,M} d_{\lambda,\lambda}^{N-1,M-1}}\left( \sum_{\lambda'}
\Tr\left[\Pi_{\lambda, \lambda}^{N-1, M-1} \Pi_{\lambda',\lambda'}^{N,M} \right] \right) 
\Tr\left[\Pi_{\lambda, \lambda}^{N-1, M-1} \Pi_{\lambda,\lambda^+, 1}^{N-1,M}\right] 
\\& +\frac{M \gamma_{|\lambda|+1}}{d_{\lambda,\lambda^+}^{N-1,M}d_{\lambda,\lambda^+}^{N-1,M-1}}
\Tr\left[\Pi_{\lambda, \lambda^+}^{N-1, M-1}\Pi_{\lambda^+,\lambda^+}^{N,M}\right] 
\Tr\left[\Pi_{\lambda, \lambda^+}^{N-1, M-1} \Pi_{\lambda,\lambda^+, 1}^{N-1,M}\right]
-\gamma_{|\lambda|+1}
\end{align*}
We start by evaluating the sum
\[ \sum_{\lambda'}
\Tr\left[\Pi_{\lambda, \lambda}^{N-1, M-1} \Pi_{\lambda',\lambda'}^{N,M} \right]. \]
We consider the maximally mixed state $\rho_{\lambda, \lambda}^{N-1, M-1}$ over the 
corresponding irrep of $S_{N-1}\times S_{M-1}$.
By inducing $\lambda$ from $S_{N-1}$ to $S_{N}$, we find that the dimension of the 
induced representation is $N d_{\lambda}^{N-1}$ and the induced representation decomposes
into irrep $\lambda$ of $S_N$, with dimension $d_{\lambda}^N$ and irreps 
$\lambda'$. Therefore, 
\begin{equation}
\label{eq:a1} \sum_{\lambda'}
\Tr\left[\Pi_{\lambda',\lambda'}^{N,M} \rho_{\lambda, \lambda}^{N-1, M-1} \right] =1 - \frac{d_{\lambda}^N}{N d_{\lambda}^{N-1}}
\leq 1-\frac{1}{N}
\end{equation}
where the inequality follows by comparing the hook-length formulas of $d_{\lambda}^N$
and $d_{\lambda}^{N-1}$. 
Similarly, we have
\begin{equation}
\label{eq:a2}
\Tr\left[\Pi_{\lambda,\lambda}^{N,M} \rho_{\lambda, \lambda}^{N-1, M-1} \right] = O 
\left( \frac{1}{N} \right) .
\end{equation}
We now evaluate a similar quantity for $\rho_{\lambda, \lambda^+}^{N-1, M-1}$. 
By inducing $\lambda^+$ from $S_{M-1}$ to $S_{M}$, we find that the dimension of the 
induced representation is $M d_{\lambda^+}^{M-1}$ and the induced representation decomposes
into irrep $\lambda^+$ of $S_M$, with dimension $d_{\lambda}^M$ and irreps 
$\lambda^{++}$ which have one more square below the first row than $\lambda^+$.
Therefore,
\begin{equation}
\label{eq:a3} 
\Tr\left[\Pi_{\lambda^+,\lambda^+}^{N,M} \rho_{\lambda, \lambda^+}^{N-1, M-1} \right] = 
\frac{d_{\lambda}^M}{M d_{\lambda}^{M-1}} = O  \left( \frac{1}{M} \right) .
\end{equation}
By using eqs.~(\ref{eq:a1}), (\ref{eq:a2}) and (\ref{eq:a3}), we have   
\begin{equation}
\label{eq:a4} \Delta_{1, 1}
 = \frac{M \gamma_{|\lambda|+1}}{d_{\lambda,\lambda^+}^{N-1,M}}
\Tr\left[\Pi_{\lambda, \lambda}^{N-1, M-1} \Pi_{\lambda,\lambda^+, 1}^{N-1,M}\right] 
 + O \left( \frac{1}{N} \right) -\gamma_{|\lambda|+1}.
\end{equation}
We have
\[
\Tr\left[\Pi_{\lambda, \lambda}^{N-1, M-1} \Pi_{\lambda,\lambda^+, 1}^{N-1,M}\right] =
\Tr\left[\Pi_{\lambda, \lambda}^{N-1, M-1} \Pi_{\lambda^+,\lambda^+}^{N,M}\right] \]
because the other irreps of $S_{N-1}\times S_M$ contained in the irrep 
$(\lambda^+,\lambda^+)$ of $S_N \times S_M$ have no overlap with the irrep
$(\lambda, \lambda)$ of $S_{N-1}\times S_{M-1}$.
Let $\rho_{\lambda, \lambda}^{N-1, M-1}$ be the completely mixed state
over $(\lambda, \lambda)$. Then,
\[ \Tr\left[\Pi_{\lambda, \lambda}^{N-1, M-1} \Pi_{\lambda^+,\lambda^+}^{N,M}\right] 
= d_{\lambda, \lambda}^{N-1, M-1} 
\Tr\left[ \Pi_{\lambda^+,\lambda^+}^{N,M} \rho_{\lambda, \lambda}^{N-1, M-1}\right] 
 = d_{\lambda, \lambda}^{N-1, M-1} \frac{d_{\lambda^+}^N}{N d_{\lambda}^{N-1}} .\]
Here, the second equality follows by inducing $\lambda$ from $S_{N-1}$ to $S_N$.
We have
\[ d_{\lambda, \lambda}^{N-1, M-1} \frac{d_{\lambda^+}^N}{N d_{\lambda}^{N-1}} =
d_{\lambda}^{N-1}  d_{\lambda}^{M-1} \frac{d_{\lambda^+}^N}{N d_{\lambda}^{N-1}} =
\frac{d_{\lambda}^{M-1} d_{\lambda^+}^N}{N} .\]
By matching up the terms in hook-length formulas, we have
\begin{equation}
\label{eq:hook} 
d_{\lambda}^{M-1} d_{\lambda^+}^N = \left(1+O\left( \frac{1}{N} \right) \right)
\frac{N}{M} d_{\lambda}^{N-1} d_{\lambda^+}^M .
\end{equation}
Therefore, 
\begin{equation}
\label{eq:trace1} 
\Tr\left[\Pi_{\lambda, \lambda}^{N-1, M-1} \Pi_{\lambda^+,\lambda^+}^{N,M}\right] = 
\left(1+O\left( \frac{1}{N} \right) \right) \frac{d_{\lambda, \lambda^+}^{N-1, M}}{M} 
\end{equation}
and
\[ \Delta_{1, 1}
= O \left( \frac{1}{N} \right)
\]  
Similarly to eq.~(\ref{eq:a4}), we have
\begin{equation}
\label{eq:delta22} \Delta_{2, 2}
 = \frac{M \gamma_{|\lambda|+1}}{d_{\lambda,\lambda^+}^{N-1,M}}
\Tr\left[\Pi_{\lambda, \lambda}^{N-1, M-1} \Pi_{\lambda,\lambda^+, 2}^{N-1,M}\right] 
 + O \left( \frac{1}{N} \right) .
\end{equation}
We have 
\[ \Tr\left[\Pi_{\lambda, \lambda}^{N-1, M-1} \Pi_{\lambda,\lambda^+, 2}^{N-1,M}\right] 
= \Tr\left[\Pi_{\lambda, \lambda}^{N-1, M-1} \Pi_{\lambda,\lambda^+}^{N,M}\right] 
= d_{\lambda, \lambda}^{N-1, M-1}
\Tr\left[\Pi_{\lambda,\lambda^+}^{N,M} \rho_{\lambda, \lambda}^{N-1, M-1}\right],\]
because the other irreps of $S_{N-1}\times S_M$ contained in the irrep 
$(\lambda,\lambda^+)$ of $S_N \times S_M$ have no overlap with the irrep
$(\lambda, \lambda)$ of $S_{N-1}\times S_{M-1}$.

By inducing $\lambda$ from $S_{M-1}$ to $S_M$, we get
\begin{equation}
\label{eq:ind1} 
\Tr\left[\Pi_{\lambda,\lambda^+}^{N,M} \rho_{\lambda, \lambda}^{N-1, M-1}\right]
+ \Tr\left[\Pi_{\lambda^+,\lambda^+}^{N,M} \rho_{\lambda, \lambda}^{N-1, M-1}\right] =
\frac{d^M_{\lambda^+}}{M d^{M-1}_{\lambda}} .
\end{equation}
By inducing $\lambda$ from $S_{N-1}$ to $S_N$, we get
\begin{equation}
\label{eq:ind2} 
\Tr\left[\Pi_{\lambda^+,\lambda^+}^{N,M} \rho_{\lambda, \lambda}^{N-1, M-1}\right] =
\frac{d^N_{\lambda^+}}{N d^{N-1}_{\lambda}} .
\end{equation}
By subtracting eq.~(\ref{eq:ind2}) from eq.~(\ref{eq:ind1}), we get
\[ \Tr\left[\Pi_{\lambda, \lambda}^{N-1, M-1} \Pi_{\lambda,\lambda^+}^{N,M}\right]
= \frac{d^M_{\lambda^+} d^{N-1}_{\lambda}}{M} - \frac{d^N_{\lambda^+} d^{M-1}_{\lambda}}{N} .\] 
Because of eq.~(\ref{eq:hook}), 
\begin{equation}
\label{eq:trace2} 
 \Tr\left[\Pi_{\lambda, \lambda}^{N-1, M-1} \Pi_{\lambda,\lambda^+}^{N,M}\right] =
O \left( \frac{d^M_{\lambda^+} d^{N-1}_{\lambda}}{MN} \right) .
\end{equation}
By substituting this into eq.~(\ref{eq:delta22}), we get $\Delta_{2, 2}=O(\frac{1}{N})$.

Last, we have to bound $\Delta_{1, 2}$ and $\Delta_{2, 1}$.
Similarly to eq.~(\ref{eq:a4}), we have
\[ \Delta_{i, j}
 = \frac{M \gamma_{|\lambda|+1}}{d_{\lambda,\lambda^+}^{N-1,M}}
\Tr\left[\Pi_{\lambda, \lambda}^{N-1, M-1} 
\Pi_{\lambda,\lambda^+, i \leftarrow j}^{N-1,M}\right] 
 + O \left( \frac{1}{N} \right) .\]
By using Lemma \ref{lem:trace-proj} and eqs.~(\ref{eq:trace1}) and (\ref{eq:trace2}),
we get
\[ \Delta_{i, j}=O\left( \frac{1}{\sqrt{N}} \right) .\]
We have shown that $\Delta_{i, j}=O(\frac{1}{\sqrt{N}})$ for all $i, j$.
Therefore, $\|\Delta\|=O(\frac{1}{\sqrt{N}})$.

\paragraph{Case $(\lambda,\lambda^{++})$.} This irrep of $S_{N-1}\times S_M$ has multiplicity three, so we need to bound the elements of a $3\times 3$ matrix. Let $(\lambda,\lambda^{++},1)$ denote the copy of the irrep that lies inside the irrep $(\lambda,\lambda^{++})$ of $(S_N\times S_M)$, $(\lambda,\lambda^{++},2)$ be the copy that lies inside the irrep $(\lambda^+,\lambda^{++})$ of $(S_N\times S_M)$, and $(\lambda,\lambda^{++},3)$ be the copy that lies inside the irrep $({\lambda'}^+,\lambda^{++})$ of $(S_N\times S_M)$, where $\lambda^+$ and ${ \lambda'}^+$ correspond to the two different ways a box can be added to $\lambda$. Since these two last copies have exactly the same structure, they can be treated similarly and we really need to compute only 4 different matrix elements (2 diagonal elements and 2 non-diagonal elements). Let us also note that none of these copies are contained in bad irreps of $S_N\times S_M$.

Let us now denote $\Delta_x^{\lambda,\lambda^{++}}$ by $\Delta$, and index the rows and columns of this matrix by the three copies of the irrep. Note that the irrep $(\lambda,\lambda^{++})$ of $S_{N-1}\times S_M$ restricts to three valid irreps of $S_{N-1}\times S_{M-1}$: $(\lambda,\lambda^{++})$, $(\lambda,\lambda^{+})$ and $(\lambda,{\lambda'}^{+})$. Also only these last two irreps induce two bad irreps of $S_N\times S_M$, $(\lambda^{+},\lambda^{+})$ and $({\lambda'}^{+},{\lambda'}^{+})$, respectively. Therefore, we have for the first diagonal element
\begin{align*}
 \Delta_{1,1}&=\frac{\gamma_{|\lambda|+1}}{d_{\lambda,\lambda^{++}}^{N-1,M}}
\sum_y\left\{\Tr\left[\Pi_y^x\Pi_{\lambda^+,\lambda^+}^{N,M}\Pi_y^x \Pi_{\lambda,\lambda^{++}, 1}^{N-1,M}\right]
+\Tr\left[\Pi_y^x\Pi_{{\lambda'}^+,{\lambda'}^+}^{N,M}\Pi_y^x \Pi_{\lambda,\lambda^{++}, 1}^{N-1,M}\right]
\right\}\\
&=\frac{2M\gamma_{|\lambda|+1}d_{\lambda,\lambda^{+}}^{N-1,M-1}}{d_{\lambda,\lambda^{++}}^{N-1,M}}
\Tr\left[\Pi_{\lambda^+,\lambda^+}^{N,M}\rho_{\lambda,\lambda^+}^{N-1,M-1}\right]\cdot
\Tr\left[\Pi_{\lambda,\lambda^{++},1}^{N-1,M}\rho_{\lambda,\lambda^+}^{N-1,M-1}\right].
\end{align*}
Studying as before the overlap of $\rho_{\lambda,\lambda^+}^{N-1,M-1}$ over the irreps of $S_N\times S_M$, we obtain for the two traces
\begin{align}
 \Tr\left[\Pi_{\lambda^+,\lambda^+}^{N,M}\rho_{\lambda,\lambda^+}^{N-1,M-1}\right]
&\leq\frac{d_{\lambda^+}^M}{Md_{\lambda^+}^{M-1}},\\
\Tr\left[\Pi_{\lambda,\lambda^{++},1}^{N-1,M}\rho_{\lambda,\lambda^+}^{N-1,M-1}\right]\label{eq:trace3}
&=\Tr\left[\Pi_{\lambda,\lambda^{++}}^{N,M}\rho_{\lambda,\lambda^+}^{N-1,M-1}\right]
\leq\frac{d_{\lambda^+}^N}{Nd_{\lambda^+}^{N-1}},
\end{align}
and in turn
\begin{align*}
 \Delta_{1,1}&\leq \frac{2M\gamma_{|\lambda|+1}d_{\lambda^{+}}^{N}d_{\lambda^{+}}^{M}}{Nd_{\lambda^{+}}^{N-1}d_{\lambda^{++}}^{M}}=O\left(\frac{1}{MN}\right).
\end{align*}

For the second diagonal element, we find similarly
\begin{align*}
 \Delta_{2,2}&=\frac{\gamma_{|\lambda|+1}}{d_{\lambda,\lambda^{++}}^{N-1,M}}
\sum_y\left\{\Tr\left[\Pi_y^x\Pi_{\lambda^+,\lambda^+}^{N,M}\Pi_y^x \Pi_{\lambda,\lambda^{++}, 2}^{N-1,M}\right]
+\Tr\left[\Pi_y^x\Pi_{{\lambda'}^+,{\lambda'}^+}^{N,M}\Pi_y^x \Pi_{\lambda,\lambda^{++}, 2}^{N-1,M}\right]
\right\}\\
&=\frac{M\gamma_{|\lambda|+1}d_{\lambda,\lambda^{+}}^{N-1,M-1}}{d_{\lambda,\lambda^{++}}^{N-1,M}}
\Tr\left[\Pi_{\lambda^+,\lambda^+}^{N,M}\rho_{\lambda,\lambda^+}^{N-1,M-1}\right]\cdot
\left\{
\Tr\left[\Pi_{\lambda,\lambda^{++},2}^{N-1,M}\rho_{\lambda,\lambda^+}^{N-1,M-1}\right]
+\Tr\left[\Pi_{\lambda,\lambda^{++},2}^{N-1,M}\rho_{\lambda,{\lambda'}^+}^{N-1,M-1}\right]
\right\}\\
&\leq\frac{2\gamma_{|\lambda|+1}d_{\lambda^+}^M}{d_{\lambda^{++}}^M}=O\left(\frac{1}{M}\right),
\end{align*}
where we have used eq.~(\ref{eq:trace3}) and the fact that the other overlaps are at most 1.

Using exactly the same arguments, we find for the non-diagonal elements
\begin{align*}
 |\Delta_{1,2}|&\leq\frac{2\gamma_{|\lambda|+1}d_{\lambda^+}^M}{d_{\lambda^{++}}^M}\sqrt{\frac{d_{\lambda^+}^N}{Nd_{\lambda^{+}}^{N-1}}}
=O\left(\frac{1}{M\sqrt{N}}\right),\\
|\Delta_{2,3}|&\leq\frac{2\gamma_{|\lambda|+1}d_{\lambda^+}^M}{d_{\lambda^{++}}^M}=O\left(\frac{1}{M}\right).
\end{align*}
Since the irreps $(\lambda,\lambda^{++},2)$ and $(\lambda,\lambda^{++},3)$ are of the same type, we also have $\Delta_{3,3}=O(1/M)$ and $\Delta_{1,3}=O(1/(M\sqrt{N}))$. Therefore, all elements of $\Delta$ are at most $O(1/M)$, so that $\norm{\Delta}=O(1/M)$.

Finally, since the matrices corresponding to all irreps have norm at most $O(1/\sqrt{N})$, we have from Theorem~\ref{thm:adv-representation} $\norm{\aGamma_{x}-\aGamma}=O(1/\sqrt{N})$, and in turn
\begin{align*}
 Q_{\varepsilon}(\IE) =
\Omega\left(
(\sqrt{1-\varepsilon}-\sqrt{N/M})^2
\sqrt{N}
\right).
\end{align*}
\end{proof}

\section{Conclusions and outlook}

The hybrid adversary method we introduced in this paper has a strength
that---in a precise, mathematical sense---lies between that of the
known additive and of the multiplicative adversary methods.
In our opinion, our new method combines the advantages of the additive and multiplicative bounds: (i) it is not more complicated to use than the additive method and (ii) it can lead to lower bounds even for cases of algorithms with small success probability, like the multiplicative method. Furthermore, it can prove lower bounds for quantum state generation problems. We have also shown how to leverage the symmetries of a problem to simplify the computation of the adversary bound, using group representation theory. Altogether, this allowed us to prove a new and tight lower bound for the \IE{} problem.

There are several directions for future research that might present
themselves as this point. By clarifying the relation between the different adversary methods, we are one step closer to a proof that the additive bound satisfies a strong direct product theorem like the multiplicative bound. Indeed, our results imply that it is sufficient to prove that whenever the multiplicative adversary method can prove a lower bound in the limit $\lambda\to 1$, there exists some fixed $\lambda>1$ which leads to the same bound. The most important consequence would be for the quantum query complexity of functions, which would itself satisfy a strong direct product theorem for any function, since the additive adversary method is known to be tight in that case~\cite{Rei09,LMRS10}.

As far as \GI{} is
concerned, one natural question to consider is if the methods can
be extended beyond the model considered here. In particular to allow
more powerful oracles that do not have such strong restrictions for
the access to the graphs. One interesting open question is
if a limitation can be shown for any quantum walk based approach to
\GI. The results shown in this paper are a first step in
this direction but significantly new ideas would be necessary.
Finally, there is an open question that touches on the issue of ``junk'': in the paper we showed
lower bounds for the coherent quantum state generation problem. We
conjecture that the extension to the case where some undesired state is generated along with the target state should also be possible, however, we have not been able to establish this result so far.

\section*{Acknowledgments}

The authors thank Ben Reichardt and Robert \v{S}palek for useful comments.
L.M., M.R and J.R. acknowledge support by ARO/NSA under grant W911NF-09-1-0569. 
A.A. and L.M. acknowledge the support of the European Commission IST project ``Quantum Computer Science'' QCS 25596. 
A.A. was also supported by ESF project 1DP/1.1.1.2.0/09/APIA/VIAA/044 and
FP7 Marie Curie International Reintegration Grant PIRG02-GA-2007-224886. 
L.M. also acknowledges the financial support of Agence Nationale de la Recherche under the projects ANR-09-JCJC-0067-01 (CRYQ) and ANR-08-EMER-012 (QRAC).

\bibliography{adversary}

\end{document}